\newcommand{\ccF}{{\mathscr F}}\newcommand{\cF}{{\mathcal F}}
\newcommand{\ccG}{{\mathscr G}}\newcommand{\cG}{{\mathcal G}}
\newcommand{\cM}{{\mathcal M}}
\newcommand{\ccP}{{\mathscr P}}
\newcommand{\restr}{\mathbf{\kern0.3ex%
 \vert\kern-0.3ex}\backprime\kern0.3ex}
\renewcommand{\cF}{\ccF}
\renewcommand{\cG}{\ccG}
\newtheorem{theorem}{Theorem}[section]
\newtheorem{lemma}[theorem]{Lemma}              %
\newtheorem{proposition}[theorem]{Proposition}  %
\theoremstyle{definition}
\newtheorem{example}[theorem]{Example} %
\newtheorem{definition}[theorem]{Definition} %
\newtheorem{remark}[theorem]{Remark}%
\newtheorem{assumption}[theorem]{Assumption}%
\DeclareMathOperator{\esssup}{ess  \sup}
\renewcommand{\P}{P}
\DeclareMathOperator*{\esssupZ}{ess\,sup}
\DeclareMathOperator*{\esssupN}{ess\,sup}
\numberwithin{equation}{section}
\title{Robust asymptotic insurance-finance arbitrage}
\author[Oberpriller]{Katharina Oberpriller}
		\address{University of Freiburg, Ernst-Zermelo-Str. 1, 79104 Freiburg, Germany.}
		\email{ katharina.oberpriller@stochastik.uni-freiburg.de}
\author[Ritter]{Moritz Ritter}
		\address{University of Freiburg, Ernst-Zermelo-Str. 1, 79104 Freiburg, Germany.}
		\email{ moritz.ritter@stochastik.uni-freiburg.de}
\author[Schmidt]{Thorsten Schmidt}
		\address{University of Freiburg, Ernst-Zermelo-Str. 1, 79104 Freiburg, Germany.}
		\email{ thorsten.schmidt@stochastik.uni-freiburg.de}
 \date{\today. }
    \thanks{
    The authors gratefully acknowledge  the support of the German Research Foundation (DFG) through grant SCHM 2160/15-1 and of the Freiburg Center for Data Analysis and Modeling (FDM)}
\begin{document}
\maketitle

\begin{abstract}
In most cases, insurance contracts are linked to the financial markets, such as through interest rates or equity-linked insurance products. To motivate an evaluation rule in these hybrid markets, \cite{artzner2022insurance} introduced the notion of insurance-finance arbitrage. 
In this paper we extend their setting by incorporating model uncertainty. To this end, we allow statistical uncertainty in the underlying dynamics to be represented by a set of priors $\ccP$. 
Within this framework we introduce the notion of \emph{robust asymptotic insurance-finance arbitrage} and characterize the absence of such strategies in terms of the concept of ${Q}\ccP$-evaluations. This is a nonlinear two-step evaluation which guarantees \emph{no robust asymptotic insurance-finance arbitrage}. 
Moreover, the  ${Q}\ccP$-evaluation dominates all two-step evaluations as long as we agree on the  set of priors $\ccP$ which shows that those two-step evaluations do not allow for robust asymptotic insurance-finance arbitrages.

Furthermore, we introduce a doubly stochastic model under uncertainty for surrender and survival. In this setting, we describe conditional dependence by means of copulas and illustrate how the ${Q}\ccP$-evaluation can be used for the pricing of hybrid insurance products.

\bigskip

\noindent\textbf{Keywords:} insurance-finance arbitrage under uncertainty, robust $QP$-rule, enlargement of filtration, absence of robust insurance-finance arbitrage.
\end{abstract}

\section{Introduction}
This paper develops and characterizes the absence of insurance-finance arbitrage under model uncertainty. Our starting point is the observation that most insurance contracts are strongly linked to 
financial markets, such as through interest rates or  via  direct links of the contractual benefits to stocks or indices.

However, due to the varied characteristics of  insurance contracts -- which are static, personalized products -- and  financial markets -- where products are standardized and  traded frequently -- the approaches to model these markets and the corresponding valuation techniques are fundamentally distinct. In the literature, several different approaches have been proposed how to deal with both insurance and financial markets in a consistent manner (see e.g., \cite{dhaene2017fair}, \cite{Dhaenekukushetal-2013}, \cite{pelsser_stadje_2014} and  \cite{semyon_eugene_wuerthich_2008} and references therein).

The notion of insurance-finance arbitrage (IFA) was introduced  by \cite{artzner2022insurance} with the aim of identifying a suitable notion of arbitrage in these  markets. Characteristic of such markets is that, on the one hand, the insurance company may issue contracts to a large number of clients, yet, on the other hand, it is able to simultaneously hedge its positions by trading on the financial market. In order to model the two information flows that the insurance company has access to, we work with two filtrations. The smaller filtration represents the publicly available information on the financial market, denoted by $\mathbb{F}=(\ccF_t)_{t \leq T}$, while the larger filtration $\mathbb{G}=(\ccG_t)_{t \leq T}$ additionally contains the insurer's information.
Given a pricing measure $Q$ on the financial market $(\Omega, \ccF)$ and a statistical measure $P$ on $(\Omega, \ccG)$ a characterization of the absence of IFA in terms of the ${Q} P$-rule is derived in \cite{artzner2022insurance}.

Even if a large set of homogeneous data is available, statistical uncertainties in predicting the future evolution of insurance losses in the considered portfolio remain a  problem that needs to be addressed. Hence, in this work we aim to take this uncertainty into account.
To do so, we fix the nullsets $\mathcal{N}$ on the financial market $(\Omega, \ccF)$ which determines the set of equivalent martingale measures $Q$. Second, we consider a class of probabilistic models $\ccP$ on $(\Omega,\mathscr G)$  with the $\ccF$-nullsets $\mathcal{N}$ and study the associated ${Q}\ccP$-rule. %
Most notably, this framework allows us to model uncertainty on the insurance market under the assumption that we do not face any model risk on the financial market. 
Working with a class of potential models $\ccP$ is in line with the growing literature on model risk and uncertainty (see e.g., \cite{Martini}, \cite{peng2019nonlinear}, \cite{denis_hu_peng_2010}, \cite{bbkn_2017}, \cite{soner_touzi_zhang_MRP} and \cite{bouchard_nutz_2015}). 

To the best of our knowledge, this is the first study of insurance-finance arbitrage under model uncertainty. More specifically, we prove a characterization of robust insurance-finance arbitrage (RIFA) by using the ${Q} \ccP$-evaluation in Theorem \ref{theorem:characterizationRobustInsuranceFinance}. 

Furthermore, we show that this result provides a theoretical foundation for a class of two-step evaluations introduced in \cite{pelsser_stadje_2014}, which is applied for the pricing of hybrid products depending on the financial market, as well as on other random sources, e.g., insurance products.
In particular, we prove that every two-step evaluation, which is the combination of a risk-free measure $Q$ and a coherent $\ccF$-conditional risk measure, which is continuous from below, equals a $Q\ccP$-evaluation for a suitable subset $\ccP$ on $(\Omega, \ccG).$ Thus, every two-step evaluation of this kind leads to a robust arbitrage-free price in the asymptotic insurance-finance setting.

We conclude the paper by suggesting possible applications in Section \ref{sec:GeneralizedSetting}.
First, we consider the case of two conditional independent random times such that the conditional distribution functions face a certain degree of  uncertainty. Here, the random times represent the surrender time and the time of death of an insurance seeker. In this setting, we introduce a financial market via a Cox-Ross-Rubinstein model and consider finance-linked insurance benefits with surrender options. We also numerically provide the robust arbitrage-free price of these products. Most notably, we find that the uncertainty justifies  the robust arbitrage-free price being higher than the supremum of all arbitrage-free prices under each possible model.
We then generalize the framework by allowing some dependence structure between the random times described by a copula. By doing this, we show that the well-known Cox model under uncertainty is contained in the outlined setting.

The paper is structured as follows. In Section \ref{sec:RobustAsymptoticInsuranceFinanceArbitrage} we introduce the definition of a robust insurance-finance arbitrage and provide a characterization in our main result. After that, Section \ref{sec:TwoStepValuation} studies the relation of the $Q \ccP$-rule to two-step evaluations. Then, in Section \ref{sec:GeneralizedSetting} we study an insurance-finance market with two conditionally independent random times and numerically provide the robust arbitrage-free prices for certain hybrid products. In addition, we consider a copula framework for the two random times under uncertainty.

\section{Robust asymptotic insurance-finance arbitrage} \label{sec:RobustAsymptoticInsuranceFinanceArbitrage}
Let $(\Omega, \cG)$ be a measurable space and denote by $\mathfrak{P}(\Omega,\cG)$ the set of all probability measures on this space. We consider a discrete time model with times $t=0,...,T$ and introduce two different kinds of information flows described by the filtrations $\mathbb{F}$ and $\mathbb{G}$ on $(\Omega, \cG)$. The filtration $\mathbb{F}=(\ccF_t)_{t\leq T}$ represents \emph{publicly available information} and contains all information available on the financial market. The filtration $\mathbb{G}=(\ccG_t)_{t\leq T}$ contains additional private information of the  considered insurance company, which includes, for example, several datasets on its clients. In particular, we assume $\mathbb{F} \subseteq \mathbb{G}$. Moreover, let $\ccF_0=\ccG_0=\lbrace \emptyset, \Omega \rbrace$ and $\mathscr F$ be a $\sigma$-field  such that $\mathscr F_T\subseteq \mathscr F\subseteq \mathscr G$. 

Given $\ccP \subseteq \mathfrak{P}(\Omega, \cG)$, a set $A\subseteq \Omega$ is called \emph{$\ccP$-polar} if $A\subseteq N$ for some $N\in \cG$ satisfies $P(N)=0$ for all $\P\in\ccP$. Moreover, a property holds \emph{$\ccP$-quasi surely} ($\ccP$-q.s.) if it holds outside a $\ccP$-polar set. \\
For  a fixed $\sigma$-ideal of $\ccF$-nullsets $\mathcal N$, i.e., there exists a measure $P_0$ on $(\Omega,\ccF)$ such that $\mathcal{N}$ are the nullsets of $P_0$, we define the following sets of priors
\begin{align} \label{eq:PriorsEquivalenToP_1}
    \mathfrak{P}_{\mathcal{N}}(\Omega, \ccF)\coloneqq \big\lbrace P \in \mathfrak{P}(\Omega,\ccF) \, \vert\,\mathcal N \text{ are the nullsets of }P  \big\rbrace
\end{align}
and
\begin{align} \label{eq:PriorsEquivalenToP_2}
    \mathfrak{P}_{\mathcal{N}}(\Omega, \ccG)\coloneqq \big\lbrace P \in \mathfrak{P}(\Omega,\ccG) \, \vert\,\mathcal N \text{ are the nullsets of }P\vert_{\mathscr F} \big\rbrace.
\end{align}
Hereafter, we fix a probability measure $P_0 \in \mathfrak{P}(\Omega, \ccF)$, which determines the nullset $\mathcal N$ on $(\Omega,\mathscr F)$, and a subset of priors $\ccP \subseteq  \mathfrak{P}_{\mathcal{N}}(\Omega, \ccG)$, which specifies the uncertainty about the model.
\\

In the following, we introduce the concept of insurance-finance arbitrage. For simplicity, we consider only a single insurer. We also assume that the insurer is able to contract with an arbitrarily high number of clients so as to reduce its risks. To establish this, we consider a finite number of insurance seekers and treat the limits of portfolio allocations - a technique inspired by large financial markets (e.g.,~\cite{Klein2000,kabanov1995large,KleinSchmidtTeichmann2015}). Such strategies may lead to \emph{insurance arbitrages} and it is partly our aim to characterize when and how such arbitrages can be achieved and under which conditions they can be avoided.

At the same time, the insurance company also trades on the financial market, potentially leading to a \emph{financial arbitrage}. Thus, the combination of these concepts results in an \emph{insurance-finance arbitrage.}

\subsection{The insurance contracts}
Insurance contracts offer a variety of benefits at future times in exchange for a single premium or a premium stream. We work with discounted quantities and, without loss of generality, we consider  a single premium paid at time 0 and an aggregated benefit received at future time $T$. More precisely, we denote by $p \in \mathbb{R}$  the premium to be paid at time $0$ and
 a $\ccG_T$-measurable benefit $X^i$   to be received by the $i$th client at time $T$. This allows to cover a wide range of contracts, particularly contracts depending on financial markets, such as  variable annuities.

 We assume that all insurance seekers under consideration can be treated as  homogeneous (under each model $P\in \ccP$)  and each insurance seeker  pays the same premium $p$ in order to receive his or her personal  benefit $X^i$.
This idea is formalized in the following assumption, which is a generalization of the framework of actuarial mathematics to stochastic assets, as discussed in \cite{artzner2022insurance}.
\begin{assumption} \label{assum:ConditionalExpectations}
For all $P \in \ccP$, the following holds:
\begin{enumerate}
    \item $X^1,X^2,... \in L_+^2(\Omega, \cG, P)$ are $\cF$-conditionally independent.
    \item $E_P[X^i\vert \cF]= E_P[X^1\vert \cF]$ for all $i \in \mathbb{N}$.
    \item $\textnormal{Var}_P[X^i \vert \cF]=\textnormal{Var}_P[X^1 \vert \cF]$ for all $i \in \mathbb{N}$.
\end{enumerate}
\end{assumption}
The insurance portfolio is obtained as a limit of \emph{allocations} of contracts with a finite number of clients. 
An allocation at time $0$, $\psi=(\psi^i)_{i \in \mathbb{N}},$ is $c_{00}$-valued, deterministic and non-negative, where $c_{00}$ denotes the space of sequences with a finite number of non-zero elements. For $i \in \mathbb{N}$, $\psi^i \in \mathbb{R}_+$ denotes the size of the contract with the $i$th insurance client. The accumulated benefits and premiums associated with the allocation $\psi$ are given by $\sum_{i \in \mathbb{N}} \psi^i X^i$ and $\sum_{i \in \mathbb{N}} \psi^i p$, while the associated profits and losses are denoted by
$$
V_T(\psi):= \sum_{i \in \mathbb{N}} \psi^i (p-X^i).
$$
An \emph{insurance portfolio strategy} $\Psi:=(\psi^n)_{n \in \mathbb{N}} \in c_{00}^{\mathbb{N}}$ is modeled as a sequence of allocations $\psi^n=(\psi^{n,i})_{i \in \mathbb{N}}$. Moreover, the profit and loss of an insurance portfolio strategy $\Psi$ is given (if it exists) by
$$
V_T(\Psi):=\lim_{n \to \infty} V_T(\psi^n)=\lim_{n \to \infty}  \sum_{i \in \mathbb{N}} \psi^{n,i} (p-X^i)
$$
We introduce the following \emph{admissibility conditions} for an insurance portfolio strategy $\Psi=(\psi^n)_{n \in \mathbb{N}}$.
\begin{assumption} \label{assum:UniformBoundedness}
\begin{enumerate}
    \item \emph{Uniform boundedness:} There exists $C>0,$ such that
    \begin{equation*} 
     \| \psi^n\|:= \sum_{i \geq 1} \psi^{n,i} \leq C \text{ for all } n \geq 1.
    \end{equation*}
    \item \emph{Convergence of the total mass:} There exists $ \gamma \geq 0$ such that
    \begin{equation*} 
         \| \psi^n\| \to \gamma.
    \end{equation*}
    \item \emph{Convergence of the total wealth:} There exists a $\mathbb{R}$-valued random variable $V_T(\Psi),$ such that 
    \begin{equation*}
       V_T(\Psi)= \lim_{n \to \infty} V_T(\psi^n) \quad \ccP\text{-q.s.}
    \end{equation*}
\end{enumerate}
\end{assumption}
An insurance portfolio strategy $\Psi=(\psi^n)_{n \in \mathbb{N}}$ which satisfies Assumption \ref{assum:UniformBoundedness} is called $\ccP$-\emph{admissible}.

\subsection{The financial market} \label{section:FinancialMarket}
We introduce a financial market model in discrete time consisting of $d+1$ assets $S=(S_t^0,...,S_t^d)_{t=0,...,T}$ on $(\Omega, \ccF).$ For $i=1,...,d,$ the discounted price of the $i$th asset at time $t$ is given by the $\cF_t$-measurable random variable $S_t^i$ and the bank account is given by $S^0\equiv 1.$ 
We assume that the insurance company trades with $\mathbb{F}$-trading strategies on the financial market, where a $\mathbb{F}$-trading strategy is a $d$-dimensional $\mathbb{F}$-predictable process $\xi=(\xi_t)_{t=1,...,T}$ with $\xi_t=(\xi_t^1,...,\xi_t^d).$
Note that each strategy $\xi$ can be extended by $\xi^0=(\xi_t^0)_{t=1,...,T}$ to a self-financing trading strategy $\bar{\xi}=(\xi_t^0,...,\xi_t^d)_{t=1,...,T}$, cf.\@ \cite[Remark 5.8]{follmer2016stochastic}. The associated \emph{gain process} at time $t=1,...,T$ is then given by the discrete stochastic integral
$$
(\xi \cdot S)_t\coloneqq\sum_{s=1}^t \xi_s \left(S_s -S_{s-1}\right)= \sum_{s=1}^t \sum_{i=1}^d\xi_s^i \left(S_s^i -S_{s-1}^i\right).
$$
As is typical, the absence of arbitrage in this market with respect to any restricted measure $P\vert_{\mathscr F}$ for $P\in\ccP\subseteq \mathfrak P_{\mathcal N}(\Omega,\ccG)$ can be characterized by the existence of an equivalent martingale measure, cf.\@ \cite[Theorem 5.16]{follmer2016stochastic}. The set of all equivalent martingale measures is denoted by $\cM_e(\mathbb{F})$ and defined by
\begin{equation} \label{eq:NoArbitrageMeasuresFinancialMarketF}
\cM_e(\mathbb{F}):= \lbrace Q \in \mathfrak{P}_{\mathcal{N}}(\Omega, \cF) \vert \   S \text{ is a } (\mathbb Q,\mathbb F)\text{-martingale}\rbrace.
\end{equation}

\begin{remark}\label{rem:uncertainty}
Note that we consider uncertainty in a narrow sense  on the financial market $(\Omega, \ccF)$ by taking into account a set of priors $\ccP \subseteq \mathfrak{P}_{\mathcal N}(\Omega, \ccG)$. More specifically, according to the definition of $\mathfrak{P}_{\mathcal{N}}(\Omega, \ccG)$ in \eqref{eq:PriorsEquivalenToP_2}, all nullsets on $(\Omega, \ccF)$ are already determined by $\mathcal{N}.$ The uncertainty of our model refers only to the additional insurance part defined on $(\Omega,\mathscr G)$. 
\end{remark}

\subsection{Robust insurance-finance arbitrage}
Finally, we introduce the \emph{insurance-finance market} $(S,\mathcal X,p)$ on  $(\Omega, \ccG)$ consisting of the benefits $\mathcal X=(X^i)_{i \in \mathbb{N}}$, the premium $p$, and the discounted prices of the assets $S=(S_t)_{t=0,...,T}.$

In order to define a robust arbitrage in our setting, we use the concept of a robust arbitrage strategy introduced in \cite{bouchard_nutz_2015}. Note, however, that we do not need to require convexity of $\ccP$.

\begin{definition} \label{def:RobustInsuranceFinance}
A $\ccP$\emph{-robust asymptotic insurance-finance arbitrage} $\textnormal{RIFA}(\ccP)$ on the insurance-finance market $(S,\mathcal X,p)$ is a pair $(\xi,\Psi)$ consisting of an $\mathbb{F}$-predictable trading strategy $\xi$ and a $\ccP$-admissible insurance portfolio strategy $\Psi$ such that
	\begin{align} \label{eq:RobustInsuranceFinance}
		(\xi\cdot S)_T+V_{T}(\Psi) \geq 0 \  \mathcal \ccP\text{-q.s.\@} \quad\text{and}\quad E_{\P}\left[	(\xi\cdot S)_T+V_{T}(\Psi) \right]>0\ \text{for some } \P\in\ccP.
	\end{align}
	If there exists no such pair $(\xi,\Psi)$ satisfying \eqref{eq:RobustInsuranceFinance}, then there is \emph{no $\ccP$-robust asymptotic insurance-finance arbitrage}, which we denote by $\textnormal{NRIFA}(\ccP).$ 
\end{definition}

\begin{remark} \label{remark:NoRobustArbitrageSingleMeasureArbitrage}
If for all $P \in \ccP$ it holds $\textnormal{NRIFA}(\{P\})$ then  $\textnormal{NRIFA}(\ccP)$ is also satisfied. However, the converse statement is incorrect. 
\end{remark}

In the case of no model uncertainty, i.e.,\@ $\ccP=\lbrace P \rbrace$ for a measure $P \in \mathfrak{P}_{\mathcal{N}}(\Omega, \ccG)$, it is shown in Corollary 5.2 in \cite{artzner2022insurance} that the following relation to the ${Q} P$-rule holds. If there exists ${Q} \in \cM_e(\mathbb{F})$ such that 
\begin{equation} \label{eq:ConditionPremiumNoArbitrage}
    p \leq E_{{Q} \odot P}[X^1]\coloneqq E_{{Q}}\left[ {E}_P[X^1 \vert \ccF] \right],
\end{equation}
then there exists no asymptotic insurance-finance arbitrage. Thus, according to Remark \ref{remark:NoRobustArbitrageSingleMeasureArbitrage}, an insurance-finance market $(S,\mathcal X,p)$ fulfills $\textnormal{NRIFA}(\ccP)$ if (\ref{eq:ConditionPremiumNoArbitrage}) holds for each $P\in \ccP$.  However, it should be noted that these conditions are not strictly necessary as we will demonstrate in the following.

\subsection{Uniform essential supremum}\label{sec:uniformesssup}
In order to characterize the absence of a $\ccP$-robust asymptotic insurance-finance arbitrage, we aim to identify assumptions that allow us to take into account a robust version of the conditional expectations $E_P[X \vert \ccF]$ for $P \in \ccP$ in \eqref{eq:ConditionPremiumNoArbitrage}. Given a set of priors  $\ccP \subseteq \mathfrak{P}(\Omega, \ccG)$, it is in general not possible to consider ``$\sup_{P \in \ccP} E_P[X \vert \ccF]$'', as the conditional expectation is only defined $P$-a.s. and the priors in $\ccP$ may have different nullsets. Furthermore, the supremum no longer needs to be measurable.

 The natural approach to solving this measurability issue is to work with the essential supremum instead of the supremum.
However, for a general set of priors $\ccP \subseteq \mathfrak{P}(\Omega, \ccG)$ and a set of random variables $\Phi$ on $(\Omega, \mathscr G)$ there may be no uniform essential supremum, i.e.,\@ a random variable $Y$ such that 
\begin{equation}\label{eq:uniformesssup}
   Y= \esssup^{P}\Phi\ \quad  P\text{-a.s.\@ for all }P\in \mathscr P.  
\end{equation}
We refer to \cite{follmer2016stochastic} for a precise definition of the essential supremum and to \cite{bartlConditionalExpectation_2020}, in whose work a general construction of such a nonlinear conditional expectation is studied.

The definition of the set $\mathfrak{P}_{\mathcal{N}}(\Omega, \mathscr G)$  in \eqref{eq:PriorsEquivalenToP_2} allows us to consider for $\ccP\subseteq \mathfrak{P}_{\mathcal{N}}(\Omega, \mathscr G)$ the set of random variables $\Phi=(\varphi^P)_{P\in \ccP}$ such that
\begin{equation}
    \varphi^P=E_P[X\vert \mathscr F]\quad \P\text{-a.s.\@ for all } P\in  \ccP.
\end{equation}
Indeed, we observe that by $\ccF$-measurability together with the fact that $P\vert_{\mathscr F}\sim P'\vert_{\mathscr F}$ for all $P,P'\in \ccP,$ the conditional expectation $E_{P}[X\vert \mathscr F]$ is not only $P$-a.s. uniquely determined but also $P'$-a.s. for all $P'\in \ccP$. Then, there exists a uniform essential supremum which fulfills (\ref{eq:uniformesssup}),  as the following result demonstrates.

\begin{lemma}\label{lem2.6}
	Let $\mathcal{N}$ be the nullsets generated by the probability measure $P_0 \in \mathfrak{P}(\Omega, \ccF)$, $\ccP \subseteq \mathfrak{P}_{\mathcal N}(\Omega, \ccG)$ and $\Phi$ a set of $\ccF$-measurable random variables. Then 
	\begin{align} \label{eq:EssupNullsetMathcalN}
		\esssup^{ P_0}\Phi=\esssup^{ P}\Phi \quad \P\text{-a.s.\@ for all } P\in  \ccP. 
	\end{align}
\end{lemma}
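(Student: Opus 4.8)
The plan is to exploit the fact that every $P \in \ccP$ restricts to a measure on $\mathscr F$ with exactly the nullsets $\mathcal N$, so that $P\vert_{\mathscr F} \sim P_0$, and then to argue that the essential supremum of a family of $\mathscr F$-measurable random variables depends only on the null-ideal of the reference measure on $\mathscr F$. First I would recall the definition of $\esssup^{P}\Phi$: it is the $P$-a.s.\ smallest $\mathscr F$-measurable (more precisely, $\mathscr G$-measurable, but here the members of $\Phi$ are $\mathscr F$-measurable) random variable $Y$ such that $Y \ge \varphi$ $P$-a.s.\ for every $\varphi \in \Phi$, and it is characterized up to $P$-nullsets. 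Fix $P \in \ccP$ and write $Y_0 := \esssup^{P_0}\Phi$ and $Y_P := \esssup^{P}\Phi$. The goal is to show $Y_0 = Y_P$ $P$-a.s.

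The key observation is that a set $A \in \mathscr F$ satisfies $P(A)=0$ if and only if $A \in \mathcal N$ if and only if $P_0(A)=0$; this is immediate from $\ccP \subseteq \mathfrak P_{\mathcal N}(\Omega,\ccG)$ and the definition \eqref{eq:PriorsEquivalenToP_2}. Since all the random variables involved ($\Phi$, $Y_0$, $Y_P$, and any countable suprema used in the standard construction of the essential supremum) can be taken $\mathscr F$-measurable, every "$P$-a.s." statement about them is equivalent to the same statement "$P_0$-a.s." Concretely: $Y_0 \ge \varphi$ holds $P_0$-a.s.\ for each $\varphi \in \Phi$, hence $\{Y_0 < \varphi\} \in \mathcal N$, hence $Y_0 \ge \varphi$ holds $P$-a.s.; so $Y_0$ is a $P$-a.s.\ upper bound for $\Phi$, and by minimality of $Y_P$ we get $Y_P \le Y_0$ $P$-a.s. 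Conversely, $Y_P$ is an $\mathscr F$-measurable $P$-a.s.\ upper bound for $\Phi$, which by the same null-ideal identity is a $P_0$-a.s.\ upper bound, so minimality of $Y_0$ gives $Y_0 \le Y_P$ $P_0$-a.s., and again this upgrades to $Y_0 \le Y_P$ $P$-a.s. Combining the two inequalities yields $Y_0 = Y_P$ $P$-a.s., and since $P \in \ccP$ was arbitrary this is \eqref{eq:EssupNullsetMathcalN}.

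The only point that needs a little care — and the closest thing to an obstacle — is the measurability bookkeeping: one must ensure that $\esssup^{P_0}\Phi$ admits an $\mathscr F$-measurable version (not merely a $\mathscr G$-measurable one), so that the reduction "$P$-null $\iff$ $P_0$-null on $\mathscr F$" actually applies to the relevant sets. This follows from the standard construction of the essential supremum as a countable supremum $\sup_{n} \varphi_n$ of elements drawn from the family $\Phi$ (or from the lattice of finite maxima of elements of $\Phi$, which are $\mathscr F$-measurable since $\mathscr F_T \subseteq \mathscr F$ and the $\varphi$'s are $\mathscr F$-measurable by hypothesis); any such countable supremum is $\mathscr F$-measurable. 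With that in hand the argument is a direct two-sided comparison as above. I would also remark that the same reasoning shows $\esssup^{P}\Phi$ is in fact one and the same random variable for every $P \in \ccP$, which is exactly the uniform essential supremum discussed before the lemma.
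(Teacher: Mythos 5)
Your proof is correct and follows essentially the same path as the paper's: both arguments hinge on the fact that all relevant random variables (the elements of $\Phi$ and the essential suprema, via their construction as countable suprema of finite maxima) admit $\mathscr F$-measurable versions, so that every a.s.\ statement about them is governed by the common null-ideal $\mathcal N$ shared by $P_0$ and every $P\vert_{\mathscr F}$. The paper verifies directly that $\esssup^{P_0}\Phi$ satisfies the two defining properties of $\esssup^{P}\Phi$ (using the countable subset $\Phi^*$), whereas you phrase it as a symmetric two-sided inequality using minimality on each side; these are the same argument in slightly different dress, and your explicit treatment of the $\mathscr F$-measurability bookkeeping is exactly the point the paper handles with its countable-sup construction.
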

\begin{proof} We show that $\esssup^{ P_0}\Phi$ fulfills condition $i)$ and $ii)$ in \cite[Theorem A.37]{follmer2016stochastic} on $(\Omega,\mathscr G, P_0)$ for all $P\in  \ccP$.
	Using the definition of $\esssup^{ P_0} \Phi$, we geht the following:
	\begin{align}\label{eq:essentialsupremum1}
		\esssup^{ P_0} \Phi\geq \varphi^P \quad P_0\text{-a.s.\@ for all }\varphi^P\in\Phi.
	\end{align}
Since $\{\esssup^{ P_0} \Phi\geq \varphi^P\}$ is $\ccF$-measurable and $\P \vert_{\ccF}\sim P_0$ for all $\P\in \ccP$, \eqref{eq:essentialsupremum1} also holds $P$-a.s. for any $P\in\ccP$.
Relying on the construction of the essential supremum, see, for example \cite[Theorem A.37]{follmer2016stochastic}, there exists a countable subset $\Phi^*\subset \Phi$ such that 
\begin{align*}
	\esssup^{ P_0} \Phi(\omega)=	\sup \Phi^*(\omega)\quad \text{ for all }\omega\in \Omega.
\end{align*}
Fix any $P \in \ccP$, then for each random variable $\psi$ on $(\Omega,\ccG)$ such that 
\begin{align*}
	\psi \geq \phi \quad P \text{-a.s.},
\end{align*}
we get $\psi\geq 		\sup \Phi^*=\esssup^P \Phi\ \P$-a.s. 
\end{proof}

In the following, for the fixed  measure $P_0$, which generates the $\mathscr F$-nullsets $\mathcal N$, we use the notation
$$
 \esssup \Phi :=\esssup^{ P_0}\Phi.
$$
\subsection{Characterization of robust insurance-finance arbitrage}
In the next result we use the essential supremum defined in Section \ref{sec:uniformesssup} to characterize $\textnormal{NRIFA}(\ccP).$ Furthermore, to prove this theorem, we use a measure extension ${Q}\odot P$ which was studied for the first time in \cite{plachky_rueschendorf_1984} (see also Proposition 4.1 in \cite{artzner2022insurance}). For the reader's convenience we state the existence result for such a measure.
\begin{proposition}\label{prop: QPmeasure}
    Let $(\Omega, \ccG,P )$ be a probability space, $\ccF \subseteq \ccG$ a sub-$\sigma$-algebra, and $Q\sim P\vert_{\mathscr F}$ a probability measure on $(\Omega, \ccF)$. Then, there exists a unique probability measure, denoted by ${Q} \odot P,$ on $(\Omega, \ccG)$ such that ${Q}\odot P={Q}$ on $\ccF$ and ${Q} \odot P=P$ conditioned on $\mathscr F$. 
 \end{proposition}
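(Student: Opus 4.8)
The plan is to construct $Q \odot P$ explicitly via its Radon--Nikodym density against $P$, prove that this density has the right conditional structure, and finally obtain uniqueness from the defining properties. First I would let $Z \coloneqq dQ/d(P\vert_{\mathscr F})$, which exists and is strictly positive $P\vert_{\mathscr F}$-a.s.\@ because $Q \sim P\vert_{\mathscr F}$; since $Z$ is $\mathscr F$-measurable, it is also a random variable on $(\Omega,\mathscr G)$. Define the measure $Q \odot P$ on $(\Omega,\mathscr G)$ by $\frac{d(Q\odot P)}{dP} \coloneqq Z$. This is a probability measure because $E_P[Z] = E_{P\vert_{\mathscr F}}[Z] = Q(\Omega) = 1$, and $Q \odot P \ll P$ by construction.

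Next I would verify the two required properties. For $A \in \mathscr F$, we have $(Q\odot P)(A) = E_P[Z \mathbf 1_A] = E_{P\vert_{\mathscr F}}[Z\mathbf 1_A] = Q(A)$, so $Q \odot P = Q$ on $\mathscr F$. For the statement that $Q \odot P$ equals $P$ conditioned on $\mathscr F$, I would show that for every bounded $\mathscr G$-measurable $Y$ one has $E_{Q\odot P}[Y \mid \mathscr F] = E_P[Y \mid \mathscr F]$, $Q\odot P$-a.s. (equivalently $P$-a.s., since $Q\odot P \sim P$ because $Z>0$). This follows from the abstract Bayes formula: for $\mathscr F$-measurable bounded $W$,
\begin{equation*}
E_{Q\odot P}[YW] = E_P[YWZ] = E_P\big[W Z\, E_P[Y\mid\mathscr F]\big] = E_{Q\odot P}\big[W\, E_P[Y\mid\mathscr F]\big],
\end{equation*}
where the middle equality uses that $WZ$ is $\mathscr F$-measurable. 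Since $E_P[Y\mid\mathscr F]$ is $\mathscr F$-measurable, this identifies it as a version of $E_{Q\odot P}[Y\mid\mathscr F]$.

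For uniqueness, suppose $R$ is another probability measure on $(\Omega,\mathscr G)$ with $R = Q$ on $\mathscr F$ and $E_R[Y\mid\mathscr F] = E_P[Y\mid\mathscr F]$ $R$-a.s.\@ for bounded $\mathscr G$-measurable $Y$. Then for any such $Y$,
\begin{equation*}
E_R[Y] = E_R\big[E_R[Y\mid\mathscr F]\big] = E_R\big[E_P[Y\mid\mathscr F]\big] = E_Q\big[E_P[Y\mid\mathscr F]\big] = E_{Q\odot P}[Y],
\end{equation*}
using $R = Q$ on $\mathscr F$ in the third step and the conditional property of $Q\odot P$ in the last. Since $R$ and $Q\odot P$ agree on all bounded $\mathscr G$-measurable functions, they coincide. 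The main obstacle is being careful about the precise meaning of ``equals $P$ conditioned on $\mathscr F$'' and checking that the a.s.\@ qualifiers for $R$, $P$, and $Q\odot P$ are mutually compatible; once one fixes the formulation in terms of conditional expectations of bounded functions, each step is a routine application of the Bayes formula and the tower property, and no delicate measure-theoretic pathology arises because everything stays within the equivalence class of $P$.
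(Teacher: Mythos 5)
Your proof is correct and is essentially the construction the paper has in mind: the paper does not spell out a proof of Proposition~\ref{prop: QPmeasure} (it refers to \cite{plachky_rueschendorf_1984} and to \cite{artzner2022insurance}), but Remark~\ref{rem: definitionQP}(ii) states precisely that $Q\odot P$ has density $dQ/dP\vert_{\mathscr F}$ with respect to $P$, which is the starting point of your argument. Your verification of the two defining properties via the abstract Bayes formula, and the uniqueness argument via the tower property plus $R=Q$ on $\mathscr F$, are both sound; the only caveat you rightly flag — that versions of $E_P[\cdot\mid\mathscr F]$ differ on $\mathscr F$-measurable $P\vert_{\mathscr F}$-null sets, which are also $Q$- and $R$-null by equivalence — is handled correctly.
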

\begin{remark}\label{rem: definitionQP}
The measure ${Q} \odot P$ on $(\Omega, \ccG)$ from Proposition \ref{prop: QPmeasure} can also be characterized by each of the following more explicit expressions: 
\begin{enumerate}
    \item For all $G\in\mathscr G$ we have  ${Q} \odot P[G]= E_Q[E_P[\mathds{1}_G\vert\mathscr F]]$. 
    \item The density of  ${Q} \odot P$ with respect to $P$ is given by $dQ/dP\vert_{\mathscr F}$.
\end{enumerate}
This also shows that the measures  ${Q} \odot P$ and $P$ are equivalent.
\end{remark}
As is well-known, the arbitrage-free prices of a $\mathscr F_T$-measurable claim $H\geq 0$ in the financial market $S$ can be described by the set of expectations under all $Q\in\mathcal M_e(\mathbb F)$ such that $H\in L^1(Q)$, cf.\@ \cite[Theorem 5.29]{follmer2016stochastic}. The following Lemma shows that if we fix $P\in\ccP$ we could even only use the measures $Q\in\mathcal M_e(\mathbb F)$ such that $H\in L^1(Q)$ and such that $Q$ has bounded density with respect to $P$ to determine the set of arbitrage-free prices for $H$.

\begin{lemma}\label{lem: bounded martingale measures}
    Let $H\geq 0$ be a discounted $\mathscr F_T$-measurable claim such that $H\in L^1(Q)$ for some $Q\in\mathcal M_e(\mathbb F)$ and let $P\in\mathfrak{P}_{\mathcal{N}}(\Omega, \ccG)$. Then, there exists $Q^*\in\mathcal M_e(\mathbb F)$ such that $H\in L^1(Q^*),\ E_{Q}[H]=E_{Q^*}[H]$ and $dQ^*/dP\vert_{\mathscr F}\in L^{\infty}(\mathscr F,P\vert_{\mathscr F}).$
\end{lemma}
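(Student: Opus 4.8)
The plan is to set aside the $\sigma$-algebra $\ccG$ (which enters only through $\tilde P:=P\vert_{\mathscr F}$, equivalent to $P_0$, i.e.\ $\tilde P$ has nullsets $\mathcal N$) and to work purely on the financial market $(\Omega,\ccF,\tilde P,S)$, exploiting the interval structure of the arbitrage-free prices of $H$. Write $\beta:=E_Q[H]$. Since $Q\in\cM_e(\mathbb{F})$ and $H\in L^1(Q)$, $\beta$ is an arbitrage-free price of $H$, and by \cite[Theorems 5.29 and 5.30]{follmer2016stochastic} the set of arbitrage-free prices of $H$ is either the singleton $\{\beta\}$ (when $H$ is $\tilde P$-a.s.\ replicable) or the open interval $(\pi_-,\pi_+)$, where $\pi_\pm=\pi_\pm(H)$ denote the sub- and super-replication prices and satisfy $\pi_+=\sup_{Q'\in\cM_e(\mathbb{F})}E_{Q'}[H]$ and $\pi_-=\inf_{Q'\in\cM_e(\mathbb{F}),\,H\in L^1(Q')}E_{Q'}[H]$. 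I will also use that $\cM_e^b:=\{Q'\in\cM_e(\mathbb{F}):dQ'/d\tilde P\in L^\infty(\ccF,\tilde P)\}$ is non-empty, which is the classical Dalang--Morton--Willinger fact (compare the proof of \cite[Theorem 5.16]{follmer2016stochastic}, applicable since $\tilde P$ is equivalent to every element of $\cM_e(\mathbb{F})$). The naive idea of gluing $Q$ to some $\hat Q\in\cM_e^b$ by switching the density processes at $\sigma_K:=\inf\{t:Z_t\ge K\}\wedge T$ (with $Z$ the $(\mathbb{F},\tilde P)$-density process of $Q$) produces an element of $\cM_e(\mathbb{F})$, but it is not bounded (the overshoot $Z_{\sigma_K}$ and the factor $\hat Z_{\sigma_K}^{-1}$ are uncontrolled) and does not preserve $\beta$; so a less direct argument is needed.

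Suppose first that $H=\beta+(\xi\cdot S)_T$ $\tilde P$-a.s.\ for some $\mathbb{F}$-predictable $\xi$. Since $H\ge0$, the discrete-time martingale transform $(\xi\cdot S)$ is bounded below, hence (being a local martingale bounded below) a supermartingale under every $Q'\in\cM_e(\mathbb{F})$. In particular, for any $Q^*\in\cM_e^b$ we get $-\beta\le E_{Q^*}[(\xi\cdot S)_T]\le0$, so $(\xi\cdot S)_T\in L^1(Q^*)$ and thus $H\in L^1(Q^*)$ with $E_{Q^*}[H]\le\beta$; as $E_{Q^*}[H]$ is an arbitrage-free price of $H$, it equals the unique such price $\beta$. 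So here any $Q^*\in\cM_e^b$ works.

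In the non-replicable case one has $\beta\in(\pi_-,\pi_+)$ \emph{strictly}. Put $\mathcal V:=\{E_{Q'}[H]:Q'\in\cM_e^b,\ H\in L^1(Q')\}$. This set is convex: for $Q_1,Q_2\in\cM_e^b$ with $H$ integrable, the mixture $\lambda Q_1+(1-\lambda)Q_2$ again lies in $\cM_e^b$ (its $\tilde P$-density is bounded by the larger of the two bounds), integrates $H$, and has price $\lambda E_{Q_1}[H]+(1-\lambda)E_{Q_2}[H]$. Moreover $\mathcal V\subseteq(\pi_-,\pi_+)$ because each of its elements is an arbitrage-free price. Hence $\mathcal V$ is a subinterval of $(\pi_-,\pi_+)$, and it suffices to show $\inf\mathcal V=\pi_-$ and $\sup\mathcal V=\pi_+$: then $\mathcal V=(\pi_-,\pi_+)\ni\beta$, and the measure in $\cM_e^b$ realizing $\beta$ is the desired $Q^*$. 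Establishing these two identities --- i.e.\ that the sub- and super-replication prices of $H$ are already exhausted by martingale measures with \emph{bounded} density --- is the technical heart of the argument and the step I expect to be the main obstacle. I would do it by truncation: $\pi_+(H)=\sup_n\pi_+(H\wedge n)$ by closedness of the super-replication set in discrete time, and for a bounded claim $G$ the bounded-density restriction is harmless, $\pi_+(G)=\sup_{Q'\in\cM_e^b}E_{Q'}[G]$ --- this follows from the $L^\infty$/$L^1$ version of the discrete-time super-hedging duality together with the fact that every $Q'\in\cM_e(\mathbb{F})$ can be approximated by a sequence in $\cM_e^b$ whose $\tilde P$-densities converge to $dQ'/d\tilde P$ in $L^1(\tilde P)$ (moving the excess mass of an unbounded density onto a $\tilde P$-thin set keeps each approximant bounded while forcing $L^1$-convergence). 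Combining these gives $\sup\mathcal V=\pi_+(H)$, and symmetrically $\inf\mathcal V=\pi_-(H)$, finishing the proof.
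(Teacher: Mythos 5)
The paper's proof is a three-line argument that sidesteps everything you struggled with. It extends the market $S$ by the additional (discounted, non-negative) asset $S^{d+1}_t := E_Q[H\mid\ccF_t]$; since $Q$ is a martingale measure for the extended market $(S,S^{d+1})$, that market is arbitrage-free, and the Dalang--Morton--Willinger theorem (\cite[Theorem 5.16]{follmer2016stochastic}) then produces $Q^*\sim P\vert_\ccF$ with bounded density under which $(S,S^{d+1})$ is a martingale. Automatically $Q^*\in\cM_e(\mathbb F)$, $H=S_T^{d+1}\in L^1(Q^*)$, and $E_{Q^*}[H]=S_0^{d+1}=E_Q[H]$. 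No case split, no truncation, no interval argument.

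Your proposal has a genuine gap precisely where you flagged it. The replicable case is fine. In the non-replicable case you reduce the lemma to the statement that, for bounded $G$, $\pi_+(G)=\sup_{Q'\in\cM_e^b}E_{Q'}[G]$, and you reduce that to the assertion that every $Q'\in\cM_e(\mathbb F)$ is an $L^1(\tilde P)$-limit of elements of $\cM_e^b$. Your proposed mechanism --- ``moving the excess mass of an unbounded density onto a $\tilde P$-thin set'' --- does not obviously (and in general will not) preserve the martingale property of $S$; a truncated or redistributed density is almost never again a martingale-measure density. That is exactly the difficulty the lemma is about, so as written the argument is circular at its technical heart. Notice moreover that the clean way to close your own gap is again the extended-market trick: given bounded $G$ and $Q'\in\cM_e(\mathbb F)$, append $E_{Q'}[G\mid\ccF_t]$ as a new asset and apply DMW to get $Q''\in\cM_e^b$ with $E_{Q''}[G]=E_{Q'}[G]$. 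Once you have that tool, the detour through $\pi_\pm$, the convex set $\mathcal V$, and truncation becomes unnecessary --- you can apply it directly to $H$ itself, as the paper does.
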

\begin{proof}
We define the process $S^{d+1}=(S_t^{d+1})_{t=0,...,T}$ by  $S^{d+1}_t=E_{Q}[H\vert \mathscr F_t]$. Then $Q$ is a martingale measure for the extended market $(S,S^{d+1})$. According to \cite[Theorem 5.16]{follmer2016stochastic}, it follows that this extended market is free of arbitrage and that there is an equivalent measure $Q^*\sim P\vert_{\mathscr F}$ such that $(S,S^{d+1})$ is a $(Q^*,\mathbb{F})$-martingale and $dQ^*/dP\vert_{\mathscr F}\in L^{\infty}(\mathscr F,P\vert_{\mathscr F}).$ The martingale property for $S$ implies $Q^*\in\mathcal M_e(\mathbb F)$ and for $S^{d+1}$ it implies $H=S_T^{d+1}\in L^1(Q^*)$ and  $E_{Q}[H]=S_0^{d+1}=E_{Q^*}[H]$.
\end{proof}

\begin{lemma}\label{lem: P help measure}
    Let $A_1,\dots,A_N\in\mathscr F$ be a partition of $\Omega$ and $P_1,\dots,P_N\in \mathfrak{P}_{\mathcal{N}}(\Omega, \ccG)$. Then the measure $P$ on $(\Omega,\mathscr G)$ defined by 
    \begin{align}\label{eq: P help measure}
        P(G)\coloneqq c^{-1}\sum_{i=1}^NP_i(G\cap A_i),
    \end{align}
    for $c=\sum_{i=1}^NP_i(A_i)$ is well-defined, a probability measure and $P\in\mathfrak{P}_{\mathcal{N}}(\Omega, \ccG)$.
\end{lemma}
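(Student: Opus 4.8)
The plan is to verify the three claims in order: well-definedness, that $P$ is a probability measure, and that $P \in \mathfrak{P}_{\mathcal{N}}(\Omega,\mathscr G)$. The well-definedness amounts to checking $c > 0$, which is immediate: since each $P_i$ is a probability measure, $P_i(A_i) \geq 0$, and moreover $\sum_{i=1}^N P_i(A_i) = 0$ would force $P_i(A_i) = 0$ for all $i$; but then, since $\{A_1,\dots,A_N\}$ is a partition of $\Omega$, we would get $1 = P_1(\Omega) = \sum_{i=1}^N P_1(A_i)$, and here I need to be slightly careful—$P_1(A_i) = 0$ need not follow from $P_i(A_i) = 0$. Instead I would argue: the $A_i$ lie in $\mathscr F$, and all the $P_i$ restricted to $\mathscr F$ share the nullset system $\mathcal N$; hence $P_i(A_i) = 0$ if and only if $A_i \in \mathcal N$ if and only if $P_1(A_i) = 0$. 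So $c = 0$ would give $P_1(A_i) = 0$ for every $i$, contradicting $\sum_i P_1(A_i) = P_1(\Omega) = 1$. Thus $c > 0$ and the formula \eqref{eq: P help measure} makes sense.

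Next I would check that $P$ is a probability measure. Nonnegativity and $P(\emptyset) = 0$ are clear. Countable additivity follows from that of each $P_i$ together with the fact that a finite sum of measures is a measure: if $(G_k)_{k}$ are disjoint sets in $\mathscr G$, then $P(\bigcup_k G_k) = c^{-1}\sum_{i=1}^N P_i\big((\bigcup_k G_k)\cap A_i\big) = c^{-1}\sum_{i=1}^N \sum_k P_i(G_k \cap A_i) = \sum_k c^{-1}\sum_{i=1}^N P_i(G_k\cap A_i) = \sum_k P(G_k)$, where interchanging the two sums is legitimate since all terms are nonnegative. Finally, $P(\Omega) = c^{-1}\sum_{i=1}^N P_i(\Omega \cap A_i) = c^{-1}\sum_{i=1}^N P_i(A_i) = c^{-1} c = 1$.

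For the last claim, $P \in \mathfrak{P}_{\mathcal{N}}(\Omega,\mathscr G)$, I must show that the nullsets of $P\vert_{\mathscr F}$ are exactly $\mathcal N$. Take $F \in \mathscr F$. Then $P(F) = c^{-1}\sum_{i=1}^N P_i(F \cap A_i)$, a sum of nonnegative terms, so $P(F) = 0$ if and only if $P_i(F\cap A_i) = 0$ for every $i$. If $F \in \mathcal N$, then $F \cap A_i \in \mathcal N$ (as $\mathcal N$ is a $\sigma$-ideal and $F\cap A_i \subseteq F$, with $F\cap A_i \in \mathscr F$), hence $P_i(F\cap A_i) = 0$ for all $i$ because $P_i \in \mathfrak{P}_{\mathcal N}(\Omega,\mathscr G)$; so $P(F) = 0$. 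Conversely, suppose $P(F) = 0$, so $P_i(F\cap A_i) = 0$ for all $i$; since $P_i\vert_{\mathscr F}$ has nullsets $\mathcal N$ and $F \cap A_i \in \mathscr F$, we get $F\cap A_i \in \mathcal N$ for each $i$, and therefore $F = \bigcup_{i=1}^N (F\cap A_i) \in \mathcal N$ because $\mathcal N$ is closed under finite (indeed countable) unions. This shows the nullsets of $P\vert_{\mathscr F}$ coincide with $\mathcal N$, completing the proof.

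None of the steps presents a serious obstacle; the only subtlety worth flagging is the one in the first paragraph—ruling out $c = 0$ cannot be done purely from $P_i(A_i) = 0$ for each $i$ in isolation, but requires invoking that the $A_i$ are $\mathscr F$-measurable and the $P_i$ share the common $\mathscr F$-nullset system $\mathcal N$, so that $P_i(A_i) = 0$ for all $i$ propagates to $P_1(A_i) = 0$ for all $i$ and contradicts $\sum_i P_1(A_i) = 1$.
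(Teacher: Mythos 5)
Your proof is correct and takes essentially the same route as the paper: both establish $c>0$ by propagating the nullset property of the $A_i$ to a single reference measure and deriving a contradiction, and both establish $P\in\mathfrak{P}_{\mathcal N}(\Omega,\mathscr G)$ by checking the nullset equivalence on $\mathscr F$ via the partition (you phrase the equivalence directly against $\mathcal N$, while the paper phrases it against $P_1$, which is the same thing since $P_1\vert_{\mathscr F}$ has nullsets $\mathcal N$). The explicit countable-additivity check is a fine addition that the paper leaves implicit.
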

\begin{proof}
    We must show that $c>0$. For the sake of contradiction we assume that $P_i[A_i]=0$ for all $i=1,\dots,N$. By the equivalence  $P_i\in \mathfrak{P}_{\mathcal{N}}(\Omega, \ccG)$ and $A_i\in\mathscr F$, it follows that $P_1[A_i]=0$ for all $i=1,\dots, N$ and thus $P_1[\Omega]=0$. This is a contradiction and we obtain $c>0$. Thus, $P$ is well-defined and a probability measure on $(\Omega,\mathscr G)$. In order to prove $P\in\mathfrak{P}_{\mathcal{N}}(\Omega, \ccG)$ we show that for all $F\in\mathscr F$ it holds that  $P_1(F)=0$ if and only if $P[F]=0$. First, we assume that $P_1[F]=0$.  Then, once more using the equivalence $P_i\in \mathfrak{P}_{\mathcal{N}}(\Omega, \ccG)$ we obtain $P_i[F\cap A_i ]\leq P_i[F]=0$ for all $i=1,\dots,N$ and it follows that $P[F]=0$. On the other hand, if $P[F]=0$, then it follows that $P_i[A_i\cap F]=0$ and thus $P_1[F]=0$. 
\end{proof}
Next, we state the main result of this paper.

\begin{theorem} \label{theorem:characterizationRobustInsuranceFinance}
	There is no $\ccP$-robust asymptotic insurance-finance arbitrage on $(S,\mathcal X,p)$ if and only if one of the following statements is fulfilled.
	\begin{enumerate}
		\item For all $ P \in \ccP$ there is a measure $Q\in \cM_e(\mathbb F)$ such that 
		$$ p\leq E_{Q}\left [E_{P}[X^1\vert \ccF]\right].$$
		\item There is a measure $Q \in \cM_e(\mathbb F)$ such that
		$$ p < E_{Q}\Big[\esssupN_{ P \in \ccP}E_{P}[X^1\vert \ccF]\Big].$$
    \end{enumerate}
\end{theorem}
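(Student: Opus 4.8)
The plan is to prove the equivalence by establishing the three implications ``(i)$\Rightarrow$NRIFA$(\ccP)$'', ``(ii)$\Rightarrow$NRIFA$(\ccP)$'' and ``$\lnot$(i) and $\lnot$(ii)$\Rightarrow$ a RIFA$(\ccP)$ exists''. Throughout write $\varphi:=\esssup_{P\in\ccP}E_P[X^1\mid\ccF]$, which is an $\ccF$-measurable random variable by Lemma \ref{lem2.6} and satisfies $\varphi\ge 0$. A preliminary observation I will use repeatedly: for any pair $(\xi,\Psi)$ as in Definition \ref{def:RobustInsuranceFinance}, since $X^i\ge0$ and $\psi^{n,i}\ge0$ we have $V_T(\psi^n)\le C p^+$, hence $V_T(\Psi)\le C p^+$ and, on $\{(\xi\cdot S)_T+V_T(\Psi)\ge0\}$, $(\xi\cdot S)_T\ge -Cp^+$; as $(\xi\cdot S)_T$ is $\ccF_T$-measurable and every $P\in\ccP$ restricts on $\ccF$ to a measure equivalent to $P_0$, this bound holds $P_0$-a.s., so $\xi\cdot S$ is a $Q$-supermartingale and $E_Q[(\xi\cdot S)_T]\le0$ for every $Q\in\cM_e(\mathbb F)$.

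Sufficiency of (i) is immediate: for each $P\in\ccP$, (i) yields $Q\in\cM_e(\mathbb F)$ with $p\le E_Q[E_P[X^1\mid\ccF]]=E_{Q\odot P}[X^1]$, so by Corollary 5.2 of \cite{artzner2022insurance} we have NRIFA$(\{P\})$, and Remark \ref{remark:NoRobustArbitrageSingleMeasureArbitrage} upgrades this to NRIFA$(\ccP)$. For the sufficiency of (ii), fix $Q_0\in\cM_e(\mathbb F)$ with $p<E_{Q_0}[\varphi]$ and suppose $(\xi,\Psi)$ were a RIFA$(\ccP)$ with $\gamma:=\lim_n\|\psi^n\|$. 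If $\gamma=0$ then $V_T(\psi^n)\to0$ in $L^1(P)$ for every $P$, so $V_T(\Psi)=0$ $\ccP$-q.s.\ and $(\xi\cdot S)_T\ge0$ $\ccP$-q.s.\ with strictly positive expectation under some $P\in\ccP$, contradicting the supermartingale property. If $\gamma>0$, write $\varphi$ as the increasing pointwise limit of $\varphi_N:=\max_{k\le N}E_{P_k}[X^1\mid\ccF]$ over a countable subfamily $(P_k)\subseteq\ccP$ realising the essential supremum, and pick $N$ with $E_{Q_0}[\varphi_N]>p$ (monotone convergence). On the $\ccF$-measurable partition $A_1,\dots,A_N$ on which $\varphi_N$ is successively attained, form the pasted measure $\tilde P$ of Lemma \ref{lem: P help measure}: then $\tilde P\in\mathfrak P_{\mathcal N}(\Omega,\ccG)$, Assumption \ref{assum:ConditionalExpectations} still holds under $\tilde P$, and localisation of conditional expectations on the $A_k$ gives $E_{\tilde P}[X^1\mid\ccF]=\varphi_N$ $\tilde P$-a.s.; moreover both the $\ccP$-q.s.\ inequality and the $\ccP$-q.s.\ convergence defining $V_T(\Psi)$ pass to $\tilde P$, since on each $A_k$ it is a multiple of $P_k$. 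The uniform bound $E_{\tilde P}[(\sum_i\psi^{n,i}X^i)^2\mid\ccF]=(\sum_i\psi^{n,i})^2(E_{\tilde P}[X^1\mid\ccF])^2+\sum_i(\psi^{n,i})^2\Var_{\tilde P}[X^1\mid\ccF]\le C^2E_{\tilde P}[(X^1)^2\mid\ccF]$ furnished by Assumption \ref{assum:ConditionalExpectations} makes $(\sum_i\psi^{n,i}X^i)_n$ conditionally uniformly integrable, so that $E_{\tilde P}[V_T(\Psi)\mid\ccF]=\gamma(p-\varphi_N)$ $\tilde P$-a.s.; combining this with Proposition \ref{prop: QPmeasure}, Remark \ref{rem: definitionQP} and the supermartingale bound gives
\[
0\ \le\ E_{Q_0\odot\tilde P}\big[(\xi\cdot S)_T+V_T(\Psi)\big]\ =\ E_{Q_0}[(\xi\cdot S)_T]+E_{Q_0}\big[E_{\tilde P}[V_T(\Psi)\mid\ccF]\big]\ \le\ \gamma\big(p-E_{Q_0}[\varphi_N]\big)\ <\ 0,
\]
a contradiction (Lemma \ref{lem: bounded martingale measures} is invoked where integrability of $X^1$ under $Q_0$ must be arranged).

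For necessity, assume $\lnot$(i) and $\lnot$(ii). Then $E_Q[\varphi]\le p$ for every $Q\in\cM_e(\mathbb F)$, so by the superhedging duality in discrete time (e.g.\ \cite{follmer2016stochastic}) there is an $\mathbb F$-trading strategy $\hat\xi$ with $p+(\hat\xi\cdot S)_T\ge\varphi$ $P_0$-a.s., hence $\ccP$-q.s. The key point is that full diversification, $\psi^{n,i}=\tfrac1n\Ind_{\{i\le n\}}$, is $\ccP$-admissible: by a conditional strong law of large numbers (which follows from Assumption \ref{assum:ConditionalExpectations} via an $L^2$/Borel--Cantelli estimate along $n=k^2$ and monotone sandwiching, using $X^i\ge0$), $\tfrac1n\sum_{i=1}^nX^i$ converges $\ccP$-q.s.\ to the single $\ccG$-measurable random variable $W:=\limsup_n\tfrac1n\sum_{i=1}^nX^i$, which satisfies $W=E_P[X^1\mid\ccF]$ $P$-a.s.\ for every $P\in\ccP$; hence $V_T(\psi^n)=p-\tfrac1n\sum_{i=1}^nX^i\to p-W=:V_T(\Psi)$ $\ccP$-q.s., and $W\le\varphi$ $\ccP$-q.s. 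Taking $\xi:=\hat\xi$ and $\Psi:=(\psi^n)$ gives $(\xi\cdot S)_T+V_T(\Psi)=(\hat\xi\cdot S)_T+p-W\ge\varphi-W\ge0$ $\ccP$-q.s. For the strict part: if $\varphi\ne E_P[X^1\mid\ccF]$ on a non-$\mathcal N$-null set for some $P\in\ccP$, then for any $Q\in\cM_e(\mathbb F)$,
\[
E_{Q\odot P}\big[(\xi\cdot S)_T+V_T(\Psi)\big]\ \ge\ E_{Q\odot P}[\varphi-W]\ =\ E_Q\big[\varphi-E_P[X^1\mid\ccF]\big]\ >\ 0,
\]
and since the integrand is nonnegative and $Q\odot P\sim P$ this forces $E_P[(\xi\cdot S)_T+V_T(\Psi)]>0$; otherwise $W=\varphi$ $\ccP$-q.s., and $\lnot$(i) gives $\sup_{Q\in\cM_e(\mathbb F)}E_Q[\varphi]\le p$ with $E_Q[\varphi]<p$ for every single $Q$, so replacing $\hat\xi$ by a superhedge of $\varphi$ at cost $\sup_QE_Q[\varphi]\le p$—and, when this cost equals $p$, reading off the increasing process of the optional decomposition—produces a payoff that is $>0$ on a non-$\mathcal N$-null set, again giving the strict inequality under every $P\in\ccP$.

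I expect the main obstacle to be the interchange of limit and conditional expectation, $E_P[V_T(\Psi)\mid\ccF]=\gamma(p-E_P[X^1\mid\ccF])$: this is exactly what makes the two sufficient conditions necessary, and it relies on the uniform conditional $L^2$-bound from Assumption \ref{assum:ConditionalExpectations} together with the transfer of quasi-sure statements to the pasted priors of Lemma \ref{lem: P help measure}. A close second is the verification that full diversification is $\ccP$-admissible, i.e.\ that the $P$-dependent limits $E_P[X^1\mid\ccF]$ genuinely glue into the single random variable $W=\limsup_n\tfrac1n\sum_{i\le n}X^i$.
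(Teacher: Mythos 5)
Your proof is correct and follows essentially the same route as the paper's: the contrapositive for the only-if direction via a superhedge of $\esssup_P E_P[X^1\mid\ccF]$ combined with the fully diversified portfolio $\psi^{n,i}=\tfrac1n\Ind_{\{i\le n\}}$, the reduction to \cite{artzner2022insurance} for the sufficiency of (i), and the pasted measure of Lemma~\ref{lem: P help measure} combined with Lemma~\ref{lem: bounded martingale measures} and a $\odot$-supermartingale argument for the sufficiency of (ii). The only organizational differences are cosmetic: you front-load the lower bound on $(\xi\cdot S)_T$ to get the supermartingale property directly, and you realize the paper's auxiliary measure $R$ explicitly as $Q_0^*\odot\tilde P$ rather than defining it separately (which, as one checks, gives the same measure), while re-deriving the conditional $L^2$/uniform-integrability step that the paper cites as Proposition~B.1 of \cite{artzner2022insurance}.
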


\begin{proof}
We start with the \emph{only if} direction and prove this using contraposition. Assume that neither $(i)$ nor $(ii)$ are fulfilled, i.e.,
	\begin{enumerate}
		\item[$\overline{(i)}$] There is  $P'\in\ccP$ such that for all $ Q\in \cM_e(\mathbb F)$ it holds that
		$$ p> E_{Q}\left[E_{P'}[X^1\vert \ccF]\right].$$
	\end{enumerate}
	and \smallskip
	\begin{enumerate}
		\item[$\overline{(ii)}$] For all $Q\in\cM_e(\mathbb F)$ we have 
		$$ p \geq E_{Q} \Big[\esssupN_{P\in\ccP}E_{P}[X^1\vert \ccF] \Big].$$
	\end{enumerate}
We have to show that there is a $\ccP$-robust asymptotic insurance-finance arbitrage on $(S,\mathcal X,p)$.
Fix some $P'$ which satisfies $\overline{(i)}$ and define the set $A$ by 
$$A\coloneqq \Big \lbrace {E_{P'}[X^1\vert \cF]<\esssupN_{\P\in\ccP}E_{P}[X^1\vert  \ccF] \Big \rbrace} \in \ccF.$$ 
First, we consider the case $\P'[A]>\nolinebreak0$. Note that $Y:=\esssup_{\P\in\ccP}E_{P}[X^1\vert  \ccF]$ can be interpreted as a contingent claim on $(\Omega, \ccF)$. By $\overline{(ii)}$ we know that $p \geq \sup_{Q \in \cM_e(\mathbb{F})} E_{Q}[Y].$ This means $p$ is greater or equal than every arbitrage-free price for $Y$. Thus, as shown by \cite[Theorem 5.29, Corollary 7.9]{follmer2016stochastic}, there exists a $\mathbb{F}$-predictable (super-hedging) strategy $\xi$ such that
\begin{align}\label{eq:superhedging1}
	p+(\xi\cdot S)_T\geq \esssupN_{\P\in\ccP}E_{P}[X^1\vert \ccF]\quad P_0\text{-a.s.}
\end{align}
As both sides in \eqref{eq:superhedging1} are $\ccF$-measurable and by the definition of $\ccP$ Equation \eqref{eq:superhedging1} holds $P$-a.s. for all $P \in \ccP$. We define the admissible strategy $\Psi=(\psi^n)_{n\in\mathbb N}$ by  
\begin{align} \label{eq:DefinitionInsuranceStrategy}
    \psi^{n}\coloneqq \dfrac{1}{n}(1,\dots, 1, 0,\dots)=\dfrac{1}{n}(\mathds{1}_{\{k\leq n\}})_{k\in \mathbb N}.
\end{align} According to Assumption \ref{assum:ConditionalExpectations} and  \cite[Theorem 3.5]{Majerek2005}, we have 
$$ V_T(\Psi)=p-E_{P}[X^1\vert \ccF]\quad P\text{-a.s. for all } P\in\ccP.$$
The strategy $(\xi,\Psi)$ is a robust asymptotic insurance-finance arbitrage since by \eqref{eq:superhedging1} and the argumentation below it holds for all $P\in\ccP$ that
\begin{align*}
	(\xi\cdot S)_T+ V_T(\Psi)&=p+(\xi\cdot S)_T+ V_T(\Psi)-p\\
	&\geq \esssupN_{\P\in \ccP}E_{P}[X^1\vert \ccF]-E_{P}[X^1\vert \ccF] \quad P \text{-a.s.}\\
	&\geq 0\quad P \text{-a.s.} 
\end{align*}
Moreover, since $P'[A]>0$, we have 
\begin{align}
\esssupN_{\P\in \ccP}E_{P}[X^1\vert \cF]- E_{P'}[X^1\vert \cF] \in L_+^0\backslash \lbrace 0 \rbrace.
\end{align}
Second, we consider $P'[A]=0$. As $P' \vert_{\ccF}  \sim Q$ for all $Q \in \cM_e(\mathbb{F})$ and $A \in \ccF$, it also holds $Q[A]=0$ and consequently $Q[A^c]=1$. Then $\overline{i)}$ yields that for all $Q \in \cM_e(\mathbb{F})$
\begin{align}
 p> E_{Q}\left[ E_{P'}[X^1 \vert \ccF]\right]= E_{Q}\Big[ \esssupN_{\P\in \ccP} E_{P}[X^1 \vert \ccF]\Big]. 
 \end{align}
 In this case $p$ is strictly greater than any arbitrage-free price for $ \esssupN_{\P\in \ccP} E_{P}[X^1 \vert \ccF]$ and thus also by \cite[Theorem 5.29, Corollary 7.9]{follmer2016stochastic} we can find a $\mathbb{F}$-predictable strategy $\xi$ such that
 \begin{align}
	p+(\xi\cdot S)_T- \esssupN_{\P\in\ccP}E_{P}[X^1\vert \ccF]\in L^0_+\backslash\{0\} \text{ for all } P\in \ccP.
\end{align}
By once again choosing the insurance strategy $\Psi$ given in \eqref{eq:DefinitionInsuranceStrategy}, the pair $(\xi, \Psi)$ is a $\ccP$-robust asymptotic insurance-finance arbitrage as it holds for all $P\in\mathcal P$ that
\begin{align*}
	(\xi\cdot S)_T+ V_T(\Psi)&=p+(\xi\cdot S)_T+ V_T(\Psi)-p\\
	&=p+ (\xi \cdot S)_T -E_{P}[X^1\vert \ccF]\quad P\text{-a.s.}\\
	&\geq p+ (\xi \cdot S)_T -\esssupN_{\P\in \ccP} E_{P}[X^1 \vert \ccF]\in L^0_+\backslash\{0\}.
\end{align*}
This concludes the proof of the \emph{only if} direction and we proceed with the \emph{if} direction. If $(i)$ holds, then for any $P\in \ccP$ we could assume that the martingale measure $Q$ has bounded density with respect to   $P\vert_{\mathscr F}$, cf.\@  Lemma \ref{lem: bounded martingale measures}. According to Corollary 5.2 in \cite{artzner2022insurance}, there exists no $\{P\}$-robust asymptotic insurance-finance arbitrage for each fixed measure $P \in \ccP$ and thus $\textnormal{NRIFA}(\ccP)$ holds. For the reader's convenience we provide a detailed proof of this statement. Fix $P \in \ccP$. For the sake of contradiction assume that there is an admissible strategy $(\xi,\Psi)$ such that 
\begin{align*}
(\xi\cdot S)_T+V_T(\Psi)\in L_+^0(P)\backslash\{0\}.
\end{align*}
We define the process $Z=(Z_t)_{t\leq T}$ by
\begin{align}\label{eq:Zprocess}
    Z_0=0\quad\text{ and }\quad Z_t=E_{ Q\odot P}[V_T(\Psi)\vert \mathscr F_t] \quad \text{ for all }  1\leq t\leq T.
\end{align}
Using \cite[Proposition B.1]{artzner2022insurance} and assumption $(i)$ we obtain the following:
\begin{align*}
E_{ Q\odot P}[Z_t]&=E_{ Q\odot P}[V_T(\Psi)]\\
&=E_{ Q\odot P}[\gamma(p-X^1)]\\
&=\gamma(p-E_{ Q\odot P}[X^1])\leq 0.
\end{align*}
This shows that $Z$ is a local $(\mathbb F, Q\odot P)$-supermartingale. By \cite[Remark 9.5]{follmer2016stochastic} the value process $(\xi\cdot S)$ is a local  $(\mathbb F, Q\odot P)$-martingale and thus $(\xi\cdot S)+Z$ is a local $(\mathbb F, Q\odot P)$-supermartingale and fulfills 
\begin{align*}
(\xi\cdot S)_T+Z_T=(\xi\cdot S)_T+V_T(\Psi)\geq 0. 
\end{align*}    
By \cite[Proposition 9.6]{follmer2016stochastic} the process  $(\xi\cdot S)+Z$ is a $(\mathbb F, Q\odot P)$-supermartingale  and thus 
\begin{align*}
		E_{ Q\odot\P}[(\xi\cdot S)_T+V_{T}(\Psi)]\leq E_{ Q\odot P}[(\xi\cdot S)_0+Z_0]= 0. 
\end{align*}
According to Remark \ref{rem: definitionQP} it holds $ Q\odot P\sim P$ and we find a contradiction. This shows that $(i)$ implies $\textnormal{NRIFA}(\ccP)$. 
	
Let us now assume that $(ii)$ holds. In a first step, we show that there exists a finite partition of $\Omega$ given by $A_1,...,A_{N} \in \ccF$ and measures $(P^i)_{i \leq N}$ with $P^i \in \ccP$ such that
$$
p<\sum_{i=1}^{N} E_Q \left[ \mathds{1}_{A_i} E_{P^i}[X^1\vert  \ccF] \right].
$$
Based on the definition of the essential supremum, cf. \cite[Theorem A.37]{follmer2016stochastic}, there exists a countable subset $\ccP^*=(P^i)_{i \in \mathbb{N}} \subset \ccP$ such that
\begin{align} \label{eq:EssentialAsLimit}
\esssupN_{\P\in \ccP}E_P[X^1 \vert \ccF]= \sup_{P \in \ccP^{*}}E_P[X^1 \vert \ccF]= \lim_{N \to \infty} \sup_{P \in (P^i)_{i \leq N}} E_P[X^1 \vert \ccF] \quad P_0 \text{-a.s.}
\end{align}
Moreover, as the essential supremum is $\ccF$-measurable, \eqref{eq:EssentialAsLimit} also holds $\ccP$-q.s. and $Q$-a.s. for all $Q \in \cM_e(\mathbb{F}).$ 
\raggedbottom
Thus, using $(ii)$ and monotone convergence, we get the following:
\begin{align*}
     p &< E_{Q}\Big[\esssupN_{ P \in \ccP}E_{P}[X^1\vert \ccF]\Big]\\ 
     &=E_{Q} \Big[ \lim_{N \to \infty} \sup_{P \in (P^i)_{i \leq N}} E_P[X^1 \vert \ccF]\Big]\\
     &= \lim_{N \to \infty} E_{Q} \Big[  \sup_{P \in (P^i)_{i \leq N}} E_P[X^1 \vert \ccF]\Big].
\end{align*}
Thus, there exists $N \in \mathbb{N}$ such that 
$$
p < E_{Q} \Big[  \sup_{P \in (P^i)_{i \leq N}} E_P[X^1 \vert \ccF]\Big].
$$
For every $\ell \in 1,...,{N}$ we define the set $B_{\ell}$ as
\begin{align*}
	B_\ell\coloneqq  \Big \lbrace{ E_{P^\ell}[X^1\vert \ccF ]= \sup_{P\in(P^i)_{i\leq N}}E_{P}[X^1\vert  \ccF] \Big\rbrace}\in\ccF.
\end{align*}
and $A_1\coloneqq B_1$ and $A_\ell := B_\ell\backslash\bigcup_{k<\ell}B_k$ for $\ell=2,\dots,N$.
Then $(A_{\ell})_{\ell=1,...,N}$ are disjoint and build a partition of $\Omega$ because for each $\omega\in \Omega$ there is $\ell\in\{1,\dots, N\}$ such that 
\begin{align*}
    E_{P^\ell}[X^1\vert \ccF ](\omega)= \sup_{P\in(P^i)_{i\leq N}}E_{P}[X^1\vert  \ccF](\omega)
\end{align*}
and thus we obtain
\begin{align*}
 \biguplus_{\ell =1}^{N} A_{\ell}= \bigcup_{\ell =1}^{N} B_{\ell}=\Omega.
\end{align*}
Moreover, we have
\begin{align} \label{eq:pSmaller}
    p < E_{Q} \Big[ \sum_{i=1}^{N} \textbf{1}_{A_i} \sup_{P\in(P^i)_{i\leq N}}E_{P}[X^1\vert  \ccF] \Big]= \sum_{i=1}^{N} E_{Q} \left[ \textbf{1}_{A_i} E_{P^i}[X^1\vert  \ccF] \right].
\end{align}
We define the measure $P'\in\mathfrak{P}_{\mathcal{N}}(\Omega, \ccG)$ using Equation \eqref{eq: P help measure}. Moreover, 
let $Q'\in\mathcal M_e(\mathbb F)$ be an equivalent martingale measure such that 
\begin{align}\label{eq:equation for Q'}
    p<E_Q \Big[\sum_{i=1}^{N} \mathds{1}_{A_i} E_{P^i}[X^1\vert  \ccF] \Big]=E_{Q'} \Big[\sum_{i=1}^{N} \mathds{1}_{A_i} E_{P^i}[X^1\vert  \ccF] \Big]   
\end{align}
and such that the density of $Q'$ with respect to $P'$ is bounded, cf.\@ Lemma \ref{lem: bounded martingale measures}. Finally, we define the measure $R$ on $(\Omega,\ccG)$ as
\begin{align*}
	R(G)%
	=E_{ Q'}\Big[\sum\limits_{i=1}^N\mathds{1}_{A_i}E_{\P^i}[\mathds{1}_G\vert \mathscr F]\Big].
\end{align*} 
 Given that $ Q'\odot \P^i\vert_{\ccF}= Q'$, cf.\@ Remark \ref{rem: definitionQP}, the measure $R$ is a probability measure on $(\Omega,\mathscr G)$ and satisfies $R\vert_{\ccF}= Q'$. We find that $S$ is also a $(\mathbb F, R)$-martingale.
Moreover, we obtain for all admissible insurance strategies $\Psi$ with positive total mass $\gamma>0$ that
\allowdisplaybreaks
\begin{align}
	E_{R}[V_T(\Psi)]&=E_{R}\Big[\lim\limits_{n\to\infty}\sum\limits_{i\in\mathbb N}\psi^{i,n}(p-X^i)\Big] \nonumber \\
	&=E_{Q'}\Big[\sum\limits_{i=1}^N\mathds{1}_{A_i}E_{\P^i}\Big[\lim\limits_{n\to\infty}\sum\limits_{i\in\mathbb N}\psi^{i,n}(p-X^i)\big \vert \ccF\Big]\Big] \nonumber \\
	&=E_{P'}\Big[\Big(\frac{dQ'}{dP'\vert_{\mathscr F}}\Big)\sum\limits_{i=1}^N\mathds{1}_{A_i}E_{\P^i}\Big[\lim\limits_{n\to\infty}\sum\limits_{i\in\mathbb N}\psi^{i,n}(p-X^i)\big \vert \ccF\Big]\Big] \nonumber \\
	&=\sum\limits_{i=1}^NE_{P_i}\Big[\Big(\frac{dQ'}{dP'\vert_{\mathscr F}}\Big)\mathds{1}_{A_i}E_{\P^i}\Big[\lim\limits_{n\to\infty}\sum\limits_{i\in\mathbb N}\psi^{i,n}(p-X^i)\big \vert \ccF\Big]\Big] \nonumber \\
	&=E_{Q'}\Big[\sum\limits_{i=1}^N\mathds{1}_{A_i}\big(\gamma(p-E_{\P^i}[X^1\vert \ccF])\big)\Big]\nonumber \\
	&=\gamma\Big(p-E_{ Q'}\Big[\sum\limits_{i=1}^N\mathds{1}_{A_i}E_{\P^i}[X^1\vert \ccF]\Big]\Big )\nonumber \\
	&>0, \label{eq:ExpectationSmaller0}
\end{align}
where we use \eqref{eq:pSmaller} and \cite[Proposition B.1]{artzner2022insurance}. We now define the process $Z$ as in \eqref{eq:Zprocess}. Using \eqref{eq:ExpectationSmaller0} it follows that $Z$ is a $(\mathbb{F},R)$-martingale with $Z_0\leq 0$. Let $\xi$ be some $\mathbb{F}$-predictable strategy. Then, as in \cite[Remark 9.5]{follmer2016stochastic} the value process $(\xi \cdot S)$ is a local $(\mathbb{F}, R)$-martingale and thus $(\xi \cdot S) + Z$ is a local $(\mathbb{F}, R)$-supermartingale. For the sake of contradiction we assume that $(\xi,\Psi)$ is a $\ccP$-robust asymptotic insurance-finance arbitrage such that $\Psi$ has positive total mass. Thus, it holds that
\begin{align}\label{eq:arbitragecondition}
	(\xi\cdot S)_T+Z_T\geq 0 \quad  P\text{-a.s. for all }\P\in\ccP.
\end{align}
Given that $R\ll \tfrac{1}{N}\sum\limits_{i=1}^n P^i$, equation (\ref{eq:arbitragecondition}) is also true $R$-a.s. Thus, according to \cite[Proposition 9.6]{follmer2016stochastic} the process $(\xi \cdot S) + Z$ is a $(\mathbb{F}, R)$-supermartingale and we obtain
$$E_{R}[(\xi\cdot S)_T+Z_T]\leq (\xi \cdot S)_0 +Z_0<0.$$
This contradicts \eqref{eq:arbitragecondition} because there cannot exist an admissible insurance strategy $\Psi$ with positive total mass $\gamma>0$ which fulfills \eqref{eq:arbitragecondition}. However, given that the pure financial market is arbitrage-free with respect to $\mathbb F$-predictable trading strategies $\xi$, there could also not be an arbitrage $(\xi,\Psi)$ such that $\Psi$ has total mass $\gamma = 0$. Overall, this leads to a contradiction and thus the result is proven.
\end{proof}
\begin{remark}
Let us compare Theorem \ref{theorem:characterizationRobustInsuranceFinance} in the case of $\ccP=\{P\}$ with Theorem 1 in \cite{rasonyi2003equivalent}. According to Assumptions \ref{assum:ConditionalExpectations}  and  \ref{assum:UniformBoundedness}, each condition $(i)$ and $(ii)$ in Theorem \ref{theorem:characterizationRobustInsuranceFinance} implies the absence of arbitrage in the sense of Definition \ref{def:RobustInsuranceFinance} and there is no need for an additional  boundedness assumption on the density of the corresponding martingale measures.  By contrast, in \cite{rasonyi2003equivalent} the boundedness assumption on the density of the equivalent martingale measure is essential and cannot be substituted by means of Lemma \ref{lem: bounded martingale measures}, cf. Example 2 in \cite{rasonyi2003equivalent}.  
\end{remark}

Motivated by Theorem \ref{theorem:characterizationRobustInsuranceFinance} and the previous discussion  (see e.g.,\@ Equation \eqref{eq:ConditionPremiumNoArbitrage}), we define the following robust version of the $QP$-rule.
\begin{definition} \label{def:RobustQPRule}
Let $\ccP \subseteq \mathfrak{P}_{\mathcal{N}}(\Omega, \ccG)$ and $Q \in \cM_e(\mathbb{F})$. Then, for $X \geq 0$ $\ccP$-q.s.\@, we define the \emph{$Q  \ccP$-evaluation of $X$} by
\begin{equation} 
    E_{Q \odot \ccP}[X]:= E_{Q}\Big[\esssupN_{ P \in \ccP}E_{P}[X\vert \ccF]\Big].
\end{equation}
\end{definition}
Note that $Q \odot \ccP$ does not define a probability measure as it is the case for $Q \odot P$. 

\begin{remark}\label{rem:directedupwards}
If the set $\{E_{P}[X\vert \ccF]\vert P\in \ccP\}$ is directed upward, i.e., for all $P,P'\in  \ccP$ there exists $\tilde P\in \ccP$ such that 
\begin{align}\label{eq:directedupwards}
   \max\{ E_{P}[X\vert \ccF], E_{P'}[X\vert \ccF]\}\leq E_{\tilde P}[X\vert \ccF],
\end{align}
then, as demonstrated by \cite[Theorem A.37]{follmer2016stochastic}, there exists a sequence $(P^n)_{n\in \mathbb N}\subset \ccP$ such that 
\begin{align*}
   E_{P^n}[X\vert \ccF]\nearrow  \esssupN_{ P \in \ccP}E_{P}[X\vert \ccF]\quad P_0\text{-a.s.}\text{ for }n\to\infty.
\end{align*}
Using monotone convergence we find that
\begin{align*}
    E_{Q}\Big[\esssupN_{ P \in \ccP}E_{P}[X\vert \ccF]\Big]&=  \lim\limits_{n\to\infty}E_{Q}\Big[E_{P^n}[X\vert \ccF]\Big]\\ &\leq \sup_{ P \in \ccP}E_{Q}\left[E_{P}[X\vert \ccF]\right]\\ &\leq E_{Q}\Big[\esssupN_{ P \in \ccP}E_{P}[X\vert \ccF]\Big]
\end{align*}
and  that
\begin{align*}
    E_{Q \odot \ccP}[X]=\sup_{ P \in \ccP}E_{Q}\left[E_{P}[X\vert \ccF]\right].
\end{align*}
Consequently, for a set of priors $\ccP$ that is directed upwards in the sense of Equation \eqref{eq:directedupwards}, there is no $\ccP$-robust asymptotic insurance-finance arbitrage if and only if there is no asymptotic insurance-finance arbitrage with respect to $P$ for all $P\in \ccP$.
\end{remark}

\begin{remark}
We now briefly consider the case of $\mathbb{G}$-trading strategies on the financial market introduced in Section \ref{section:FinancialMarket}, i.e., $\xi$ is a $d$-dimensional $\mathbb{G}$-predictable process.  This reflects the fact that the insurer has access to  information on the financial market as well as on the insurance market. In order to define  a no-arbitrage condition for the financial market, we introduce some more notation. Let $\ccP \subseteq \mathfrak{P}_{\mathcal{N}}(\Omega, \ccG).$ We say that a measure $Q \in \mathfrak{P}(\Omega, \cG)$ is \emph{dominated} by $\ccP$ if there exists $\P\in \ccP$ such that $  Q\ll P$, and in this case we write $Q \lll \ccP$. Next, we define the set
\begin{equation} \label{eq:NoArbitrageMeasuresFinancialMarketG}
\cM(\mathbb{G}):= \lbrace Q \in \mathfrak{P}(\Omega, \cG) \vert \  Q \lll \ccP \text{ and } S \text{ is a } (Q, \mathbb{G})\text{-martingale}\rbrace.
\end{equation}
In this case we assume that for all $P \in \ccP$ there exists $Q \in \cM(\mathbb{G})$ such that $P \ll Q $. Then, according to \cite[Theorem 4.5]{bouchard_nutz_2015}, there is no $\ccP$-robust arbitrage, denoted by $\textnormal{NA}(\ccP,\mathbb{G})$, on the financial market, which means that for all $\mathbb{G}$-trading strategies
\begin{equation}
    (\xi \cdot S)_T \geq 0 \quad \ccP\text{-q.s.} \quad \text{ implies } \quad (\xi \cdot S)_T = 0 \quad \ccP\text{-q.s.} 
\end{equation}
Given that  every $\mathbb{F}$-trading strategy is also a $\mathbb{G}$-trading strategy, it is clear that $\textnormal{NA}(\ccP, \mathbb{G})$ implies $\textnormal{NA}(\ccP, \mathbb{F})$, where $\mathbb G$ and $\mathbb{F}$ refers here to the $\mathbb G$-trading strategies and $\mathbb{F}$-trading strategies, respectively. It thus follows that $\textnormal{NRIFA}(\ccP, \mathbb{G})$ implies $\textnormal{NRIFA}(\ccP, \mathbb{F})$ and thus (i) or (ii) in Theorem \ref{theorem:characterizationRobustInsuranceFinance} is satisfied. Here, $\textnormal{NRIFA}(\ccP, \mathbb{G})$ is defined as in Definition \ref{def:RobustInsuranceFinance}, but with a $\mathbb{G}$-trading strategy $\xi$. However, the corresponding \emph{if} direction of Theorem \ref{theorem:characterizationRobustInsuranceFinance} is more delicate and could therefore serve as a topic  for future research. %
\end{remark}

\section{Robust two-step evaluation} \label{sec:TwoStepValuation}
In this section we show that Theorem \ref{theorem:characterizationRobustInsuranceFinance} provides a theoretical foundation for the so-called \emph{two-step evaluation},  cf.\@ \cite{pelsser_stadje_2014}, \cite{dhaene2017fair}. This kind of evaluationis used for the pricing of hybrid products depending on the financial market, as well as on other random sources, e.g.,\@ individual risks depending on the policy holder of an insurance contract. The idea of a two-step evaluation is to combine actuarial techniques with concepts from financial mathematics. In the following, we recap the idea and the concept.  
However, note that in contrast to the existing literature %
we do not fix any probability measure on $(\Omega,\mathscr G),$ but only a prior $ P_0$ on the measurable space $(\Omega,\mathscr F)$.

Let $X$ be a $\ccG$-measurable random variable, representing the discounted payoff of an insurance product. In a first step we consider the $\ccF$-conditional risk of $X$, i.e.\@ $\rho_{\ccF}(X)$, where $\rho_{\ccF}$ is a suitable $\ccF$-conditional risk measure defined on the space of bounded random variables on $(\Omega, \ccG)$, which is denoted by $\mathcal{L}_b(\Omega,\mathscr G)$. This corresponds to an actuarial evaluation resulting in a $\ccF$-measurable random variable $\rho_{\ccF}(X)$ defined on the financial market $(\Omega, \ccF, P_0).$ In the second step we price $\rho_{\ccF}(X)$ on the financial market under an equivalent risk-neutral measure $Q \in \cM_e(\mathbb{F})$. Combining these, we arrive at the following two-step evaluation:
\begin{equation}\label{eq:TwoStepValuationPellser_Stadje}
    \pi\colon  \mathcal{L}_b(\Omega,\mathscr G) \to \mathbb R,\quad  \pi(X)=E_{Q}[ \rho_{\ccF}(-X)].
\end{equation}
As already mentioned in \cite{artzner2022insurance}, the $Q P$-evaluation in \eqref{eq:ConditionPremiumNoArbitrage} is a two-step evaluation with the $\ccF$-conditional risk measure $\rho_{\ccF}(X)= E_{P}[X\vert \mathscr F]$ as well as the new concept of the robust $Q\ccP$-evaluation from Definition \ref{def:RobustQPRule}.   In the following we recap some well-known facts for conditional risk measures in order to highlight that for specific coherent $\ccF$-conditional risk measures $\rho_{\ccF}$ the two-step evaluation in \eqref{eq:TwoStepValuationPellser_Stadje} can be rewritten by a  $Q  \ccP$-evaluation for a suitable choice of $\ccP$.%
\subsection{Robust representation of conditional risk measures} 
For the reader's convenience we recall the definition of conditional risk measures (see e.g., \cite{follmer2016stochastic}).
\begin{definition}\label{example:SpecialCaseDominated}
A map $\rho_{\ccF}:\mathcal{L}_b(\Omega, \ccG) \to L^{\infty}(\Omega, \ccF, P_0)$ is called a \emph{convex $\ccF$-conditional risk measure}, if for all $X,Y \in \mathcal{L}_b(\Omega, \ccG)$ the following holds $P_0$-a.s.:
\begin{enumerate}
    \item \emph{Conditional cash invariance:} $\rho_{\ccF}(X+\bar{X})= \rho_{\ccF}(X)-\bar{X}$ for any $\bar{X} \in \mathcal{L}_b(\Omega, \ccF)$.
    \item \emph{Monotonicity:} $X \leq Y$ implies $ \rho_{\ccF}(X) \geq \rho_{\ccF}(Y)$.
    \item \emph{Normalization:} $\rho_{\ccF}(0)=0$.
    \item \emph{Conditional convexity:} $ \rho_{\ccF}(\lambda X + (1-\lambda)Y) \leq \lambda \rho_{\ccF}(X)+(1-\lambda)\rho_{\ccF}(Y)$ for $\lambda \in \mathcal{L}_b(\Omega, \ccF)$ with $0 \leq \lambda \leq 1$.
\end{enumerate}
A convex $\ccF$-conditional risk measure $\rho_{\ccF}$ is called \emph{coherent} if it also satisfies the following condition:
\begin{enumerate}
    \item[(v)] \emph{Conditional positive homogeneity:} $\rho_{\ccF}(\lambda X) = \lambda \rho_{\ccF}(X)$ for $\lambda \in \mathcal{L}_b(\Omega, \ccF)$ with $\lambda \geq 0.$
\end{enumerate}
We say that  $\rho_{\ccF}$ is continuous from below if 
\begin{enumerate}
    \item[(vi)] \emph{Continuity from below:} $X_n\nearrow X$ pointwise on $\Omega$  implies $\rho_{\ccF}(X_n)\searrow \rho_{\ccF}(X)$.
\end{enumerate}
Moreover, we say that $\rho\colon \mathcal{L}_b(\Omega,\ccG)\to\mathbb R$ is a \emph{(coherent) convex risk measure} if $\mathscr F=\{\Omega,\emptyset\}$. \end{definition}
\begin{definition} Let $S$ be a financial market on $(\Omega,\mathscr F,\mathbb F, P_0)$ and denote by $\mathcal M_e(\mathbb F)$ the set of all martingale measures which are equivalent to $P_0$. A map $\pi\colon \mathcal{L}_b(\Omega,\mathscr G) \to \mathbb R$ is called \emph{two-step evaluation} if there is a $\ccF$-conditional risk measure $\rho_{\ccF}$ and an
equivalent martingale measure $ Q\in\mathcal M_e(\mathbb F)$ such that $\pi(X)=E_{Q}[\rho_{\ccF}(-X)]$ for all $X\in \mathcal{L}_b(\Omega,\mathscr G).$
\end{definition}
We now show that Theorem \ref{theorem:characterizationRobustInsuranceFinance} provides an economic foundation for the pricing of finance-linked insurance products via two-step evaluations. Indeed, by using results from \cite{follmer2016stochastic}, we formulate sufficient condition for a conditional risk measure $\rho_{\ccF}$ in a two-step evaluation $\pi(X)=E_{Q}[\rho_{\ccF}(-X)]$ such that $\pi$ is $ Q \mathscr P$\nolinebreak-evaluation. In this way, the two-step evaluation $\pi$ characterizes the $\ccP$-robust asymptotic insurance-finance arbitrage-free price as characterized in Theorem \ref{theorem:characterizationRobustInsuranceFinance}. 
\begin{lemma}\label{thm:riskmeasurerepresentation}
    Let $\rho_{\ccF}\colon\mathcal{L}_b(\Omega,\mathscr G)\to L^{\infty}(\Omega,\ccF,P_0)$ be a convex $\mathscr F$-conditional risk measure which is continuous from below. Then $\rho_{\ccF}$ is represented by
    \begin{align}\label{eq:representationconvexriskmeasure}
        \rho_{\ccF}(X)=\esssupZ_{P \in \mathfrak{P}^{P_0}(\Omega,\mathscr G)} \left( E_{P}[-X \vert \ccF]-\alpha^{\textnormal{min}}_{\ccF}(P)\right),
    \end{align}
    where the acceptance set $\mathscr A_{\ccF}$, the penalty function $\alpha^{\textnormal{min}}_\mathscr{F}$ and the set of priors $\mathfrak{P}^{\mathbb P}(\Omega,\mathscr G)$ are defined by
    \begin{align*}
        \mathscr A_{\ccF}&:=\left \lbrace X\in\mathcal{L}_b(\Omega,\mathscr G) \ \vert \ \rho_{\ccF}(X)\leq 0\right \rbrace,
    \end{align*}
    \begin{align*}
        \alpha^{\textnormal{min}}_\mathscr{F}(P)&:=\esssupZ_{X\in \mathscr A_{\ccF}}E_{P}[-X \vert \ccF],
    \end{align*}
    and
\begin{align} \label{eq:SetPriorsEqualOnMarektFiltration}
\mathfrak{P}^{P_0}(\Omega,\mathscr G):= \lbrace P \in \mathfrak{P}(\Omega,\cG) \  \vert \ P \vert_{\cF} =  P_0   \rbrace\subseteq \mathfrak{P}_{\mathcal N}(\Omega,\mathscr G). 
\end{align}
If, in addition, $\rho_{\ccF}$ is a coherent $\ccF$-conditional risk measure, then there is a subset $\mathscr P\subseteq \mathfrak{P}^{P_0}(\Omega,\mathscr G)$ such that
    \begin{align}\label{eq:representationcoherentriskmeasure}
        \rho_{\ccF}(X)=\esssupZ_{P \in  \ccP} E_{P}[-X \vert \ccF].
    \end{align}
\end{lemma}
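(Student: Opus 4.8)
The plan is to invoke the standard robust representation theory for conditional convex risk measures, as developed in Chapter 11 of \cite{follmer2016stochastic}, and then specialize it to the coherent case. First I would recall that a convex $\ccF$-conditional risk measure which is continuous from below admits the robust representation \eqref{eq:representationconvexriskmeasure} with the minimal penalty function $\alpha^{\textnormal{min}}_{\ccF}$; this is exactly the conditional analogue of the classical Fenchel--Moreau type duality, where continuity from below upgrades the ``$\leq$'' in the general representation to an equality and guarantees that the essential supremum is attained in an appropriate sense. The one point that needs a small argument, rather than a direct citation, is the claim that the relevant set of priors can be taken to be $\mathfrak{P}^{P_0}(\Omega,\mathscr G)$: by conditional cash invariance the penalty function only depends on the conditional expectation $E_P[\,\cdot\mid\ccF]$, so it suffices to restrict attention to measures $P$ that agree with $P_0$ on $\ccF$; any other measure equivalent to $P_0$ on $\ccF$ can be replaced by its ``$\odot$-modification'' $Q_0\odot P$ (with $Q_0 = P_0$) without changing $E_P[\,\cdot\mid\ccF]$, using Proposition \ref{prop: QPmeasure} and Remark \ref{rem: definitionQP}. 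This also explains the inclusion $\mathfrak{P}^{P_0}(\Omega,\mathscr G)\subseteq\mathfrak{P}_{\mathcal N}(\Omega,\mathscr G)$ asserted in \eqref{eq:SetPriorsEqualOnMarektFiltration}.

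For the coherent case I would argue as follows. Coherence, i.e.\ conditional positive homogeneity (v), forces the minimal penalty function to be $\{0,+\infty\}$-valued in the conditional sense: for $\lambda\in\mathcal L_b(\Omega,\ccF)$ with $\lambda\geq 0$ one has $\alpha^{\textnormal{min}}_{\ccF}(P)=\esssupZ_{X\in\mathscr A_{\ccF}}E_P[-X\mid\ccF]$ and the acceptance set $\mathscr A_{\ccF}$ is a cone (since $\rho_{\ccF}(\lambda X)=\lambda\rho_{\ccF}(X)$), so $X\in\mathscr A_{\ccF}$ implies $\lambda X\in\mathscr A_{\ccF}$ for any $\ccF$-measurable $\lambda\geq 0$, which makes $\alpha^{\textnormal{min}}_{\ccF}(P)$ invariant under multiplication by such $\lambda$ on the event where it is finite; hence it must equal $0$ there. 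I would then define
\[
\ccP := \{P\in\mathfrak{P}^{P_0}(\Omega,\mathscr G)\ :\ \alpha^{\textnormal{min}}_{\ccF}(P)=0\ \ P_0\text{-a.s.}\},
\]
and observe that on this set the representation \eqref{eq:representationconvexriskmeasure} collapses to \eqref{eq:representationcoherentriskmeasure}: priors with $\alpha^{\textnormal{min}}_{\ccF}(P)>0$ on a non-null set are dominated in the essential supremum by priors in $\ccP$, because one can ``project'' any admissible prior onto $\ccP$ by gluing it with an element of $\ccP$ on the bad event without decreasing the conditional expectation. One should also check $\ccP\neq\emptyset$, which follows from $\rho_{\ccF}(0)=0$ together with the fact that the representation must be attained.

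The main obstacle I anticipate is the conditional, pointwise-$\omega$ bookkeeping that the $\sigma$-field $\ccF$ forces on us: the penalty function $\alpha^{\textnormal{min}}_{\ccF}(P)$ is an $\ccF$-measurable random variable, not a number, so the statement ``$\alpha^{\textnormal{min}}_{\ccF}(P)\in\{0,+\infty\}$'' has to be read locally on $\ccF$-measurable events, and the reduction to $\mathfrak{P}^{P_0}(\Omega,\mathscr G)$ together with the gluing/pasting argument must be carried out with enough care that all the essential suprema are taken over sets that are stable under such pasting (so that \cite[Theorem A.37]{follmer2016stochastic} applies and the $\esssupZ$ is actually attained along a sequence). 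Once the pasting stability is set up correctly, the coherent representation \eqref{eq:representationcoherentriskmeasure} follows by the same monotone-limit argument already used in Remark \ref{rem:directedupwards}. I expect the bulk of the proof to be a careful citation of \cite[Chapter 11]{follmer2016stochastic} for the convex case plus this cone/pasting argument for the coherent specialization.
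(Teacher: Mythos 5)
Your proposal takes essentially the same route as the paper: cite the Föllmer--Schied representation theory (Chapter 4 for the unconditional duality, Chapter 11 for the conditionalization), restrict to priors agreeing with $P_0$ on $\ccF$ by conditional cash invariance, and use the cone structure of $\mathscr A_{\ccF}$ to force the penalty into $\{0,+\infty\}$ in the coherent case. One small but conceptually relevant inaccuracy: continuity from below is not what upgrades ``$\leq$'' to ``$=$''; it is what allows passage from finitely additive set functions to countably additive (tight) probability measures, and this matters here precisely because the domain is $\mathcal L_b(\Omega,\mathscr G)$ with no reference measure on $\mathscr G$ --- this is why the paper cites Theorems 4.16 and 4.22 rather than applying Theorem 11.2 directly (which lives on $L^\infty$ with a fixed reference measure, where continuity from above would already suffice, as noted in the remark after the lemma).
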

\begin{proof}
In the unconditional case the first statement follows from Theorem 4.16 and Theorem 4.22 as laid out by \cite{follmer2016stochastic}. Using the same idea as in the proof of Theorem 11.2 in \cite{follmer2016stochastic} yields the conditional statement. 
For the second statement we refer to Corollary 4.19 for the unconditional case. Using similar arguments to those put forth in Corollary 11.6 in \cite{follmer2016stochastic}, the conditional statement follows.
\end{proof}

Thus, Lemma \ref{thm:riskmeasurerepresentation} shows that every two-step evaluation $\pi(X)=E_{Q}[\rho_{\ccF}(-X)]$ given by an equivalent martingale measure $ Q\in\mathcal M_e(\mathbb F)$ and a coherent $\ccF$-conditional risk measure $\rho_{\ccF}$ which is continuous from below can be written as $ Q \mathscr P$\nolinebreak-evaluation for a suitable subset $\ccP \subseteq \mathfrak{P}^{P_0}(\Omega, \ccG)$.

\begin{remark}
If we a priori fix the nullsets on $(\Omega,\mathscr G)$ by a probability measure $P$ on $(\Omega,\mathscr G)$ such that $P\vert_{\mathscr F}= P_0$, then we can define the conditional risk measure $\rho_{\ccF}$ on $L^{\infty}(\Omega,\mathscr G,P)$, instead of working with $\mathcal{L}_b(\Omega, \ccG).$  In this case $\rho_{\ccF}$ only needs to be continuous from above, as opposed to satisfying the stronger assumption of continuity from below, in order to have a representation as in Lemma \ref{thm:riskmeasurerepresentation}, cf. Theorem 4.33 and Theorem 11.2 in \cite{follmer2016stochastic}. The drawback of this approach is that we consider uncertainty in a narrow sense because we fix all relevant nullsets on $(\Omega, \ccG)$ using a single probability measure $P$. Nevertheless, this also leads to a robust pricing problem in the spirit of Section \ref{sec:RobustAsymptoticInsuranceFinanceArbitrage}, which will be discussed in more detail in Remark \ref{remark:IIDCopiesProblems}.
\end{remark}
Next, we provide some examples for the set $\ccP \subseteq \mathfrak{P}^{P_0}(\Omega, \ccG)$ in Lemma \ref{thm:riskmeasurerepresentation} and the associated $\ccF$-conditional risk measure in Equation \eqref{eq:representationcoherentriskmeasure}.
\begin{example}\label{example:conditionalaverageatrisk}
Let $P$ be a probability measure on $(\Omega,\mathscr G)$ such that $P\vert_{\ccF}= P_0$. We consider the set of priors $\ccP_{\lambda} \subseteq \mathfrak{P}^{ P_0}(\Omega, \ccG)$ given by
\begin{align}
    \ccP_{\lambda}\coloneqq \Big \lbrace \tilde{P} \in \mathfrak{P}^{ P_0}(\Omega, \ccG)\ \vert\  \tilde{P} \ll {P}\text{ with } d\tilde{P}/d{P} \leq\lambda^{-1} \ {P}\text{-a.s.}  \Big \rbrace \quad \text{ for } \lambda \in (0,1).
\end{align}
In this case, the associated risk measure $\rho_{\ccF}(X)$ is the \emph{conditional Average value at risk}, denoted by $\textnormal{AV@R}_{\lambda}(X \vert \ccF)$, see also Definition 11.8 in \cite{follmer2016stochastic}. Note that in this case the set $\ccP_{\lambda}$ is dominated by the probability measure $P$. 
\end{example}

\begin{example}\label{example:coherentEntropicRiskMeasure}
Let $P$ be a probability measure on $(\Omega,\mathscr G)$ such that $P\vert_{\ccF}= P_0$. We consider the set of priors $\ccP_{c} \subseteq \mathfrak{P}^{P_0}(\Omega, \ccG)$ for $c > 0$ given by
\begin{align}
    \ccP_{\!c}\coloneqq \Big \lbrace \tilde{P} \in \mathfrak{P}^{P_0}(\Omega, \ccG)\ \vert\  H(\tilde{P} \vert  P) \leq c \Big \rbrace,
\end{align}
where $H(\tilde{P} \vert P)$ denotes the \emph{relative entropy of $\tilde{P}$ with respect to $P$} and is defined by
\[
H(\tilde{P} \vert P ):=\left\{%
\begin{array}{ll}
    E_{\tilde{P}}\Big[ \log \frac{d \tilde{P}}{dP}\Big], & \hbox{if } \tilde{P} \ll P \\
    + \infty, & \hbox{otherwise.} \\
\end{array}%
\right.
\]
Here, the associated risk measure $\rho_{\ccF}(X)$ is the \emph{coherent entropic risk measure}, introduced in \cite{follmer2011entropic}. As in Example \ref{example:conditionalaverageatrisk}, the set $\ccP_{\!c}$ is dominated by the probability measure \nolinebreak $P$. 
\end{example}
In the following Proposition we show that pricing with  two-step evaluations leads to arbitrage-free premiums in the sense of Section \ref{sec:RobustAsymptoticInsuranceFinanceArbitrage}. This remarkable result is a consequence of Lemma \ref{thm:riskmeasurerepresentation} and Theorem \ref{theorem:characterizationRobustInsuranceFinance}.
\begin{proposition}
Let $S$ be a financial market on $(\Omega,\mathscr F)$ and let $\pi$ be a two-step evaluation with $\mathscr F$-conditional convex risk measure of the form \eqref{eq:representationcoherentriskmeasure} and set $\mathscr P=\{P\in \ccP_{\mathcal N}(\Omega,\mathscr G)\vert \alpha^{\min}_{\mathscr F}(P)<\infty \}$. Assume that $\mathcal X=(X^i)_{i\in\mathbb N}$ is a sequence of insurance benefits  fulfilling Assumption \ref{assum:ConditionalExpectations} and assume that $p<\pi(X^1)$.  Then there is \textnormal{NRIFA}($\ccP$) with respect to the insurance-finance market $(S,\mathcal X,p)$. 
\end{proposition}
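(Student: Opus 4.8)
The plan is to reduce the statement to condition~$(ii)$ of Theorem~\ref{theorem:characterizationRobustInsuranceFinance} applied with the prior set $\ccP=\mathscr P$. By definition of a two-step evaluation there is $Q\in\cM_e(\mathbb F)$ with $\pi(X^1)=E_{Q}[\rho_{\ccF}(-X^1)]$, and by hypothesis $\rho_{\ccF}$ is of the form \eqref{eq:representationcoherentriskmeasure}, namely $\rho_{\ccF}(Y)=\esssupN_{P\in\ccP'}E_{P}[-Y\vert\ccF]$ for some $\ccP'\subseteq\mathfrak P^{P_0}(\Omega,\mathscr G)$; in particular $\rho_{\ccF}(-X^1)=\esssupN_{P\in\ccP'}E_{P}[X^1\vert\ccF]$ $P_0$-a.s. (I tacitly use here that $X^1\in\mathcal L_b(\Omega,\ccG)$, implicit in the hypothesis $p<\pi(X^1)$, and that $X^1\ge 0$ $\mathscr P$-quasi surely by Assumption~\ref{assum:ConditionalExpectations}.)

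The key step is the inclusion $\ccP'\subseteq\mathscr P$. Fix $P\in\ccP'$. Then $P\in\mathfrak P^{P_0}(\Omega,\mathscr G)\subseteq\mathfrak P_{\mathcal N}(\Omega,\mathscr G)$, and for every $X\in\mathscr A_{\ccF}$ the representation gives $E_{P}[-X\vert\ccF]\le\rho_{\ccF}(X)\le 0$ $P_0$-a.s. Taking the essential supremum over $X\in\mathscr A_{\ccF}$ and using $0\in\mathscr A_{\ccF}$ shows $\alpha^{\min}_{\mathscr F}(P)=0$, hence $P\in\mathscr P$. Since $\{E_{P}[X^1\vert\ccF]:P\in\mathscr P\}$ is a family of $\ccF$-measurable random variables and all priors in $\mathscr P$ carry the $\mathscr F$-nullsets $\mathcal N$, Lemma~\ref{lem2.6} makes $\esssupN_{P\in\mathscr P}E_{P}[X^1\vert\ccF]$ unambiguous, and the inclusion just established gives
\[
\rho_{\ccF}(-X^1)=\esssupN_{P\in\ccP'}E_{P}[X^1\vert\ccF]\ \le\ \esssupN_{P\in\mathscr P}E_{P}[X^1\vert\ccF]\qquad P_0\text{-a.s.}
\]

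Since this inequality is between $\ccF$-measurable random variables and $Q\in\cM_e(\mathbb F)$ has $\mathscr F$-nullsets $\mathcal N$, it also holds $Q$-a.s.; taking $E_Q$, using monotonicity and the hypothesis $p<\pi(X^1)$, I obtain
\[
p\ <\ \pi(X^1)\ =\ E_{Q}\big[\rho_{\ccF}(-X^1)\big]\ \le\ E_{Q}\Big[\esssupN_{P\in\mathscr P}E_{P}[X^1\vert\ccF]\Big]\ =\ E_{Q\odot\ccP}[X^1].
\]
This is precisely statement~$(ii)$ of Theorem~\ref{theorem:characterizationRobustInsuranceFinance} for the prior set $\ccP=\mathscr P$ (which is contained in $\mathfrak P_{\mathcal N}(\Omega,\mathscr G)$, so that the theorem is available) and the martingale measure $Q$. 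As $\mathcal X$ fulfils Assumption~\ref{assum:ConditionalExpectations} relative to $\mathscr P$ by hypothesis, the theorem applies and yields $\textnormal{NRIFA}(\mathscr P)$, that is, $\textnormal{NRIFA}(\ccP)$.

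The only genuinely delicate point is the bookkeeping in the second step: reconciling the abstract representing set $\ccP'$ produced by Lemma~\ref{thm:riskmeasurerepresentation} — a priori only located in $\mathfrak P^{P_0}(\Omega,\mathscr G)$ — with the penalty-finite set $\mathscr P\subseteq\mathfrak P_{\mathcal N}(\Omega,\mathscr G)$ appearing in the statement, and checking via Lemma~\ref{lem2.6} that the uniform essential supremum over the possibly larger family $\mathscr P$ is well defined. It is worth emphasising that only the inclusion $\ccP'\subseteq\mathscr P$, hence only one inequality, is needed: enlarging the prior set can only raise the $Q\ccP$-evaluation, so we never need to establish the equality $\pi(X^1)=E_{Q\odot\ccP}[X^1]$. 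Everything else is monotonicity of conditional expectation together with a direct appeal to the main theorem.
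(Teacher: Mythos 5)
Your proof is correct and takes essentially the same route as the paper: the paper's own proof is a one-liner observing that the $Q\ccP$-evaluation dominates the two-step evaluation $\pi$, so that $p<\pi(X^1)$ implies condition~$(ii)$ of Theorem~\ref{theorem:characterizationRobustInsuranceFinance}. You have simply supplied the bookkeeping the paper leaves implicit — extracting the representing set $\ccP'$ from Lemma~\ref{thm:riskmeasurerepresentation}, verifying the inclusion $\ccP'\subseteq\mathscr P$ via $\alpha^{\min}_{\mathscr F}(P)=0$ on $\ccP'$, and invoking Lemma~\ref{lem2.6} to make the uniform essential supremum well-posed — and correctly noted that only the one-sided bound is needed.
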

\begin{proof}
The result is a direct consequence of Theorem \ref{theorem:characterizationRobustInsuranceFinance}, since the $Q\ccP$-evaluation is an upper bound for the two-step evaluation $\pi$ and the chosen premium $p$ is even smaller by assumption.
\end{proof}

\subsection{Construction of conditional iid copies} \label{sec:ConstructionIID}
Our next goal is to apply Theorem \ref{theorem:characterizationRobustInsuranceFinance} in the context of a robust two-step evaluation. More specifically, given a random variable $\tilde{X}$ describing an insurance benefit, we determine a robust arbitrage-free premium $p$ for $\tilde{X}$ by using Theorem \ref{theorem:characterizationRobustInsuranceFinance} and by taking into account some actuarial constraints, which are reflected by the set of priors $\ccP \subseteq \mathfrak{P}_{\mathcal N}(\Omega, \ccG)$ (see e.g.,\@ the sets $\ccP_{\lambda}$ and $\ccP_{c}$ in Example \ref{example:conditionalaverageatrisk} and \ref{example:coherentEntropicRiskMeasure}, respectively).

To do so, two factors must to be considered. First, the assumptions of Theorem \ref{theorem:characterizationRobustInsuranceFinance} must be satisfied. To this end, we construct a sequence of benefits $(X^j)_{j \in \mathbb{N}}$ which are copies of $\tilde{X}$ and which satisfy Assumption \ref{assum:ConditionalExpectations}. Second, the set of priors $\ccP$ must be shifted to the product space where we model the benefits $(X^j)_{j \in \mathbb{N}}$. We observe that these steps contain some subtleties which we discuss in more detail in Remark \ref{remark:IIDCopiesProblems} after formally introducing the setting. \\

Let $(\Omega^F,\mathscr F^F)$ and $(\Omega^I,\mathscr F^I)$ be two measurable spaces on which we model purely financial and purely insurance events, respectively. On the product space $(\Omega^F\times\Omega^I,\mathscr F^F\otimes \mathscr F^I)$ we introduce the stochastic process $\tilde S=(\tilde S_t)_{t=0,..., T}$ and the random variable $\tilde X$ describing the financial market and a single insurance benefit, respectively. Moreover, let $ P_0$ be a measure on $(\Omega^F\times \Omega^I,\mathscr F^{F}\otimes \{\emptyset,\Omega^I\})$ which determines the nullsets in $\mathscr F^F\otimes\{\emptyset,\Omega\}$ and $\ccP$ be a set of probability measures on $(\Omega^F\times\Omega^I,\mathscr F^F\otimes \mathscr F^I)$ such that 
$$P\vert_{\mathscr F^{F}\otimes \{\emptyset,\Omega^I\}}\sim P_0 \text{ for all } P\in \ccP.$$ 
We now shift $\ccP$ to a set of priors $\mu_{\ccP}$ on $(\Omega^F\times(\Omega^I)^{\mathbb N},\mathscr F^F\otimes (\mathscr F^I)^{\otimes \mathbb N})$. Furthermore, on this space we copy the financial market $\tilde S$ to $S$ and construct insurance benefits $(X^j)_{j\in\mathbb N}$ which are \emph{iid} conditionally on $S$ such that the law of $(S,X^j)$ for $j \in \mathbb{N}$ under every $\mu^P \in \mu_{\ccP}$ coincides with the law of $(\tilde{S},\tilde{X})$ under $P \in \ccP$. %
\begin{remark} \label{remark:IIDCopiesProblems}
We also emphasize that if the set $\ccP$ is dominated by a measure $P \in \ccP$, as is the case in  Example \ref{example:conditionalaverageatrisk} and \ref{example:coherentEntropicRiskMeasure}, this will no longer hold for the shifted set $\mu_{\ccP}$. The reason for this is that absolute continuity of measures is not stable under countable products. Therefore, the seemingly not robust problem in the dominated case on $(\Omega^F\times\Omega^I,\mathscr F^F\otimes \mathscr F^I,\ccP)$ is indeed a robust pricing problem on $(\Omega^F\times(\Omega^I)^{\mathbb N},\mathscr F^F\otimes (\mathscr F^I)^{\otimes \mathbb N},\mu_{\ccP})$. 
\end{remark}
To be precise, we define 
\begin{align*}
   \tilde\Omega &\coloneq \Omega^F\times \Omega^I, &\Omega &\coloneq \Omega^F\times (\Omega^I)^{\mathbb N},\\
   \tilde{\mathscr F} &\coloneqq \mathscr F^F\otimes \{\Omega^I,\emptyset\}, &\mathscr F &\coloneqq \mathscr F^F\otimes \{\Omega^I,\emptyset\}^{\otimes \mathbb N},\\
    \tilde{\mathscr G} &\coloneqq \mathscr F^F\otimes \mathscr F^I, &\mathscr G &\coloneqq \mathscr F^F\otimes (\mathscr F^I)^{\otimes \mathbb N}.
\end{align*}
We denote by $\tilde\omega=(\tilde\omega^F,\tilde\omega^I)$ an element in $\tilde \Omega$ and by $\omega=(\omega^F,(\omega^I_j)_{j\in\mathbb N})$ an element in $\Omega$. Furthermore, we introduce the following projections on $\tilde{\Omega}$:
\begin{align*}
   \tilde\pi_{\Omega^F}&\colon \tilde \Omega  \to \Omega^F\!\!, \quad\tilde{\pi}_{\Omega^F}(\tilde\omega)=\tilde\omega^F\!\!,\\
    \tilde\pi_{\Omega^I}&\colon \tilde \Omega\to \Omega^I,\quad\tilde{\pi}_{\Omega^I_j}(\tilde\omega)=\tilde\omega^I,
\end{align*}
as well as the following the projections on $\Omega$:
\begin{align*}
    \pi_{\Omega^F}&\colon \Omega \to \Omega^F\!\!, \quad\pi_{\Omega^F}(\omega)=\omega^F\!\!,\\
    \pi_{\Omega^I_j}&\colon \Omega \to \Omega^I,\quad\pi_{\Omega^I_j}(\omega)=\omega^I_j.
\end{align*}
Given a measure $P$ on $(\tilde\Omega,\tilde{\mathscr G})$, the aim is to define a probability measure $\mu_P$ on $(\Omega,\mathscr G)$ which fulfills the following properties:
\begin{align}\label{eq:mupcondition1}
    \text{The law of }(\pi_{\Omega^F},\pi_{\Omega^I_j}) \text{ under }\mu_P \text{ equals } P \text{ for all }j\in\mathbb N,
\end{align}
and 
\begin{align}\label{eq:mupcondition2}
    (\pi_{\Omega^I_j})_{j\in\mathbb N}\text{ are  $\mathscr F$-conditionally independent under }\mu_P.
\end{align}
If $P$ is a product measure given by $P=P_F\otimes P_I$ for measures $P_F$ on $(\Omega^F,\mathscr F^F)$ and $P_I$ on $(\Omega^I,\mathscr F^I)$, then the measure $\mu_P$ can be defined by $\mu_P=P_F\otimes (P_I)^{\otimes \mathbb N}$. Otherwise, we construct $\mu_P$ via disintegration as follows. For some measure $P$ on $(\tilde\Omega,\tilde{\mathscr G})$ the measure $\mu_{p}$ is defined as 
\begin{align}\label{eq:disintegration}
    \mu_P(A\times B)\coloneqq\int \mathds{1}_{\tilde\pi^{-1}_{\Omega^F}(A)}(P^{\tilde\pi_{\Omega^I}\vert \tilde{\mathscr F}})^{\otimes\mathbb N}(B)dP \quad\text{for }A\in \mathscr F^F \text{ and }B\in  (\mathscr F^I)^{\otimes \mathbb N}.
\end{align}
where $P^{\tilde\pi_{\Omega^I}\vert \tilde{\mathscr F}}$ denotes the regular version of the conditional probability of $\tilde\pi_{\Omega^I}$ given $\tilde{\mathscr F}$ (see e.g.,\@ \cite[Chapter 8]{kallenberg1997foundations}). Note that we implicitly assume its existence. This is no restriction, however, because $\Omega^F$, representing a financial market with $d+1$ assets and $T$ time steps, can always be assumed to have the form $\Omega^F=\mathbb R^{(d+1)\times(T+1)}$ and, thus, is a Borel space. Moreover, using a monotone class argument, it follows that
\begin{align*}
    (P^{\tilde\pi_{\Omega^I}\vert \tilde{\mathscr F}})^{\otimes\mathbb N}\colon &(\mathscr F^I)^{\otimes\mathbb N}\times \tilde \Omega\to [0,1]\\
    &(B,\tilde\omega)\mapsto (P^{\tilde\pi_{\Omega^I}\vert \tilde{\mathscr F}}(\cdot,\tilde\omega))^{\otimes\mathbb N}(B)
\end{align*}
is a probability kernel from $(\tilde{\Omega},\tilde{\mathscr G})$ to $(\Omega,\mathscr G)$ and, thus, the measure $\mu_P$ is well defined. If $P$ is a measure on $(\tilde\Omega, \tilde{\mathscr F})$, then \eqref{eq:disintegration} defines a measure on $(\Omega,\mathscr F)$. In this case we have  $B=\Omega^{\mathbb N}$.  Note that by using \eqref{eq:disintegration} we get the following: 
\begin{align*}
    \mu_{P}(A\times (\Omega^I)^{\mathbb N})= P(A\times \Omega^I)\quad\text{ for all }A\in \mathscr F^F
\end{align*}
and thus $P\vert_{\tilde {\mathscr F}}\sim P_0 $ implies $\mu_P\vert_{\mathscr F}\sim\mu_{P_0}$ for all $P\in \ccP$.

Let $\tilde{\mathbb F}=(\tilde{\mathscr F_t})_{t\leq T}$ with $\tilde{\mathscr F_t}\coloneqq \mathscr F_t^F\vee \{\Omega^I,\emptyset\}$, be a filtration on $(\tilde{\Omega},\tilde{\mathscr F})$. Hereafter, we assume that $ \tilde S=(\tilde S_t)_{t\leq T}$ is a $\tilde{\mathbb F}$-adapted stochastic process on $(\tilde{\Omega},\tilde{\mathscr F})$ describing the prices in a financial market. Moreover, let $\mathcal M_{e}(\tilde{\mathbb F})$ be the set of all martingale measures on $(\tilde{\Omega}, \tilde{\mathcal{F}})$ which are equivalent some fixed measure $P_0$ on $(\tilde\Omega,\tilde{\mathscr F})$. The insurance and financial filtration on $(\tilde{\Omega},\tilde{\mathscr G})$ is denoted by $\tilde{\mathbb G}=(\tilde{\mathcal{G}}_t)_{t \leq T}$ and the insurance benefit, a random variable on $(\tilde{\Omega},\tilde{\mathscr G})$, is denoted by $\tilde X$. 
In order to shift all quantities to $(\Omega,\mathscr G)$ we define
\begin{align*}
\mathbb F &=(\mathscr F_t)_{t\leq T} \quad \text{with  } \quad \mathscr F_t\coloneqq \mathscr F_t^F\vee \{\Omega^I,\emptyset\}^{\otimes\mathbb N},\\
\mathbb G&=(\mathscr G_t)_{t\leq T} \quad  \text{ with }  \quad \mathscr G_t\coloneqq \sigma((\pi_{\Omega^F},\pi_{\Omega^I_j})  \colon   \Omega\to (\tilde \Omega,\tilde{\mathscr G}_t)  \vert  \ j\in\mathbb N ),
\end{align*}
and
\begin{align*}
    S_t \coloneqq  \tilde{S_t}\circ(\pi_{\Omega^F},\pi_{\Omega^I_1})&=\tilde{S_t}\circ(\pi_{\Omega^F},\pi_{\Omega^I_j}) \quad \text{ for all } j\in\mathbb N, \,t=0,...,T,\\
    X^j&\coloneqq  \tilde X\circ (\pi_{\Omega^F},\pi_{\Omega^I_j}) \quad \text{ for all }j\in \mathbb N. 
\end{align*}
Note that $\tilde{S_t}$ is assumed to be measurable with respect to  $\tilde{\mathscr F}_t\subseteq \tilde{\mathscr F}$ and, thus, it does not depend on the second coordinate $\tilde{\omega}^I.$ 

We show that the measure $\mu_P$ defined in \eqref{eq:disintegration} satisfies the desired properties in \eqref{eq:mupcondition1} and \eqref{eq:mupcondition2} and  is  the unique measure with this property. For the reader's convenience we provide the proofs of these results in detail in the Appendix.
\begin{proposition}
The measure $\mu_P$, as defined by \eqref{eq:disintegration}, is the unique measure on $(\Omega,\mathscr G)$ which fulfills \eqref{eq:mupcondition1} and \eqref{eq:mupcondition2}.
\end{proposition}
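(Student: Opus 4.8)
The plan is to verify the two required properties \eqref{eq:mupcondition1} and \eqref{eq:mupcondition2} directly from the disintegration formula \eqref{eq:disintegration}, and then to argue uniqueness by a $\pi$-$\lambda$ (monotone class) argument on a generating family of rectangles. First I would record the basic computation that for $A \in \mathscr F^F$ and $B_1,\dots,B_m \in \mathscr F^I$ (and $B_j = \Omega^I$ for $j > m$), the cylinder set $C = A \times B_1 \times \cdots \times B_m \times \Omega^I \times \cdots$ satisfies
\begin{align*}
\mu_P(C) = \int_{\tilde\pi_{\Omega^F}^{-1}(A)} \prod_{j=1}^m P^{\tilde\pi_{\Omega^I}\vert\tilde{\mathscr F}}(B_j,\tilde\omega)\, dP(\tilde\omega),
\end{align*}
which is just \eqref{eq:disintegration} unwound on a rectangle in $(\mathscr F^I)^{\otimes\mathbb N}$. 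Such cylinder sets, together with the fact that $\mathscr G = \sigma((\pi_{\Omega^F},\pi_{\Omega^I_j}): j \in \mathbb N)$, form a $\pi$-system generating $\mathscr G$, so any two probability measures agreeing on them agree on $\mathscr G$; this will give uniqueness once the two defining properties are shown to pin down the values on this $\pi$-system.

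For \eqref{eq:mupcondition1}, fix $j \in \mathbb N$ and take $A \times B$ with $A \in \mathscr F^F$, $B \in \mathscr F^I$; the law of $(\pi_{\Omega^F},\pi_{\Omega^I_j})$ under $\mu_P$ evaluated on $A \times B$ is $\mu_P$ of the cylinder whose $j$-th insurance coordinate is constrained to $B$ and all others are free. By the displayed computation this equals $\int_{\tilde\pi_{\Omega^F}^{-1}(A)} P^{\tilde\pi_{\Omega^I}\vert\tilde{\mathscr F}}(B,\tilde\omega)\, dP(\tilde\omega)$, and since $\tilde\pi_{\Omega^F}^{-1}(A)$ is $\tilde{\mathscr F}$-measurable and $P^{\tilde\pi_{\Omega^I}\vert\tilde{\mathscr F}}$ is by definition the regular conditional probability of $\tilde\pi_{\Omega^I}$ given $\tilde{\mathscr F}$, the defining property of conditional expectation gives $\int_{\tilde\pi_{\Omega^F}^{-1}(A)} P^{\tilde\pi_{\Omega^I}\vert\tilde{\mathscr F}}(B,\cdot)\, dP = P(\tilde\pi_{\Omega^F}^{-1}(A) \cap \tilde\pi_{\Omega^I}^{-1}(B)) = P(A \times B)$. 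This is exactly the assertion that the law of $(\pi_{\Omega^F},\pi_{\Omega^I_j})$ under $\mu_P$ is $P$; independence of $j$ is automatic since the formula does not depend on which coordinate we singled out.

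For \eqref{eq:mupcondition2}, I would show that conditionally on $\mathscr F$ the insurance coordinates are independent by checking that for a cylinder $C$ as above,
\begin{align*}
\mu_P\big(C \,\big|\, \mathscr F\big) = \prod_{j=1}^m \mu_P\big(\pi_{\Omega^I_j}^{-1}(B_j) \,\big|\, \mathscr F\big) \quad \mu_P\text{-a.s.},
\end{align*}
and the product structure $(P^{\tilde\pi_{\Omega^I}\vert\tilde{\mathscr F}})^{\otimes\mathbb N}$ inside \eqref{eq:disintegration} makes this transparent: the kernel factorizes over the insurance coordinates, so the conditional law of $(\pi_{\Omega^I_j})_{j}$ given $\mathscr F$ is a countable product of copies of the same one-coordinate kernel. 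The one technical point here is to relate the conditioning on $\mathscr F = \mathscr F^F \otimes \{\Omega^I,\emptyset\}^{\otimes\mathbb N}$ on $\Omega$ to the conditioning on $\tilde{\mathscr F}$ on $\tilde\Omega$ through the map $\pi_{\Omega^F}$; since $\mathscr F$ is generated by $\pi_{\Omega^F}$ and $\mu_P \circ \pi_{\Omega^F}^{-1}$ coincides with $P \circ \tilde\pi_{\Omega^F}^{-1}$ by the same disintegration identity, the conditional expectations transport correctly. I expect the main obstacle to be precisely this bookkeeping — making rigorous that the $\tilde{\mathscr F}$-conditional kernel on $\tilde\Omega$ becomes the $\mathscr F$-conditional kernel on $\Omega$ under the projection, and handling the fact that $(P^{\tilde\pi_{\Omega^I}\vert\tilde{\mathscr F}})^{\otimes\mathbb N}$ must be verified to be a genuine probability kernel (which the paper already notes follows from $\Omega^F$ being Borel and a monotone class argument). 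Once the kernel identities are in place, \eqref{eq:mupcondition1} and \eqref{eq:mupcondition2} determine $\mu_P$ on all cylinders, and the $\pi$-$\lambda$ theorem upgrades this to uniqueness on $\mathscr G$.
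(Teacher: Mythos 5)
Your proposal follows essentially the same route as the paper: verify \eqref{eq:mupcondition1} and \eqref{eq:mupcondition2} for $\mu_P$ directly from the disintegration formula, then for uniqueness compute the value of any candidate measure on the $\cap$-stable family of finite-dimensional cylinders and invoke a Dynkin/monotone-class argument. The one step you flag as ``bookkeeping'' — that for \emph{any} measure $\nu_P$ satisfying \eqref{eq:mupcondition1} the $\mathscr F$-conditional law of $\pi_{\Omega^I_j}$ under $\nu_P$ equals the lift of $E_P[\mathds 1_{\{\tilde\pi_{\Omega^I}\in\cdot\}}\mid\tilde{\mathscr F}]$ — is exactly what the paper isolates as a separate lemma (Proposition~\ref{prop:conditionalexpectation}) and is the load-bearing ingredient that lets conditional independence from \eqref{eq:mupcondition2} collapse $\nu_P$ on cylinders to an expression depending only on $P$; making that lemma explicit for an arbitrary $\nu_P$ (not just $\mu_P$) is the only thing missing from your sketch to turn it into a complete proof.
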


Next, we characterize the set of all equivalent martingales measures on $(\Omega, \ccF).$
\begin{proposition}\label{prop:martingalemeasuretransformation}
The set $\mathcal M_e(\mathbb F)$ of all measures on $(\Omega,\mathscr F)$ such that $S$ is a $\mathbb F$-martingale and which are equivalent to $\mu_{P_0}$ 
is given by
\begin{align*}
    \mathcal M_e(\mathbb F)=\{\mu_{ Q}\vert  Q\in\mathcal M_e(\tilde{\mathbb F})\}.
\end{align*}
\end{proposition}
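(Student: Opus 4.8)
The plan is to prove the two inclusions $\{\mu_Q : Q \in \mathcal M_e(\tilde{\mathbb F})\} \subseteq \mathcal M_e(\mathbb F)$ and $\mathcal M_e(\mathbb F) \subseteq \{\mu_Q : Q \in \mathcal M_e(\tilde{\mathbb F})\}$ separately, exploiting throughout the fact that the financial market $\tilde S$ (and hence $S$) does not depend on the insurance coordinate, so everything relevant lives on the sub-$\sigma$-algebra $\tilde{\mathscr F}$ (respectively $\mathscr F$).

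For the first inclusion, I would fix $Q \in \mathcal M_e(\tilde{\mathbb F})$ and check the two defining properties of $\mathcal M_e(\mathbb F)$ for $\mu_Q$. Equivalence to $\mu_{P_0}$ follows from the remark just before the proposition: $Q \sim P_0$ on $(\tilde\Omega,\tilde{\mathscr F})$ implies $\mu_Q\vert_{\mathscr F} \sim \mu_{P_0}$ by the displayed identity $\mu_P(A \times (\Omega^I)^{\mathbb N}) = P(A \times \Omega^I)$ for $A \in \mathscr F^F$. For the martingale property, note that $S_t = \tilde S_t \circ (\pi_{\Omega^F},\pi_{\Omega^I_1})$ and that, by \eqref{eq:mupcondition1}, the law of $(\pi_{\Omega^F},\pi_{\Omega^I_1})$ under $\mu_Q$ equals $Q$; since $\tilde S$ is $\tilde{\mathbb F}$-adapted and $\tilde{\mathscr F}_t$ depends only on the first coordinate, the $\mathbb F$-martingale property of $S$ under $\mu_Q$ reduces to the $\tilde{\mathbb F}$-martingale property of $\tilde S$ under $Q$, i.e. one checks $E_{\mu_Q}[(S_t - S_{s})\mathds 1_F] = 0$ for $F \in \mathscr F_s$ by pushing forward to $(\tilde\Omega,\tilde{\mathscr F})$ and using $Q \in \mathcal M_e(\tilde{\mathbb F})$.

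For the converse inclusion — which I expect to be the main obstacle — I would take an arbitrary $\hat Q \in \mathcal M_e(\mathbb F)$ and must produce $Q \in \mathcal M_e(\tilde{\mathbb F})$ with $\hat Q = \mu_Q$. The natural candidate is $Q := \hat Q \circ (\pi_{\Omega^F},\pi_{\Omega^I_1})^{-1}$, the law of the first financial-and-insurance block under $\hat Q$; equivalently, since $\mathscr F = \mathscr F^F \otimes \{\Omega^I,\emptyset\}^{\otimes\mathbb N}$ carries no insurance information at all, $Q$ is really just $\hat Q$ read on $(\Omega^F, \mathscr F^F)$ transported to $(\tilde\Omega,\tilde{\mathscr F})$. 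One then checks $Q \sim P_0$ (from $\hat Q \sim \mu_{P_0}$ and the pushforward identity above) and that $\tilde S$ is a $(Q,\tilde{\mathbb F})$-martingale (again because $S$ is a $(\hat Q,\mathbb F)$-martingale and the two $\sigma$-algebras agree on the financial coordinate), so $Q \in \mathcal M_e(\tilde{\mathbb F})$. It then remains to verify $\hat Q = \mu_Q$. Here the key point is that $\mathscr F$ contains \emph{no} insurance $\sigma$-algebra, so $\hat Q$ as a measure on $(\Omega,\mathscr F)$ is entirely determined by its financial marginal — hence automatically $\hat Q = \mu_Q$, since $\mu_Q$ is by construction (via \eqref{eq:disintegration} with $B = \Omega^{\mathbb N}$) the measure on $(\Omega,\mathscr F)$ whose financial marginal is $Q\vert_{\tilde{\mathscr F}}$. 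The delicate step to get right is precisely this: one must argue carefully that on $\mathscr F$ — as opposed to $\mathscr G$ — there is nothing beyond the financial marginal, so uniqueness is immediate and one does not need the full disintegration machinery or the $\mathscr F$-conditional independence from \eqref{eq:mupcondition2}; those would only be needed if the claim were about $\mathcal M_e(\mathbb G)$.
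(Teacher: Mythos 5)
Your proposal is correct and follows essentially the same route as the paper's proof: the forward inclusion is handled by showing $E_{\mu_Q}[S_t\mid\mathscr F_s]=E_Q[\tilde S_t\mid\tilde{\mathscr F}_s]\circ(\pi_{\Omega^F},\pi_{\Omega^I_1})$ via pushforward, and the converse is handled by defining $Q$ as the financial marginal of $\hat Q$ transported to $(\tilde\Omega,\tilde{\mathscr F})$ and running the same calculation in reverse. Your explicit remark that on $\mathscr F$ (as opposed to $\mathscr G$) a measure is determined by its financial marginal, so $\hat Q=\mu_Q$ needs no disintegration or conditional-independence argument, is exactly what the paper's terse phrase ``by changing the roles of $\pi_{\Omega^F}$ and $\tilde\pi_{\Omega^F}$'' is implicitly relying on.
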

\begin{proposition}\label{prop:premiumchapter3}
For any $P \in \ccP$ and $ Q\in \mathcal M_e(\tilde{\mathbb F})$ we have the following:
\begin{align*}
E_{ Q}[E_{P}[\tilde X\vert \tilde{\mathscr F}] ]&=E_{\mu_{ Q}}[E_{\mu_P}[X^1\vert\mathscr F] ],\\
E_{ Q}[\esssupZ_{P \in \ccP} E_{P}[\tilde X\vert \tilde{\mathscr F}] ]&=E_{\mu_{ Q}}\left[\esssupZ_{P \in \ccP}E_{\mu_P}[X^1\vert\mathscr F] \right].
\end{align*}
\end{proposition}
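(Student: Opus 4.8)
The two identities have the same structure, so the plan is to prove the first one carefully and then indicate how the essential‑supremum version follows by combining it with Lemma~\ref{lem2.6}. The basic idea is to reduce everything to a computation on the "small" space $(\tilde\Omega,\tilde{\mathscr G})$ using the defining property of $\mu_P$ (Proposition above, i.e.\ \eqref{eq:mupcondition1}--\eqref{eq:mupcondition2}) and of $\mu_Q$ (Proposition~\ref{prop:martingalemeasuretransformation}).

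\medskip
\noindent\textbf{Step 1: the inner conditional expectation.} First I would show that
$$
E_{\mu_P}[X^1\mid\mathscr F]=E_P[\tilde X\mid\tilde{\mathscr F}]\circ(\pi_{\Omega^F},\pi_{\Omega^I_1})\quad \mu_P\text{-a.s.}
$$
Since $X^1=\tilde X\circ(\pi_{\Omega^F},\pi_{\Omega^I_1})$ and $\mathscr F=\mathscr F^F\otimes\{\Omega^I,\emptyset\}^{\otimes\mathbb N}$ is, up to the $\mu_P$-completion, exactly the $\sigma$-algebra generated by $\pi_{\Omega^F}$ pulled back from $\tilde{\mathscr F}$, the claim is a standard change‑of‑variables statement for conditional expectations under the measurable map $(\pi_{\Omega^F},\pi_{\Omega^I_1})\colon(\Omega,\mathscr G,\mu_P)\to(\tilde\Omega,\tilde{\mathscr G},P)$: the pushforward of $\mu_P$ is $P$ by \eqref{eq:mupcondition1}, and this map sends $\mathscr F$ into $\tilde{\mathscr F}$. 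One checks the defining equation of conditional expectation by testing against indicators $\mathds 1_{A}$, $A\in\mathscr F^F$, and using $\mu_P(A\times(\Omega^I)^{\mathbb N})=P(A\times\Omega^I)$. The right‑hand side is $\mathscr F$-measurable (it depends only on $\omega^F$), which is why $X^1$ could be replaced by any $X^j$ here.

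\medskip
\noindent\textbf{Step 2: take the outer expectation.} Now integrate Step~1 against $\mu_Q$. Because $E_{\mu_P}[X^1\mid\mathscr F]$ is $\mathscr F$-measurable and equals $E_P[\tilde X\mid\tilde{\mathscr F}]$ composed with the projection onto the $(F,1)$-coordinates, and because the law of $(\pi_{\Omega^F},\pi_{\Omega^I_1})$ under $\mu_Q$ is $Q$ (again \eqref{eq:mupcondition1}, applied to $Q$; recall $\mu_Q$ is well defined by \eqref{eq:disintegration} and lies in $\mathcal M_e(\mathbb F)$ by Proposition~\ref{prop:martingalemeasuretransformation}), the change‑of‑variables formula gives
$$
E_{\mu_Q}\bigl[E_{\mu_P}[X^1\mid\mathscr F]\bigr]
=E_{Q}\bigl[E_P[\tilde X\mid\tilde{\mathscr F}]\bigr].
$$
Here one should make sure the integrability needed for the identity is in place: $\tilde X\ge 0$ (as a discounted benefit), so everything is well defined in $[0,\infty]$ and no subtraction issues arise; if one wants finiteness one invokes the standing integrability hypotheses on $X^1$ (Assumption~\ref{assum:ConditionalExpectations}, $X^i\in L^2_+$). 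This proves the first identity.

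\medskip
\noindent\textbf{Step 3: the essential‑supremum identity.} For the second line, apply Step~1 for every $P\in\ccP$ simultaneously: the family $\bigl(E_{\mu_P}[X^1\mid\mathscr F]\bigr)_{P\in\ccP}$ is a family of $\mathscr F$-measurable random variables which, coordinate by coordinate, is the pullback under $(\pi_{\Omega^F},\pi_{\Omega^I_1})$ of $\bigl(E_P[\tilde X\mid\tilde{\mathscr F}]\bigr)_{P\in\ccP}$. Since all measures in $\mu_{\ccP}$ restrict to something equivalent to $\mu_{P_0}$ on $\mathscr F$ (shown right after \eqref{eq:disintegration}), Lemma~\ref{lem2.6} gives a uniform essential supremum $\esssupZ_{P\in\ccP}E_{\mu_P}[X^1\mid\mathscr F]$ which is $\mathscr F$-measurable, and the pullback identity lets one identify it with $\bigl(\esssupZ_{P\in\ccP}E_P[\tilde X\mid\tilde{\mathscr F}]\bigr)\circ(\pi_{\Omega^F},\pi_{\Omega^I_1})$ (essential suprema commute with pullback along a measure‑preserving map because they are realized as countable suprema, cf.\ \cite[Theorem A.37]{follmer2016stochastic}, and these countable suprema transport directly). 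Then the same change‑of‑variables step as in Step~2, now under $\mu_Q$ with pushforward $Q$, yields
$$
E_{\mu_Q}\Bigl[\esssupZ_{P\in\ccP}E_{\mu_P}[X^1\mid\mathscr F]\Bigr]
=E_{Q}\Bigl[\esssupZ_{P\in\ccP}E_P[\tilde X\mid\tilde{\mathscr F}]\Bigr].
$$

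\medskip
\noindent\textbf{Main obstacle.} The routine parts are the change‑of‑variables manipulations; the delicate point is Step~1, namely that the completed $\sigma$-algebra $\mathscr F$ in the product space really is generated (modulo nullsets) by the single projection $\pi_{\Omega^F}$, so that conditioning on $\mathscr F$ "is" conditioning on $\tilde{\mathscr F}$ transported through the map. This needs the precise definitions of $\mathscr F$ and of $\mu_P$ from \eqref{eq:disintegration} and the observation that, under $\mu_P$, $\pi_{\Omega^I_1}$ is conditionally independent of $\mathscr F$ given $\sigma(\pi_{\Omega^F})$ — which is exactly \eqref{eq:mupcondition2}. A secondary technical care point is that essential suprema over the possibly uncountable family $\ccP$ must be handled via the countable‑subfamily representation so that the pullback argument is legitimate; Lemma~\ref{lem2.6} is what guarantees the object is well posed across all of $\mu_{\ccP}$.
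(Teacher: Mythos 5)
Your proof is correct and follows essentially the same route as the paper's: establish the transport identity $E_{\mu_P}[X^1\mid\mathscr F]=E_P[\tilde X\mid\tilde{\mathscr F}]\circ(\pi_{\Omega^F},\pi_{\Omega^I_1})$ (which the paper obtains by the same testing‑against‑indicators argument used for $S$ in Proposition~\ref{prop:martingalemeasuretransformation}, building on Proposition~\ref{prop:conditionalexpectation}), then pass the outer expectation through via the pushforward $\mu_Q\mapsto Q$, and handle the essential supremum by the countable‑subfamily representation from \cite[Theorem A.37]{follmer2016stochastic} together with Lemma~\ref{lem2.6}. One small inaccuracy in your ``main obstacle'' paragraph: Step~1 does not actually require the conditional‑iid property \eqref{eq:mupcondition2}. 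Since $\mathscr F=\sigma(\pi_{\Omega^F})$ exactly (no completion needed) and $X^1$ is a measurable function of $(\pi_{\Omega^F},\pi_{\Omega^I_1})$, the conditional expectation $E_{\mu_P}[X^1\mid\mathscr F]$ depends only on the joint law of $(\pi_{\Omega^F},\pi_{\Omega^I_1})$, which is pinned down by \eqref{eq:mupcondition1} alone; the conditional independence across $j$ in \eqref{eq:mupcondition2} is needed elsewhere (to verify Assumption~\ref{assum:ConditionalExpectations}), not for this identity.
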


We emphasize that Theorem \ref{theorem:characterizationRobustInsuranceFinance} and Proposition \ref{prop:premiumchapter3} build a foundation of two-step evaluations from a new perspective. We shift the insurance benefit to an insurance-finance market such that the assumptions for Theorem \ref{theorem:characterizationRobustInsuranceFinance} are fulfilled and characterize the robust insurance-finance arbitrage-free prices therein. Then, Proposition \ref{prop:premiumchapter3} shows that the prices in the shifted insurance-finance market coincide with the two-step evaluation of the initial benefit.  

\section{Modeling of insurance-finance markets}\label{sec:GeneralizedSetting}
In this section we provide a model for an insurance-finance market and calculate the robust insurance-finance arbitrage-free premium by means of the $Q\ccP$-evaluation, cf.\@ Theorem \ref{theorem:characterizationRobustInsuranceFinance} and Definition \ref{def:RobustQPRule}.

As in Section \ref{section:FinancialMarket}, let $S^0\equiv 1$ be the bank account and denote by $S^1=(S_t^1)_{t=0,...,T}$ the discounted price process of a risky asset on $(\Omega, \ccF).$ We fix the $\ccF$-nullsets $\mathcal{N}$ generated by a probability measure $P_0 \in \mathfrak{P}(\Omega, \ccF)$. We assume that the filtration  $\mathbb F$ is generated by $S$.
Next, we introduce the $\mathbb N_0$-valued random variables $\tau^1$ and $\tau^2$ representing the time of death and the time of surrender of a policy holder, respectively. Let the $\sigma$-algebra $\mathscr G$ be given by
$\mathscr G=\mathscr F \vee \sigma(\tau_1)\vee\sigma(\tau_2)$
and the filtration $\mathbb G$ given by $\mathbb G=\mathbb F\vee\mathbb H$, where $\mathbb{H}$ is the filtration generated by the processes $(\textbf{1}_{\lbrace \tau^1 \leq t \rbrace})_{t=0,\dots,T}$ and $(\textbf{1}_{\lbrace \tau^2 \leq t \rbrace})_{t=0,\dots,T}$. Note that  $\tau^1$ and $\tau^2$ are $\mathbb G$-stopping times but, in general, they are not $\mathbb F$-stopping times. 

Given a parameter set $\Theta$, we introduce the law of the stopping times $(\tau^1,\tau^2)$ under the set of priors $\ccP_{\Theta}=(P^{\theta})_{\theta\in \Theta}\subseteq \mathfrak{P}_{\mathcal{N}}(\Omega, \ccG)$. In particular, we assume that for each $P^{\theta} \in \ccP_{\Theta}$ the conditional laws of $\tau^1$ and $\tau^2$ are given by
\begin{align} \label{eq:ConditionalProbability}
    P^{\theta} \left[ \tau^1 \leq t \vert \ccF \right] \coloneqq  F^{\ccF}_1(\theta, t) \quad \text{ and } \quad
    P^{\theta} \left[ \tau^2 \leq t \vert \ccF \right] \coloneqq F^{\ccF}_2(\theta, t) \quad \text{ for } t \in \mathbb{N}_0,
\end{align}
where we assume that for fixed $\theta\in \Theta$ the mappings $F^{\ccF}_1(\theta,\cdot)$ and $F^{\ccF}_2(\theta,\cdot)$ are $\mathscr F$-conditional distribution functions. 

\subsection{Modeling under conditional independence}\label{sec:conditionallyindependence}

We assume that under every $P^{\theta} \in \ccP_{\Theta}$ the random variables  $\tau^1$ and $\tau^2$ are $\ccF$-conditionally independent, i.e.,
\begin{align}\label{eq:conditionalindependenceoftau}
   P^{\theta} \left[ \tau^1 \leq s, \tau^2 \leq t \vert \ccF \right]\coloneqq F^{\ccF}_1(\theta, s)F^{\ccF}_2(\theta, t) \quad\text{for } s,t\in\mathbb N_0.
\end{align}
In order to determine the law of $(S, \tau^1,\tau^2)$ under $P^{\theta}$ it is now sufficient to fix a model for the restricted measures $P^{\theta}\vert_{\mathscr F}\sim P_0$ and use disintegration. Here, we could assume that $P^{\theta}\vert_{\mathscr F}=P_0$ for all $\theta\in\Theta$ because the $Q\ccP$-evaluation is invariant under the specific choice of the measures $\{P^{\theta}\vert_{\mathscr F}\vert \theta\in \Theta \}$ as the set of equivalent martingale measures $\cM_e(\mathbb F)$ only depends on the nullsets $\mathcal N$ generated by $P_0$.

We introduce the discounted survival benefit $X_{\text{survival}}$ and the discounted surrender benefit $X_{\text{surrender}}$ using 
\raggedbottom
\begin{align} \label{eq:x_survival}
X_{\text{survival}}&\coloneqq \textbf{1}_{\lbrace \tau^1 >T, \tau^2 >T\rbrace} Y^1 (S^0_T)^{-1}\\\label{eq:x_surrender}
X_{\text{surrender}}&\coloneqq \sum_{t=1}^{T-1} \textbf{1}_{\lbrace \tau^1>t,  \tau^2=t \rbrace}Y^2_t (S^0_t)^{-1},
\end{align}
where $Y^1$ is a  $\ccF$-measurable random variable and $Y^2:=(Y_t^2)_{t=0,...,T}$ is a $\mathbb{F}$-adapted process.
The insurance benefit $X$ is then given by
\begin{align}\label{eq:benefitX} 
X&\coloneqq  X_{\text{survival}}+ X_{\text{surrender}}.
\end{align}
An insurance seeker with such a policy receives the payment $Y^1$ at the maturity $T$ if he survives until $T$ and does not surrender before time $T$. If he surrenders at time $t<T$ he receives the payment $Y^2_t$.

\begin{example}\label{example:payoffspecification}
We define the process $V=(V_t)_{t=0,...,T}$ by 
\begin{align} 
\label{eq:DefinitionPayoffGeneralizedExample}
V_t:=K(t)+(S_t-K(t))^+ \quad \text{ with } \quad K(t):=(1+r_G)^t K
\end{align}
for $K \in \mathbb{R}_+$ and $r_G>-1$. Here, $r_G$ denotes the interest rate associated with a guarantee $K$. Furthermore, we assume that in case a policy holder surrenders the contract at time $t$, he will receive the payment $(1-l)V_t$ for $l \in [0,1]$, where $l$ denotes the penalty in form of a proportional deduction of the actual value. Thus, we set $Y^1\coloneqq V_T$ and $Y_t^2\coloneqq (1-l)V_t$ for $t=0,\dots,T-1$. 
\end{example}

\begin{example}\label{example:survival_example}
We consider the parameter set $\Theta:=B \times C \coloneqq [\underline{b},\overline{b}] \times [\underline{c},\overline{c}]  \subseteq  \mathbb{R}^2_{>0}$ and assume that for every $\theta=(b,c) \in \Theta$ the conditional distribution function of the time of death in \eqref{eq:ConditionalProbability} is given by the well-known Gompertz model, i.e.,
\begin{align} \label{eq:ConditionalDistributionDeath}
    F_1^{\ccF}(\theta,t)=P^{\theta} \left[ \tau^1 \leq t \vert \ccF \right]:=1-\exp \Big(-\sum_{s=0}^{t-1}b e^{cs}\Big).
\end{align}
In this example we assume that a policy holder is not allowed to surrender, which can be realized by setting $\tau_2:=+\infty$.

We set $S_t^0:=(1+r)^t$ for some risk-free rate $r>0$ and assume that $S^1$ follows a Cox-Ross-Rubinstein model (CRR model) (see Section 5.5 in \cite{follmer2016stochastic}). This means that the stock price at time $t=1,...,T$ is given by the higher value $S_t^1=S_{t-1}^1(1+u)$ for $t=1,...,T$ with probability $0<p<1$ and by the lower value $S_t^1=S_{t-1}^1(1+v)$ for $t=1,...,T$ with probability $1-p$, such that $-1 < v<u$ and $v<r<u$. Furthermore, we denote by $\mathbb{F}= (\ccF_t)_{t=0,...,T}$ the filtration on $(\Omega, \ccF)$ given by $\ccF_t:=\sigma(S^1_0,...,S^1_t)$ for $t=0,...,T$. Let $R_t:=S_t/S_{t-1}$ for $t=1,\dots,T$.  The unique equivalent martingale measure can be characterized by the measure $Q$ on $(\Omega,\mathscr F)$ such that $R_1,\dots,R_T$ are independent and 
\begin{align} \label{eq:CRRUniqueMartingaleMeasure}
    Q[R_t=1+u]=\frac{r-v}{u-v} \quad \text{ and } \quad Q[R_t=1+v]=\frac{u-r}{u-v}\quad\text{ for all }t=1,\dots,T. 
\end{align}
We apply the $Q \ccP_{\Theta}$-evaluation on $X=X_{\text{survival}}$ given by $\eqref{eq:x_survival}$ and Example \ref{example:payoffspecification} and get
\allowdisplaybreaks
\begin{align*}
    E_{Q \odot \ccP_{\Theta}}\left[ X\right] &= E_{Q} \Big[ \esssupN_{P^{\theta} \in \ccP_{\Theta}} E_{P^{\theta }}\left[\textbf{1}_{\lbrace \tau^1 >T\rbrace} V_T (1+r)^{-T} \vert \ccF \right] \Big]\\
    &= E_{Q} \Big[  V_T (1+r)^{-T} \esssupN_{\theta \in \Theta} (1-F_1^{\ccF}(\theta,T))  \Big] \\
     &=E_{Q} \Big[V_T (1+r)^{-T} \esssupN_{\theta \in \Theta} \exp\Big(-\sum_{s=0}^{T-1} be^{cs}\Big)  \Big]\\
    &=(1+r)^{-T} \exp\Big(-\sum_{s=0}^{T-1} \underline{b}e^{\underline{c}s}\Big) \left( K(T) + \mathbb{E}_{Q} \left[(S_T - K(T))^+ \right] \right),
\end{align*}
where the arbitrage-free price for $(S_T-K(T))^+$ is given by
\begin{align*}
   E_{Q}[(S_T-K(T))^+]= \sum_{k=0}^T \binom{T}{k} \left(\frac{r-v}{u-v}\right)^k \left( \frac{u-r}{u-v}\right)^{T-k} \!\!\! (S_0^1 (1+u)^k (1+v)^{T-k} - K(T))^+.
\end{align*}
In this example the map
$   (b,c)\mapsto E_{P^{\theta}}\left[ \textbf{1}_{\lbrace \tau^1 >T\rbrace} V_T (1+r)^{-T}\mathscr F\right] 
$
is strictly decreasing in $b$ and $c$. Thus, based on Remark \ref{rem:directedupwards}, it holds that
\begin{align}\label{eq:supequation}
    E_{Q \odot \ccP_{\Theta}}\left[ \textbf{1}_{\lbrace \tau^1 >T\rbrace} V_T (1+r)^{-T}\right]=\sup_{\theta\in \Theta}E_{Q \odot P^{\theta}}\left[ \textbf{1}_{\lbrace \tau^1 >T\rbrace} V_T (1+r)^{-T}\right].
\end{align}
In other words, the $\ccP_{\Theta}$-robust price equals the worst-case price of all possible models.
\end{example}
\begin{example}
We extend Example \ref{example:survival_example} and consider an insurance benefit which includes a surrender option. The parameter set $\Theta$ is now given by   $\Theta:=A \times B \times C \times D:=[\underline{a},\overline{a}] \times [\underline{b},\overline{b}] \times [\underline{c},\overline{c}] \times [\underline{d},\overline{d}] \subseteq \mathbb{R} \times \mathbb{R}^3_{>0}$ and for every $\theta=(a,b,c,d) \in \Theta$ the conditional distribution functions of $\tau^1$ and $\tau^2$ in \eqref{eq:ConditionalProbability} are given by \eqref{eq:ConditionalDistributionDeath}
and 
\raggedbottom
\begin{align} \label{eq:ConditionalDistributionSurrender}
    F_2^{\ccF}(\theta,t)=P^{\theta} \left[ \tau^2 \leq t \vert \ccF \right]:=1-\exp\Big(-\frac{1}{d} \sum_{s=0}^{t-1}(a-S_s)^2\Big).
\end{align}

The conditional probability to surrender before or at time $t$ in \eqref{eq:ConditionalDistributionSurrender} tends to one if $(a-S_s)^2$ increases. The intuition behind this, which is related to the definition of the insurance benefit $V$ in \eqref{eq:DefinitionPayoffGeneralizedExample}, is as  follows: if the value of the asset decreases, the value of the benefit $V$ will also decline. Therefore, the insurance seeker faces the risk of ending up with $K(t)$ and thus the probability of needing to surrender increases. Conversely, if the value of the asset increases, the value of $V$ increases as well, giving the insurance seeker  an incentive to surrender. Obviously, in both cases these considerations  further depend  on the penalty parameter $l$. In summary, it is more likely that the insurance seeker surrenders in case the value of the asset deviates too much from the level $a$.

In applications, we can choose, for example,  the parameter set $\Theta$  as the confidence intervals around the empirically observed values of $(a,b,c,d).$
The $Q\ccP_{\Theta}$-evaluation of the benefit $X$ given by \eqref{eq:benefitX} and Example \ref{example:payoffspecification} is then given by
 \begin{align}
    &E_{Q \odot \ccP_{\Theta}}[X] \nonumber \\
    &=E_{Q \odot \ccP_{\Theta}}\Big[ \textbf{1}_{\lbrace \tau^1 >T, \tau^2 >T\rbrace} V_T (1+r)^{-T} +\sum_{t=1}^{T-1} \textbf{1}_{\lbrace \tau^2 =t\rbrace}\textbf{1}_{\lbrace \tau^1>t\rbrace} (1-l)V_t(1+r)^{-t}\Big]  \nonumber \\
    &= E_{Q} \Big[ \esssupN_{P^{\theta} \in \ccP_{\Theta}} E_{P^{\theta }}\Big[\textbf{1}_{\lbrace \tau^1 >T, \tau^2 >T\rbrace} V_T (1+r)^{-T} +\sum_{t=1}^{T-1} \textbf{1}_{\lbrace \tau^2 =t\rbrace}\textbf{1}_{\lbrace \tau^1>t\rbrace} (1-l)V_t(1+r)^{-t}\big \vert \ccF\Big] \Big] \nonumber\\
    &= E_{Q} \Big[ \esssupN_{\theta \in \Theta} \Big(  (1-F_1^{\ccF}(\theta,T)) (1-F_2^{\ccF}(\theta,T))V_T (1+r)^{-T}\nonumber \\
    &\quad \quad + \sum_{t=1}^{T-1}  ( F_2^{\ccF}(\theta,t)-F_2^{\ccF}(\theta,t-1)) (1-F_1^{\ccF}(\theta,t))(1-l)V_t (1+r)^{-t}\Big) \Big]  \nonumber \\
     &= E_{Q} \Big[ \esssupN_{\theta \in \Theta} \Big( \exp\big(-\sum_{s=0}^{T-1} be^{cs}\big) \exp\big(-\frac{1}{d}\sum_{s=0}^{T-1} (a-S_s )^2\big)\frac{V_T}{(1+r)^{T}} \nonumber \\
    &\quad \quad + (1-l)\sum_{t=1}^{T-1} \Big( \exp\big(-\frac{1}{d}\sum_{s=0}^{t-2} (a-S_s )^2\big)-\exp\big(-\frac{1}{d}\sum_{s=0}^{t-1} (a-S_s )^2\big) \Big) \exp\big(-\sum_{s=0}^{t-1} be^{cs}\big)\frac{V_t}{(1+r)^t}\Big) \Big]  \nonumber \\
    &\eqqcolon{E}_{Q} \left[ \esssupN_{\theta \in \Theta} G(\omega, a,b,c,d) \right], \label{eq:NewCalculations2}
 \end{align}
For fixed $\omega \in \Omega$ the function $G_{\omega}:=G(\omega, \cdot):\Theta \to \mathbb{R}$ is decreasing in $b$ and $c$ and thus \eqref{eq:NewCalculations2} falls to 
 \begin{align} \label{eq:NewCalculations3}
     E_{Q \odot \ccP_{\Theta}}[X]={E}_{Q} \Big[ \esssupN_{(a,d)\in A\times D} G(\omega, a,\underline{b},\underline{c},d) \Big].
 \end{align}
\raggedbottom

Moreover, we want to compare the robust price in \eqref{eq:NewCalculations3} with the supremum over all classical $Q P^{\theta}$-evaluations for $\theta \in \Theta$, which is given by 
\begin{equation} \label{eq:SupremumOverNoRobustPrice}
\sup_{\theta \in \Theta} E_{Q \odot P^{\theta}} [X]=\sup_{\theta \in \Theta} E_{Q}\left[ E_{P^{\theta}}[X \vert \ccF] \right] = \sup_{(a,d) \in A \times D} {E}_{Q} \left[ G(\omega, a,\underline{b},\underline{c},d) \right].
\end{equation}

For the numerical evaluation we consider the time horizon $T=8$ and use the following parameters. In the CRR model we set $S_0^1=100,$ $r=0.05$, $u=0.1$, and $v=-0.1$. Furthermore, we fix the strike $K=100$, the penalty parameter $l=0.1$, and the guaranteed interest rate $r_G=0.01$. The parameter set $\Theta$ is given by $\Theta:=\left[ 50,340\right] \times \left[ 0.02,0.03\right] \times \left[ 0.01,0.05\right] \times \left[ 10^4, 10^5\right].$ To calculate the value of the robust $Q  \ccP_{\Theta}$-evaluation in \eqref{eq:NewCalculations3}, we maximize for each path $\omega \in \Omega$ the function $G_{\omega}$ over $\Theta$ by using the Nelder-Mead method. %
As already noted above, due to the monotonicity properties of $G_{\omega}$ with respect to $b$ and $c$, we only optimize over $(a,d) \in A \times D.$ While the maximum of $G_{\omega}$ for the parameter $d$ is numerically always attained at the upper boundary of $D$, the optimal value for $a$ varies. In particular, we get the following numerical results for $E_{Q \odot \ccP^{\Theta}}[X]$ and $\sup_{\theta \in \Theta} E_{Q \odot P^{\theta}} [X]$:
\begin{align}
    E_{Q \odot \ccP_{\Theta}}[X]=88.38 \quad \text{ and } \quad \sup_{\theta \in \Theta} E_{Q \odot P^{\theta}} [X]=87.61,
\end{align}
with the difference $\Delta:=E_{Q \odot \ccP_{\Theta}}[X]-\sup_{\theta \in \Theta} E_{Q \odot P^{\theta}} [X]$  given by $\Delta=0.67.$
These results indicate that the robust price, which reflects the model risk and which guarantees that the finance-insurance market is arbitrage-free, is strictly greater than the worst case price of all possible models. This is illustrated in Figure \nolinebreak \ref{fig1}.

	\begin{figure}[ht]
		\centering
		\includegraphics[width=0.6\textwidth]{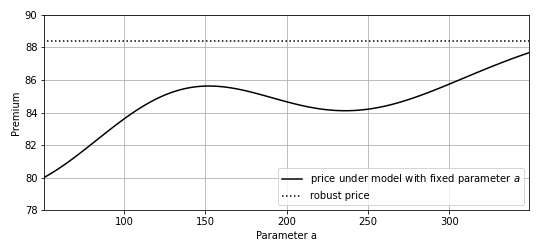}
		\caption{Asymptotic insurance-finance arbitrage-free premium for $a\in[50,350]$ and fixed parameters $\underline{b},$ $\underline{c}$ and  $\overline{d}$.}
		\label{fig1}
	\end{figure}
\end{example}

\subsection{Modelling conditional dependence with copulas}\label{sec:CoxModel}
In Section \ref{sec:conditionallyindependence} we assumed that the random times $\tau^1$ and $\tau^2$ are $\ccF$-conditionally independent, see Equation \eqref{eq:conditionalindependenceoftau}. We now drop this assumption and allow some dependence between $\tau^1$ and $\tau^2$ which is modelled by a family of copulas $(C_{\theta})_{\theta\in \Theta}$. 
We assume that under each $P^{\theta} \in \ccP_{\Theta}$ the law of $(\tau^1,\tau^2)$ is described by its marginal distribution functions $F_1^{\ccF}(\theta,\cdot)$ and $F_2^{\ccF}(\theta,\cdot)$ and the copula $C^{\theta}:[0,1]^2  \to [0,1] $ such that
\begin{equation} \label{eq:ConditionalDistributionCopulaGeneral}
    P^{\theta}[\tau^1 \leq s, \tau^2 \leq t \vert \ccF]= C^{\theta}  (F_1^{\ccF}(\theta,s), F_2^{\ccF}(\theta,t)) \quad \text{for } s,t \in \mathbb{N}_0.
\end{equation}
\raggedbottom
We introduce the following notation for the marginal conditional survival probabilities:
\begin{equation*}
\bar{F}_{1}^{\ccF}(\theta,t):=1-{F}_{1}^{\ccF}(\theta,t) \quad \text{ and } \quad  \bar{F}_{2}^{\ccF}(\theta,t):=1-{F}_{2}^{\ccF}(\theta,t) \quad \text{for } t \in \mathbb{N}_0.
\end{equation*}
The conditional survival function of $(\tau^1,\tau^2)$ is then given by
\begin{align} \label{eq:SurvivalConditionalFunction}
    P^{\theta}[\tau^1 > s, \tau^2 > t \vert \ccF]=\bar{F}_1^{\ccF}(\theta,s) +\bar{F}_2^{\ccF}(\theta,t)-1 + C^{\theta}(1-\bar{F}_1^{\ccF}(\theta,s),1- \bar{F}_2^{\ccF}(\theta,t)),
\end{align}
which can be described by $\hat{C}^{\theta}:\Theta\times[0,1]^2  \to [0,1] $ defined by
\begin{equation*}
    \hat{C}^{\theta}(u,v):= u+v-1 +C^{\theta}(1-u,1-v),
\end{equation*}
such that \eqref{eq:SurvivalConditionalFunction} can be rewritten in 
\begin{equation*}
    P^{\theta}[\tau^1 > s, \tau^2 > t \vert \ccF]=\hat{C}^{\theta}(\bar{F}_1^{\ccF}(\theta,s), \bar{F}_2^{\ccF}(\theta,t)).
\end{equation*}
Note that $\hat{C}^{\theta}$ is also a copula, cf.\@ \cite[Section 2.6]{nelsen2007introduction}. 
\begin{example}
A suitable class of copulas $C^{\theta}$ that depends on only one parameter is the class of Archimedean copulas (see\@ \cite[Section 4.2]{nelsen2007introduction} or \cite{schmidt2007coping} for further details and related literature). Most of these copulas can be represented by an explicit formula, which leads to a high tractability for applications. The copula $C^{\theta}(u^1,u^2)=C(u^1, u^2):=u^1 u^2$ reflects the setting of Section \ref{sec:conditionallyindependence}, cf.\@  Equation \eqref{eq:conditionalindependenceoftau}.
\end{example}
\raggedbottom
Let us consider the insurance benefit $X$ given in \eqref{eq:benefitX} and evaluate this benefit with the $Q\ccP_{\Theta}$-rule. For $Q \in \mathcal{M}_e(\mathbb{F})$ we get
\allowdisplaybreaks
\begin{align}
    &E_{Q \odot \ccP_{\Theta}}\left[ X\right] \nonumber  \\
    &=E_{Q \odot \ccP_{\Theta}}\Big[ \textbf{1}_{\lbrace \tau^1 >T, \tau^2 >T\rbrace} Y^1(1+r)^{-T}+\sum_{t=1}^{T-1} \textbf{1}_{\lbrace \tau^1 >t, \tau^2 =t\rbrace} Y^2_t(1+r)^{-t}\Big]  \nonumber \\
    &= E_{Q} \Big[ \esssupN_{P^{\theta} \in \ccP_{\Theta}} E_{P^{\theta}}\Big[\textbf{1}_{\lbrace \tau^1 >T, \tau^2 >T\rbrace} Y^1 (1+r)^{-T} + \sum_{t=1}^{T-1} \textbf{1}_{\lbrace \tau^1 >t, \tau^2 =t\rbrace} Y^2_t (1+r)^{-t} \big \vert \ccF\Big] \Big] \nonumber \\
    &= E_{Q} \Big[ \esssupN_{P^{\theta} \in \ccP_{\Theta}} \Big(Y^1 (1+r)^{-T} P^{\theta}[ \tau^1 >T, \tau^2 >T  \vert \ccF] + \sum_{t=1}^{T-1} Y_t^2 (1+r)^{-t} P^{\theta}[\tau^1 >t, \tau^2 =t \vert \ccF]\Big) \Big] \nonumber \\
    &= E_{Q} \Big[ \esssupN_{P^{\theta} \in \ccP_{\Theta}} \Big( Y^1 (1+r)^{-T} \hat{C}^{\theta}(\bar{F}_1^{\ccF}(\theta,T), \bar{F}_2^{\ccF}(\theta,T)) \nonumber \\
    & \quad +\sum_{t=1}^{T-1}Y_t^2  (1+r)^{-t} \left( \hat{C}^{\theta}\left( \bar{F}_1^{\ccF}(\theta,t), \bar{F}_2^{\ccF}(\theta,t-1) \right) - \hat{C}^{\theta}\left( \bar{F}_1^{\ccF}(\theta,t), \bar{F}_2^{\ccF}(\theta,t) \right) \right) \Big) \Big],\label{QP_Theta_valuation_dependence} 
\end{align}
where we use in \eqref{QP_Theta_valuation_dependence} that    
\begin{align*}
        P^{\theta}\left[ \tau^1 >t, \tau^2=t \vert \ccF\right]&= P^{\theta}\left[ \tau^1 >t, \tau^2>t-1 \vert \ccF\right]- P^{\theta}\left[ \tau^1 >t, \tau^2>t \vert \ccF\right] \\
        &=\hat{C}^{\theta}\left( \bar{F}_1^{\ccF}(\theta,t), \bar{F}_2^{\ccF}(\theta,t-1) \right) - \hat{C}^{\theta}\left( \bar{F}_1^{\ccF}(\theta,t), \bar{F}_2^{\ccF}(\theta,t) \right).
  \end{align*}

\begin{example} \label{example:CoxModel}
We briefly show that the Cox model fits in the framework introduced above. 
To do this, let $\Lambda^i=(\Lambda^i_t)_{t=0,...,T}$ with $\Lambda^i_0=0$ for $i=1,2$ be two increasing processes on $(\Omega, \ccF, P_0)$ and $E_1$ and $E_2$ two random variables on $(\Omega, \ccG).$ For the parameter set
$$
\Theta:=\Gamma^1 \times \Gamma^2 \times \Xi :=\big[ \underline{\gamma^1}, \overline{\gamma^1}\big] \times \big[ \underline{\gamma^2}, \overline{\gamma^2}\big] \times \big[ \underline{\xi}, \overline{\xi}\big] \subseteq \mathbb{R}^2_{>0} \times \mathbb{R},
$$
we choose $\ccP_{\Theta} \subseteq \mathfrak{P}_{\mathcal{N}}(\Omega, \ccG)$ such that for any $P^{\theta} \in \ccP_{\Theta}$ with $\theta=(\gamma^1,\gamma^2, \xi) \in \Theta$ it holds that
$$
E^1 \sim^{P^{\theta}} \exp(\gamma^1) \quad \text{ and }\quad E^2 \sim^{P^{\theta}} \exp(\gamma^2).
$$
Moreover, $E^1$ and $E^2$ are assumed to be conditionally independent of $\ccF$ under each $P_{\theta}\in\ccP_{\Theta}$. We define the default times $\tau^1$ and $\tau^2$ using
    \begin{align*}
        \tau^i:=\inf \lbrace t \in \mathbb{N}_0 \ \vert \ \Lambda_t^i \geq E^i\rbrace, \quad i=1,2,
    \end{align*}
    where we use the convention $\inf \emptyset:=\infty.$
In this case $F_1^{\ccF}(\theta,t)$ and $F_2^{\ccF}(\theta,t)$ are given by
$$
F_1^{\ccF}(\theta,t)= 1- e^{-\gamma^1 \Lambda_t^1} \quad \text{ and }\quad F_2^{\ccF}(\theta,t)= 1- e^{-\gamma^2 \Lambda_t^2} \quad \text{ for } t \in \mathbb{N}_0.
$$
\end{example}
\begin{remark}
Note that a Cox model with a single default time under uncertainty is studied in  \cite{biagini_zhang_2019}. In contrast to Example \ref{example:CoxModel}, it is assumed that that there is some uncertainty on the financial market, i.e., a set $\ccP \subseteq \mathfrak{P}(\Omega, \ccF)$ is considered, but the law of the random variable $E$ is fixed. 
\end{remark}

\section*{Conclusion}
In this paper we introduce a definition of robust asymptotic insurance-finance arbitrage and characterize the absence of arbitrage. Our main theorem provides an economic foundation for the pricing of finance-linked insurance products via $Q\ccP$-evaluations. Moreover, we show that valuation principles from a specific class of two-step evaluations can be written as $Q\ccP$-evaluation.  We also apply our results in such a way as to model an insurance-finance market, before  concluding with some numerical observations. 

\appendix

\section{Proofs of Section \ref{sec:TwoStepValuation}} \label{appendixProofs}
\begin{proposition}\label{prop:law}
The law of $(\pi_{\Omega^F},\pi_{\Omega^I_j})$ under $\mu_P$ equals $P$ for all $j\in\mathbb N$.
Moreover, given any $j\in\mathbb N$, the law of $(S,X^j)$ under $\mu_P$ equals the law of $(\tilde S,\tilde X)$ under $P$. 
\end{proposition}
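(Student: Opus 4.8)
The plan is to derive the second assertion from the first, and to establish the first by a standard $\pi$–$\lambda$ argument: I would show that the image measure $\mu_P\circ(\pi_{\Omega^F},\pi_{\Omega^I_j})^{-1}$ and $P$ are two probability measures on $(\tilde\Omega,\tilde{\mathscr G})$ which agree on the generating $\pi$-system of measurable rectangles, and hence agree everywhere.

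For the reduction, I would first record that $\tilde S_t$ is $\tilde{\mathscr F}_t$-measurable and therefore independent of the $\Omega^I$-coordinate, so that $S_t=\tilde S_t\circ(\pi_{\Omega^F},\pi_{\Omega^I_j})$ for \emph{every} $j$ (not just $j=1$), while $X^j=\tilde X\circ(\pi_{\Omega^F},\pi_{\Omega^I_j})$ by construction. Thus $(S,X^j)=(\tilde S,\tilde X)\circ(\pi_{\Omega^F},\pi_{\Omega^I_j})$ as $\mathscr G$-measurable maps on $\Omega$, and once the first assertion is in place, the image-measure theorem yields that the law of $(S,X^j)$ under $\mu_P$ is the pushforward under $(\tilde S,\tilde X)$ of the law of $(\pi_{\Omega^F},\pi_{\Omega^I_j})$ under $\mu_P$, i.e.\ the pushforward of $P$, which is exactly the law of $(\tilde S,\tilde X)$ under $P$.

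For the first assertion, I would note that $\mu_P$ is a probability measure — by \eqref{eq:disintegration} its total mass is $\int(P^{\tilde\pi_{\Omega^I}\vert\tilde{\mathscr F}})^{\otimes\mathbb N}((\Omega^I)^{\mathbb N})\,dP=1$ — and that $(\pi_{\Omega^F},\pi_{\Omega^I_j})$ is measurable into $\tilde{\mathscr G}$, so both measures to be compared are probability measures on $\tilde{\mathscr G}$. Since the rectangles $A\times C$ with $A\in\mathscr F^F$, $C\in\mathscr F^I$ form a $\pi$-system generating $\tilde{\mathscr G}=\mathscr F^F\otimes\mathscr F^I$, it suffices by Dynkin's lemma to verify the equality on such sets. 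Here $(\pi_{\Omega^F},\pi_{\Omega^I_j})^{-1}(A\times C)=A\times B$ with $B\in(\mathscr F^I)^{\otimes\mathbb N}$ the cylinder $\{(\omega^I_k)_{k}:\omega^I_j\in C\}$, and since the countable product measure $(P^{\tilde\pi_{\Omega^I}\vert\tilde{\mathscr F}}(\cdot,\tilde\omega))^{\otimes\mathbb N}$ evaluated on a cylinder depending only on the single coordinate $j$ equals $P^{\tilde\pi_{\Omega^I}\vert\tilde{\mathscr F}}(C,\tilde\omega)$, definition \eqref{eq:disintegration} reduces the claim to
\[
\int \mathds{1}_{\tilde\pi_{\Omega^F}^{-1}(A)}\,P^{\tilde\pi_{\Omega^I}\vert\tilde{\mathscr F}}(C,\cdot)\,dP=P(A\times C).
\]
Finally I would use that $\tilde\pi_{\Omega^F}^{-1}(A)=A\times\Omega^I\in\tilde{\mathscr F}$ and that $P^{\tilde\pi_{\Omega^I}\vert\tilde{\mathscr F}}(C,\cdot)$ is a version of $E_P[\mathds{1}_{\Omega^F\times C}\vert\tilde{\mathscr F}]$, so that the defining property of conditional expectation turns the left-hand side into $\int\mathds{1}_{A\times\Omega^I}\,\mathds{1}_{\Omega^F\times C}\,dP=P(A\times C)$.

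The only step with genuine content is the reduction of the infinite product measure on the cylinder $B$ to the single conditional probability $P^{\tilde\pi_{\Omega^I}\vert\tilde{\mathscr F}}(C,\cdot)$; this is precisely the one-coordinate special case of the monotone-class argument already invoked to see that $\mu_P$ is well defined, and it follows directly from the way a countable product of a probability kernel acts on cylinder sets. Everything else is bookkeeping of product $\sigma$-algebras together with one application of Dynkin's lemma, so I do not expect any obstacle beyond getting the cylinder/rectangle identifications right.
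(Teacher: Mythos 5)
Your proposal is correct and follows essentially the same route as the paper: reduce the second claim to the first via $(S,X^j)=(\tilde S,\tilde X)\circ(\pi_{\Omega^F},\pi_{\Omega^I_j})$, then verify the first on rectangles $A\times C$ by identifying the preimage as a one-coordinate cylinder, collapsing the product kernel to $P^{\tilde\pi_{\Omega^I}\vert\tilde{\mathscr F}}(C,\cdot)$, and invoking the defining property of conditional expectation. Your version is merely a bit more explicit about Dynkin's lemma and the well-definedness of $\mu_P$ as a probability measure, both of which the paper treats implicitly.
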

\begin{proof}
Let $A\in\mathscr F^F$ and $C\in \mathscr F^I$. Then it holds that 
\begin{align*}
   \mu_P^{(\pi_{\Omega^F},\pi_{\Omega^I_j})}(A\times C)&=\int\mathds{1}_{\{\tilde\pi_{\Omega^F}\in A\}}(P^{\tilde\pi_{\Omega^I}\vert \tilde{\mathscr F}})^{\otimes \mathbb N}(\Omega^I\times\cdots\times \Omega^I\times C\times\Omega^I\times\cdots)dP \\
   &=\int\mathds{1}_{\{\tilde\pi_{\Omega^F}\in A\}}P^{\tilde\pi_{\Omega^I}\vert \tilde{\mathscr F}}(C)dP\\ 
   &=P(A\times C)\quad \text{for all $j\in\mathbb N$, }
\end{align*}
where we use the definition of the conditional probability and the tower property in the third equality. 
The second statement then follows by
\begin{align*}
\mu_P^{(S,X^j)}&=\left(\mu_P^{(\pi_{\Omega^F},\pi_{\Omega^I_j})}\right)^{(\tilde S,\tilde X)}
    = P^{(\tilde S,\tilde X)}\quad \text{for all $j\in\mathbb N$. }\qedhere
\end{align*}
\end{proof}

\begin{proposition}\label{prop:conditionalexpectation}
Let $j\in\mathbb N$ and $\nu_P$ be a measure on $(\Omega,\mathscr G)$ which fulfills \eqref{eq:mupcondition1}, i.e.,
\begin{align*}
    \nu_P^{(\pi_{\Omega^F},\pi_{\Omega^I_j})}=P \quad\text{for all }j\in\mathbb N.
\end{align*}
Then it holds for all $A\in \mathscr F^I$ that
\begin{align}\label{eq:conditionalexpectationmup}
E_{\nu_P}[\mathds{1}_{\{\pi_{\Omega^I_j}\in A\}}\vert \mathscr F]=E_{P}[\mathds{1}_{\{\tilde\pi_{\Omega^I}\in A\}}\vert \tilde{\mathscr F}]\circ (\pi_{\Omega^F},\pi_{\Omega^I_\ell})\quad\text{ $\nu_P$-a.s.\@ for all $A\in\mathscr F^I$ and $\ell\in\mathbb N$.}
\end{align}
\end{proposition}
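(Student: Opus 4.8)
The plan is to check directly that the right-hand side of \eqref{eq:conditionalexpectationmup} is a version of $E_{\nu_P}[\mathds{1}_{\{\pi_{\Omega^I_j}\in A\}}\vert\mathscr F]$, i.e.\ that it is $\mathscr F$-measurable and that $E_{\nu_P}[\mathds{1}_F\,\mathds{1}_{\{\pi_{\Omega^I_j}\in A\}}]=E_{\nu_P}[\mathds{1}_F\,(E_P[\mathds{1}_{\{\tilde\pi_{\Omega^I}\in A\}}\vert\tilde{\mathscr F}]\circ(\pi_{\Omega^F},\pi_{\Omega^I_\ell}))]$ holds for every $F\in\mathscr F$. The structural fact that makes this painless is that, unwinding the definitions of the product $\sigma$-algebras, $\{\Omega^I,\emptyset\}^{\otimes\mathbb N}$ is the trivial $\sigma$-algebra $\{(\Omega^I)^{\mathbb N},\emptyset\}$, so $\mathscr F=\sigma(\pi_{\Omega^F})$ and likewise $\tilde{\mathscr F}=\sigma(\tilde\pi_{\Omega^F})$. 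Hence, by Doob--Dynkin, every $\tilde{\mathscr F}$-measurable function $\varphi$ on $\tilde\Omega$ is of the form $g\circ\tilde\pi_{\Omega^F}$ with $g$ an $\mathscr F^F$-measurable function on $\Omega^F$; since $\tilde\pi_{\Omega^F}\circ(\pi_{\Omega^F},\pi_{\Omega^I_\ell})=\pi_{\Omega^F}$, it follows that $\varphi\circ(\pi_{\Omega^F},\pi_{\Omega^I_\ell})=g\circ\pi_{\Omega^F}$ is $\mathscr F$-measurable and, in particular, does not depend on $\ell$ --- which is exactly why \eqref{eq:conditionalexpectationmup} can be asked to hold for all $\ell$ at once.

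For the integral identity it is enough, by uniqueness of finite measures agreeing on a generating $\pi$-system, to test against sets of the form $F=A'\times(\Omega^I)^{\mathbb N}$ with $A'\in\mathscr F^F$ (boundedness of the indicators and of $\varphi\in[0,1]$ guarantees finiteness). Observe that $\mathds{1}_{A'\times(\Omega^I)^{\mathbb N}}=\mathds{1}_{\tilde\pi_{\Omega^F}^{-1}(A')}\circ(\pi_{\Omega^F},\pi_{\Omega^I_m})$ for any index $m$. On the left, $\mathds{1}_F\,\mathds{1}_{\{\pi_{\Omega^I_j}\in A\}}=(\mathds{1}_{A'\times A})\circ(\pi_{\Omega^F},\pi_{\Omega^I_j})$, so by \eqref{eq:mupcondition1} its $\nu_P$-integral equals $\nu_P^{(\pi_{\Omega^F},\pi_{\Omega^I_j})}(A'\times A)=P(A'\times A)$. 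On the right, the integrand is $h\circ(\pi_{\Omega^F},\pi_{\Omega^I_\ell})$ with $h:=\mathds{1}_{\tilde\pi_{\Omega^F}^{-1}(A')}\cdot E_P[\mathds{1}_{\{\tilde\pi_{\Omega^I}\in A\}}\vert\tilde{\mathscr F}]$, so again by \eqref{eq:mupcondition1} its $\nu_P$-integral is $E_P[h]$; since $\mathds{1}_{\tilde\pi_{\Omega^F}^{-1}(A')}$ is $\tilde{\mathscr F}$-measurable, the defining property of the conditional expectation yields $E_P[h]=P(\tilde\pi_{\Omega^F}^{-1}(A')\cap\tilde\pi_{\Omega^I}^{-1}(A))=P(A'\times A)$. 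The two sides agree, hence they agree on all of $\mathscr F$.

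Combining the two steps shows that the right-hand side of \eqref{eq:conditionalexpectationmup} satisfies the defining properties of $E_{\nu_P}[\mathds{1}_{\{\pi_{\Omega^I_j}\in A\}}\vert\mathscr F]$, which proves the claim. There is no real analytic difficulty here; the only point requiring care is the bookkeeping: correctly identifying $\mathscr F$ and $\tilde{\mathscr F}$ as the pullbacks of $\mathscr F^F$ under $\pi_{\Omega^F}$ and $\tilde\pi_{\Omega^F}$, and rewriting every function on $\Omega$ as a composition of a function on $\tilde\Omega$ with the appropriate coordinate map $(\pi_{\Omega^F},\pi_{\Omega^I_j})$ (resp.\ $(\pi_{\Omega^F},\pi_{\Omega^I_\ell})$) so that \eqref{eq:mupcondition1} can be invoked. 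Everything else is a routine pushforward-and-tower-property computation, exactly as in the proof of Proposition \ref{prop:law}.
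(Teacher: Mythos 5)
Your argument is correct and follows essentially the same route as the paper's: both prove the identity by testing against $\mathscr F$-measurable indicators, pulling the $\nu_P$-integral back to a $P$-integral via \eqref{eq:mupcondition1}, and invoking the defining property of the conditional expectation on $(\tilde\Omega,\tilde{\mathscr G})$. Your first step (explicitly checking $\mathscr F$-measurability and $\ell$-independence of the right-hand side via Doob--Dynkin) spells out something the paper leaves implicit, and your appeal to uniqueness on a generating $\pi$-system is harmless but unnecessary since $\{A'\times(\Omega^I)^{\mathbb N}:A'\in\mathscr F^F\}$ is already all of $\mathscr F$.
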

\begin{proof}
Let $B\in\mathscr F$ and $A\in\mathscr F^I$. Then we have 
\begin{align*}
    \int\mathds{1}_B\mathds{1}_{\lbrace \pi_{  \Omega^I_j \in A }\rbrace}d\nu_P&=\int(\mathds{1}_{\pi_{\Omega^F}(B)}\circ \pi_{\Omega_F})(\mathds{1}_{A}\circ \pi_{\Omega^I_j})d\nu_P\\
    &=\int(\mathds{1}_{\pi_{\Omega^F}(B)}\circ \tilde\pi_{ \Omega_F})(\mathds{1}_{A}\circ \tilde \pi_{\Omega^I})dP\\
    &=\int(\mathds{1}_{\pi_{\Omega^F}(B)}\circ \tilde \pi_{ \Omega_F})E_{P}[(\mathds{1}_{A}\circ \tilde \pi_{\Omega^I})\vert\tilde{\mathscr F}]dP\\
    &=\int\mathds{1}_B E_{P}[(\mathds{1}_{A}\circ \tilde \pi_{\Omega^I})\vert\tilde{\mathscr F}]\circ ((\pi_{\Omega^F},\pi_{\Omega^I_\ell})) d\nu_P,
\end{align*}
where we use \eqref{eq:mupcondition1} in the second equality and $\tilde{\pi}^{-1}_{\Omega_F}(\pi_{\Omega^F}(F))\in \tilde{\mathscr F}$ in the third equality.
\end{proof}
\begin{proposition}\label{prop:conditionalindependence}
The projections $(\pi_{\Omega^I_j})_{j\in\mathbb N}$ are $\mathscr F$-conditionally \emph{iid} under $\mu_P$ and consequently also the insurance benefits $(X^j)_{j\in\mathbb N}$ are $\mathscr F$-conditionally \emph{iid} under $\mu_P$.
\end{proposition}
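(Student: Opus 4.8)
The plan is to trace both assertions back to the product structure of the kernel $(P^{\tilde\pi_{\Omega^I}\vert \tilde{\mathscr F}})^{\otimes\mathbb N}$ appearing in the disintegration formula \eqref{eq:disintegration}, combined with the conditional-expectation identity from Proposition~\ref{prop:conditionalexpectation}. I would first record the elementary but useful fact that $\mathscr F=\mathscr F^F\otimes\{\Omega^I,\emptyset\}^{\otimes\mathbb N}=\pi_{\Omega^F}^{-1}(\mathscr F^F)$, so that $\mathcal C:=\{\pi_{\Omega^F}^{-1}(A)\mid A\in\mathscr F^F\}$ is a $\pi$-system generating $\mathscr F$; hence any claimed $\mathscr F$-conditional expectation may be verified by integrating against the sets in $\mathcal C$ alone.

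Next, fix distinct indices $j_1,\dots,j_m\in\mathbb N$ and sets $A_{j_1},\dots,A_{j_m}\in\mathscr F^I$, and let $B=\{\omega\mid\omega^I_{j_k}\in A_{j_k},\ k=1,\dots,m\}\in(\mathscr F^I)^{\otimes\mathbb N}$ be the corresponding finite cylinder. For such a $B$ the product kernel factorises, $(P^{\tilde\pi_{\Omega^I}\vert \tilde{\mathscr F}})^{\otimes\mathbb N}(B)=\prod_{k=1}^m P^{\tilde\pi_{\Omega^I}\vert \tilde{\mathscr F}}(A_{j_k})$, and this is a $\tilde{\mathscr F}$-measurable function. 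Substituting into \eqref{eq:disintegration} gives, for every $A\in\mathscr F^F$,
\[
\mu_P\Big(\pi_{\Omega^F}^{-1}(A)\cap\bigcap_{k=1}^m\pi_{\Omega^I_{j_k}}^{-1}(A_{j_k})\Big)=\int\mathds{1}_{\tilde\pi_{\Omega^F}^{-1}(A)}\prod_{k=1}^m P^{\tilde\pi_{\Omega^I}\vert \tilde{\mathscr F}}(A_{j_k})\,dP.
\]
Because the integrand on the right is $\tilde{\mathscr F}$-measurable, a further application of \eqref{eq:disintegration} (equivalently, Proposition~\ref{prop:law}) rewrites the right-hand side as $\int_{\pi_{\Omega^F}^{-1}(A)}\big(\prod_{k=1}^m P^{\tilde\pi_{\Omega^I}\vert \tilde{\mathscr F}}(A_{j_k})\big)\circ(\pi_{\Omega^F},\pi_{\Omega^I_\ell})\,d\mu_P$ for an arbitrary $\ell\in\mathbb N$. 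Since the pulled-back integrand is $\mathscr F$-measurable and $\mathcal C$ generates $\mathscr F$, this identifies
\[
E_{\mu_P}\Big[\prod_{k=1}^m\mathds{1}_{\{\pi_{\Omega^I_{j_k}}\in A_{j_k}\}}\,\Big\vert\,\mathscr F\Big]=\Big(\prod_{k=1}^m P^{\tilde\pi_{\Omega^I}\vert \tilde{\mathscr F}}(A_{j_k})\Big)\circ(\pi_{\Omega^F},\pi_{\Omega^I_\ell})\qquad\mu_P\text{-a.s.}
\]

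By Proposition~\ref{prop:law} the measure $\mu_P$ satisfies \eqref{eq:mupcondition1}, so Proposition~\ref{prop:conditionalexpectation} yields $E_{\mu_P}[\mathds{1}_{\{\pi_{\Omega^I_{j_k}}\in A_{j_k}\}}\vert\mathscr F]=P^{\tilde\pi_{\Omega^I}\vert \tilde{\mathscr F}}(A_{j_k})\circ(\pi_{\Omega^F},\pi_{\Omega^I_\ell})$ for each $k$; taking the product over $k$ reproduces exactly the right-hand side of the last display. This proves the $\mathscr F$-conditional independence of $(\pi_{\Omega^I_j})_{j\in\mathbb N}$. Moreover, each single factor $P^{\tilde\pi_{\Omega^I}\vert \tilde{\mathscr F}}(A)\circ(\pi_{\Omega^F},\pi_{\Omega^I_\ell})$ is a $\tilde{\mathscr F}$-measurable function composed with a projection, hence depends neither on $\ell$ nor on which index is inserted; consequently the conditional laws $\mu_P(\pi_{\Omega^I_j}\in\cdot\,\vert\,\mathscr F)$ agree for all $j$, so the projections are $\mathscr F$-conditionally iid.

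Finally, for the benefits I would note that $X^j=\tilde X\circ(\pi_{\Omega^F},\pi_{\Omega^I_j})$ with $\tilde X$ being $\tilde{\mathscr G}=\tilde{\mathscr F}\vee\sigma(\tilde\pi_{\Omega^I})$-measurable, whence $\sigma(X^j)\subseteq\mathscr F\vee\sigma(\pi_{\Omega^I_j})$. Conditional independence given $\mathscr F$ is inherited by sub-$\sigma$-algebras, which gives the $\mathscr F$-conditional independence of $(X^j)_{j\in\mathbb N}$; and identical conditional distributions follow from the usual monotone-class extension of Proposition~\ref{prop:conditionalexpectation} from indicators $\mathds{1}_{\{\tilde\pi_{\Omega^I}\in A\}}$ to bounded $\tilde{\mathscr G}$-measurable functions, applied to $f(\tilde X)$, yielding $E_{\mu_P}[f(X^j)\vert\mathscr F]=E_P[f(\tilde X)\vert\tilde{\mathscr F}]\circ(\pi_{\Omega^F},\pi_{\Omega^I_\ell})$ independently of $j$. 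I expect the main obstacle to be the bookkeeping in the middle step — checking that $\{\pi_{\Omega^F}^{-1}(A)\}_{A\in\mathscr F^F}$ is a generating $\pi$-system for $\mathscr F=\mathscr F^F\otimes\{\Omega^I,\emptyset\}^{\otimes\mathbb N}$, and that a $dP$-integral of a $\tilde{\mathscr F}$-measurable integrand over $\tilde\pi_{\Omega^F}^{-1}(A)$ really does convert into a $d\mu_P$-integral of its pull-back over $\pi_{\Omega^F}^{-1}(A)$ — together with the routine monotone-class passages from cylinders and indicators to the general statements; none of this is deep, but it must be written out with some care.
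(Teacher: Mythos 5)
Your proposal is correct and follows essentially the same route as the paper's proof: factorize the product kernel $(P^{\tilde\pi_{\Omega^I}\vert\tilde{\mathscr F}})^{\otimes\mathbb N}$ over finite cylinders, substitute into the disintegration formula, and identify the individual factors via Proposition~\ref{prop:conditionalexpectation}. The paper tests directly against arbitrary $B\in\mathscr F$ rather than a generating $\pi$-system, and leaves the ``identically distributed'' and ``consequently $(X^j)$ are iid'' parts implicit, which you spell out — minor differences of exposition, not of method.
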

\begin{proof}
We  must show that for every finite subset $J\subset \mathbb N$ and $A_j\in \mathscr F^I$ for all $j\in J$ that
\begin{align*}
    E_{\mu_P}\Big[\prod_{j\in J}\mathds{1}_{\{\pi_{\Omega^I_j}\in A_j\}}\Big\vert \mathscr F\Big] = \prod_{j\in J} E_{\mu_P}\Big[\mathds{1}_{\{\pi_{\Omega^I_j}\in A_j\}}\Big\vert \mathscr F\Big].
\end{align*}
This follows because for every $B\in\mathscr F$, it holds that
\begin{align*}
    E_{\mu_P}\Big[\mathds{1}_B\prod_{j\in J}\mathds{1}_{\{\pi_{\Omega^I_j}\in A_j\}}\Big]&=E_{P}\Big[\mathds{1}_{\tilde\pi^{-1}_{\Omega^F}(\pi_{\Omega_F}(B))}\prod_{j\in J}P^{\tilde\pi_{\Omega^I}\vert\tilde{\mathscr F}}(A_j)\Big]\\
    &=E_{P}\Big[\mathds{1}_{\tilde\pi^{-1}_{\Omega^F}(\pi_{\Omega_F}(B))}\prod_{j\in J}E_{P}[\mathds{1}_{\{\tilde\pi_{\Omega^I}\in A_j\}}\vert \tilde{\mathscr F}]\Big]\\
    &=E_{\mu_P}\Big[\mathds{1}_B\prod_{j\in J}E_{P}[\mathds{1}_{\{\tilde\pi_{\Omega^I}\in A_j\}}\vert \tilde{\mathscr F}]]\circ (\pi_{\Omega^F},\pi_{\Omega^I_1})\Big]\\
    &=\int\mathds{1}_B\prod_{j\in J}E_{\mu_p}[\mathds{1}_{\{\pi_{\Omega^I_j}\in A_j\}}\vert \mathscr F]d\mu_P,
\end{align*}
where we use the definition of $\mu_P$ given in \eqref{eq:disintegration} in the first equality and Proposition \ref{prop:conditionalexpectation} in the fourth equality. 
\end{proof}

\begin{proposition} \label{eq:ProofUniqueness}
The measure $\mu_P$ defined by \eqref{eq:disintegration} is the unique measure on $(\Omega,\mathscr G)$ which fulfills \eqref{eq:mupcondition1} and \eqref{eq:mupcondition2}.
\end{proposition}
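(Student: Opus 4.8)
Existence is already in hand: Proposition~\ref{prop:law} shows that $\mu_P$ satisfies \eqref{eq:mupcondition1}, and Proposition~\ref{prop:conditionalindependence} shows that it satisfies \eqref{eq:mupcondition2}, so the only thing left to prove is uniqueness. The plan is to let $\nu_P$ be an arbitrary probability measure on $(\Omega,\mathscr G)$ fulfilling \eqref{eq:mupcondition1} and \eqref{eq:mupcondition2}, and to deduce $\nu_P=\mu_P$ from the uniqueness theorem for probability measures: it suffices to check that $\nu_P$ and $\mu_P$ agree on a $\pi$-system generating $\mathscr G=\mathscr F^F\otimes(\mathscr F^I)^{\otimes\mathbb N}$. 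The natural choice is the family $\mathcal C$ of finite-dimensional cylinders
$$
C=\{\omega\in\Omega:\ \omega^F\in A,\ \omega^I_1\in B_1,\dots,\omega^I_n\in B_n\},\qquad n\in\mathbb N,\ A\in\mathscr F^F,\ B_1,\dots,B_n\in\mathscr F^I,
$$
which contains $\Omega$, is stable under finite intersections (after padding the shorter cylinder with copies of $\Omega^I$), and generates $\mathscr G$.

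\textbf{Key computation.} Fixing such a $C$, I would write $C=\{\pi_{\Omega^F}\in A\}\cap\bigcap_{j=1}^n\{\pi_{\Omega^I_j}\in B_j\}$ and, noting $\{\pi_{\Omega^F}\in A\}\in\mathscr F$, condition on $\mathscr F$ and invoke the conditional independence \eqref{eq:mupcondition2}:
\begin{align*}
\nu_P(C)&=E_{\nu_P}\Big[\mathds{1}_{\{\pi_{\Omega^F}\in A\}}\,E_{\nu_P}\Big[\textstyle\prod_{j=1}^n\mathds{1}_{\{\pi_{\Omega^I_j}\in B_j\}}\,\Big\vert\,\mathscr F\Big]\Big]\\
&=E_{\nu_P}\Big[\mathds{1}_{\{\pi_{\Omega^F}\in A\}}\,\textstyle\prod_{j=1}^nE_{\nu_P}\big[\mathds{1}_{\{\pi_{\Omega^I_j}\in B_j\}}\,\big\vert\,\mathscr F\big]\Big].
\end{align*}
Since $\nu_P$ satisfies \eqref{eq:mupcondition1}, Proposition~\ref{prop:conditionalexpectation} identifies each conditional expectation as $g_{B_j}:=E_P[\mathds{1}_{\{\tilde\pi_{\Omega^I}\in B_j\}}\vert\tilde{\mathscr F}]\circ(\pi_{\Omega^F},\pi_{\Omega^I_1})$, and because $E_P[\cdot\vert\tilde{\mathscr F}]$ is $\tilde{\mathscr F}$-measurable with $\tilde{\mathscr F}=\mathscr F^F\otimes\{\Omega^I,\emptyset\}$, the random variable $g_{B_j}$ depends only on the coordinate $\pi_{\Omega^F}$, say $g_{B_j}=\bar g_{B_j}\circ\pi_{\Omega^F}$. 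Hence
$$
\nu_P(C)=E_{\nu_P}\Big[\mathds{1}_{\{\pi_{\Omega^F}\in A\}}\,\textstyle\prod_{j=1}^n\bar g_{B_j}\circ\pi_{\Omega^F}\Big]=\int_{\Omega^F}\mathds{1}_A\,\textstyle\prod_{j=1}^n\bar g_{B_j}\ d\big(\nu_P\circ\pi_{\Omega^F}^{-1}\big),
$$
and the right-hand side depends on $\nu_P$ only through the law $\nu_P\circ\pi_{\Omega^F}^{-1}$ of $\pi_{\Omega^F}$. By \eqref{eq:mupcondition1} (further composed with the projection $\Omega^F\times\Omega^I\to\Omega^F$) this law equals $P\circ\tilde\pi_{\Omega^F}^{-1}$, i.e. exactly the quantity that appears when the identical computation is run for $\mu_P$. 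Therefore $\nu_P(C)=\mu_P(C)$.

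\textbf{Conclusion and expected difficulty.} Since $\nu_P$ and $\mu_P$ then agree on the generating $\pi$-system $\mathcal C$, Dynkin's lemma gives $\nu_P=\mu_P$ on $\mathscr G$. The one point that will need a little care is the observation that the conditional expectations furnished by Proposition~\ref{prop:conditionalexpectation} are functions of $\pi_{\Omega^F}$ alone, which is precisely what collapses the cylinder probabilities to integrals against the single law $\nu_P\circ\pi_{\Omega^F}^{-1}$; past that the argument is pure bookkeeping. I do not anticipate a genuine obstacle, as the substantive work was already carried out in Propositions~\ref{prop:conditionalexpectation} and \ref{prop:conditionalindependence} and this statement merely assembles them.
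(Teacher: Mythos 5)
Your proposal is correct and follows essentially the same route as the paper's proof: both establish uniqueness by verifying agreement of $\nu_P$ and $\mu_P$ on the $\pi$-system of finite cylinders, both condition on $\mathscr F$ and factor the product via \eqref{eq:mupcondition2}, both then invoke Proposition~\ref{prop:conditionalexpectation} to pin down the $\mathscr F$-conditional expectations, and both finally use \eqref{eq:mupcondition1} to reduce to an expectation under $P$. The only cosmetic difference is that you explicitly note each conditional expectation factors through $\pi_{\Omega^F}$ and collapse to an integral against $\nu_P\circ\pi_{\Omega^F}^{-1}$, whereas the paper keeps the pair $(\pi_{\Omega^F},\pi_{\Omega^I_1})$ and pushes forward to $P$ directly; these are the same argument.
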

\begin{proof}
Using Proposition \ref{prop:law} and Proposition \ref{prop:conditionalindependence} the measure $\mu_P$ fulfills \eqref{eq:mupcondition1} and \eqref{eq:mupcondition2}. Moreover, let $\nu_P$ be a measure on $(\Omega,\mathscr G)$ which also satisfies \eqref{eq:mupcondition1} and \eqref{eq:mupcondition2} (where $\mu_P$ is replaced by $\nu_P$).  Let $J\subset \mathbb N$ be a finite subset and $A\in \mathscr F^F$ and $C_j\in\mathscr F^I$ for all $j\in J$. Then for $D$ given by
\begin{align}\label{eq:nstablegeneratorsets}
    D=\pi^{-1}_{\Omega^F}(A)\cap \bigcap_{j\in J}\pi^{-1}_{\Omega^I_j}(C_j)
\end{align}
we get the following:
\begin{align*}
    \nu_P(D)&=E_{\nu_P}\Big[\mathds{1}_{\pi^{-1}_{\Omega^F}(A)}E_{\nu_P}\Big[\prod_{j\in J}\mathds{1}_{\pi_{\Omega_j^I}\in C_j}\vert \mathscr F\Big]\Big]\\
    &=E_{\nu_P}\Big[\mathds{1}_{\pi^{-1}_{\Omega^F}(A)}\prod_{j\in J}E_{\nu_P}\Big[\mathds{1}_{\pi_{\Omega_j^I}\in C_j}\vert \mathscr F\Big]\Big]\\
    &=E_{\nu_P}\Big[\mathds{1}_{\pi^{-1}_{\Omega^F}(A)}\prod_{j\in J}E_{P}\Big[\mathds{1}_{\tilde\pi_{\Omega^I}\in C_j}\vert \tilde{\mathscr F}\Big]\circ (\pi_{\Omega^F},\pi_{\Omega^I_1})\Big]\\
    &=E_{P}\Big[\mathds{1}_{\tilde\pi^{-1}_{\Omega^F}(A)}\prod_{j\in J}E_{P}\Big[\mathds{1}_{\tilde\pi_{\Omega^I}\in C_j}\vert \tilde{\mathscr F}\Big]\Big],\\
\end{align*}
where we use \eqref{eq:mupcondition2} in the third equality and Proposition \ref{prop:conditionalexpectation} in the fourth equality. The same calculations can be done for $\mu_P$. The sets in \eqref{eq:nstablegeneratorsets} form a $\cap$-stable generator of $\mathscr G$ and thus we obtain $\mu_P=\nu_P$. This demonstrates the uniqueness.   
\end{proof}
\begin{proposition}\label{prop:martingalemeasure}
The set $\mathcal M_e(\mathbb F)$ of all measures on $(\Omega,\mathscr F)$ such that $S$ is a $\mathbb F$-martingale and which are equivalent to $\mu_{P_0}$ 
is given by
\begin{align*}
    \mathcal M_e(\mathbb F)=\{\mu_{ Q}\vert  Q\in\mathcal M_e(\tilde{\mathbb F})\}.
\end{align*}
\end{proposition}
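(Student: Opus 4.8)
The plan is to collapse everything onto the underlying financial coordinate space $\Omega^F$, using that the process $S$ and its filtration $\mathbb F$ (and likewise $\tilde S$, $\tilde{\mathbb F}$) live only on the $\Omega^F$-factor. First I would record the structural facts. Since $\tilde S_t$ is $\tilde{\mathscr F}_t$-measurable and $\tilde{\mathscr F}_t=\tilde\pi_{\Omega^F}^{-1}(\mathscr F_t^F)$, the factorization (Doob--Dynkin) lemma yields a unique $\mathscr F_t^F$-measurable map $S_t^0\colon\Omega^F\to\mathbb R^{d+1}$ with $\tilde S_t=S_t^0\circ\tilde\pi_{\Omega^F}$, and then by the definition of $S$ we get $S_t=S_t^0\circ\pi_{\Omega^F}$; moreover $\mathscr F=\pi_{\Omega^F}^{-1}(\mathscr F^F)$, $\tilde{\mathscr F}=\tilde\pi_{\Omega^F}^{-1}(\mathscr F^F)$, $\mathscr F_t=\pi_{\Omega^F}^{-1}(\mathscr F_t^F)$ and $\tilde{\mathscr F}_t=\tilde\pi_{\Omega^F}^{-1}(\mathscr F_t^F)$. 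Because $\pi_{\Omega^F}$ and $\tilde\pi_{\Omega^F}$ are surjective, pushing forward along them gives bijections between probability measures on $(\Omega,\mathscr F)$ (resp. $(\tilde\Omega,\tilde{\mathscr F})$) and probability measures on $(\Omega^F,\mathscr F^F)$; these bijections respect absolute continuity and equivalence, and since $\mathbb F$ is the $\pi_{\Omega^F}$-pullback of $(\mathscr F_t^F)_{t\le T}$ and $S=S^0\circ\pi_{\Omega^F}$, conditional expectations (hence integrability and the martingale property) transport along them: $S$ is an $(R,\mathbb F)$-martingale iff $S^0$ is an $(R\circ\pi_{\Omega^F}^{-1},(\mathscr F_t^F))$-martingale, and analogously $\tilde S$ is a $(Q,\tilde{\mathbb F})$-martingale iff $S^0$ is a $(Q\circ\tilde\pi_{\Omega^F}^{-1},(\mathscr F_t^F))$-martingale.

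Next I would note the compatibility of the construction $P\mapsto\mu_P$ with these identifications: evaluating \eqref{eq:disintegration} on sets of the form $A\times(\Omega^I)^{\mathbb N}=\pi_{\Omega^F}^{-1}(A)$ (equivalently, taking $B=(\Omega^I)^{\mathbb N}$, as already remarked before Proposition \ref{prop:martingalemeasure}) gives $\mu_Q\circ\pi_{\Omega^F}^{-1}=Q\circ\tilde\pi_{\Omega^F}^{-1}$ for every $Q$ on $(\tilde\Omega,\tilde{\mathscr F})$. In particular $\mu_{P_0}\circ\pi_{\Omega^F}^{-1}=P_0\circ\tilde\pi_{\Omega^F}^{-1}\eqqcolon P_0^0$, so under the bijections of the first paragraph the measures $\mu_Q$ and $Q$ correspond to the very same measure $Q^0\coloneqq Q\circ\tilde\pi_{\Omega^F}^{-1}$ on $\Omega^F$.

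With these two ingredients both inclusions follow. For $\supseteq$: if $Q\in\mathcal M_e(\tilde{\mathbb F})$ then $Q\sim P_0$, hence $Q^0\sim P_0^0$, hence $\mu_Q\sim\mu_{P_0}$; and $\tilde S$ being a $(Q,\tilde{\mathbb F})$-martingale forces $S^0$ to be a $(Q^0,(\mathscr F_t^F))$-martingale, hence $S$ is a $(\mu_Q,\mathbb F)$-martingale, so $\mu_Q\in\mathcal M_e(\mathbb F)$. For $\subseteq$: given $R\in\mathcal M_e(\mathbb F)$, set $R^0\coloneqq R\circ\pi_{\Omega^F}^{-1}$ on $(\Omega^F,\mathscr F^F)$ and let $Q$ be the unique measure on $(\tilde\Omega,\tilde{\mathscr F})$ with $Q\circ\tilde\pi_{\Omega^F}^{-1}=R^0$. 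Then $R\sim\mu_{P_0}$ gives $R^0\sim P_0^0$ and hence $Q\sim P_0$, and $S$ being an $(R,\mathbb F)$-martingale gives that $S^0$, and therefore $\tilde S$, is a martingale under the corresponding measures, so $Q\in\mathcal M_e(\tilde{\mathbb F})$; finally $\mu_Q\circ\pi_{\Omega^F}^{-1}=Q^0=R^0=R\circ\pi_{\Omega^F}^{-1}$, and since a measure on $(\Omega,\mathscr F)$ is determined by its $\pi_{\Omega^F}$-pushforward we conclude $\mu_Q=R$.

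I expect the main obstacle to be one of care rather than depth: the crux is to justify cleanly that $(\Omega,\mathscr F,\mathbb F,S)$ and $(\tilde\Omega,\tilde{\mathscr F},\tilde{\mathbb F},\tilde S)$ can both be identified with the common object $(\Omega^F,\mathscr F^F,(\mathscr F_t^F)_{t\le T},S^0)$ — in particular that $\tilde S$ really does factor as $S^0\circ\tilde\pi_{\Omega^F}$, that the filtrations are exactly the pullbacks of $(\mathscr F_t^F)$, and that equivalence of measures and conditional expectations (hence the martingale property) genuinely transfer along surjective pushforwards — together with the observation that $\mu_Q$ restricted to $\mathscr F$ reduces to $Q$ restricted to $\tilde{\mathscr F}$, which is precisely the $B=(\Omega^I)^{\mathbb N}$ case of the disintegration formula \eqref{eq:disintegration}.
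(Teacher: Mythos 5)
Your proof is correct and essentially the same as the paper's: both transport the filtration, the price process, and the martingale property along the projection structure down to the financial coordinate and observe that $\mu_{Q}$ and $Q$ induce the same measure there. The paper phrases this as a direct conditional-expectation transfer $E_{\mu_{Q}}[S_t\,\vert\,\mathscr F_s]=E_{Q}[\tilde S_t\,\vert\,\tilde{\mathscr F}_s]\circ(\pi_{\Omega^F},\pi_{\Omega^I_1})$ and handles the reverse inclusion by ``changing the roles'' of the projections, whereas you make the common base $(\Omega^F,\mathscr F^F,(\mathscr F^F_t)_{t\le T},S^0)$ explicit via Doob--Dynkin and a bijection of measure spaces; the underlying mechanism is identical, and your version simply spells out the step the paper leaves to symmetry.
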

\begin{proof}
We show that for all $t,s \in \lbrace 0,...,T\rbrace$ it holds that
\begin{align} \label{eq:MartingaleMeasuresProof1}
 E_{\mu_{Q}}[S_t \vert \ccF_s]= E_{Q}[\tilde{S}_t \vert \tilde{\ccF}_s] \circ (\pi_{\Omega^F}, \pi_{\Omega_1^I}).  
 \end{align}
 Let $B_s \in \ccF_s,$ then we have
 \begin{align}\label{eq:MartingaleMeasuresProof2}
 \begin{split}
      E_{\mu_{Q}}[\mathds{1}_{B_s}S_t]&=  E_{\mu_{Q}}\left[\mathds{1}_{B_s} \tilde{S}_t \circ (\pi_{\Omega^F}, \pi_{\Omega_1^I})\right] \\
 &=E_{Q}\left[\mathds{1}_{\tilde{\pi}^{-1}_{\Omega^F}(\pi_{\Omega^F}(B_s))} \tilde{S}_t\right] \\
  &=E_{Q}\left[\mathds{1}_{\tilde{\pi}^{-1}_{\Omega^F}(\pi_{\Omega^F}(B_s))} E_{Q}[\tilde{S}_t\vert \tilde{\ccF}_s]\right] \\
  &=E_{\mu_{Q}} \left[ \mathds{1}_{B_s} E_{Q}[\tilde{S}_t \vert \tilde{\ccF}_s] \circ (\pi_{\Omega^F}, \pi_{\Omega_1^I})\right] 
 \end{split}
 \end{align}
 Thus, it follows \eqref{eq:MartingaleMeasuresProof1} and so we can conclude that $Q \in \mathcal{M}_e(\tilde{\mathbb{F}})$ implies $\mu_{Q} \in \mathcal{M}_e(\mathbb{F}).$ The equivalence of $\mu_{P_0}$ and $\mu_{ Q}$ follows from the equivalence of $P_0$ and $ Q$.  For the other inclusion, let $R \in \mathcal{M}_e(\mathbb{F})$. We must now  show that there exists $Q \in \mathcal{M}_e(\tilde{\mathbb{F}})$ such that $R= \mu_{Q}$. Define $Q$ by
 $$
 Q:=R^{\pi_{\Omega^F}}(\tilde{\pi}_{\Omega^F}(\cdot)). 
 $$
 Then, by changing the roles of $\pi_{\Omega^F}$ and $\tilde{\pi}_{\Omega^F}$ the result follows using \eqref{eq:MartingaleMeasuresProof2}.
 \end{proof}
\begin{proposition}\label{prop:premium}
For any $P \in \ccP$ and $ Q\in \mathcal M_e(\tilde{\mathbb F})$ we have the following:
\begin{align*}
E_{ Q}\Big[E_{P}[\tilde X\vert \tilde{\mathscr F}] \Big]&=E_{\mu_{ Q}}[E_{\mu_P}[X^1\vert\mathscr F] ],\\
E_{ Q}\Big[\esssupN_{P \in \ccP} E_{P}[\tilde X\vert \tilde{\mathscr F}] \Big]&=E_{\mu_{ Q}}\left[\esssupN_{P \in \ccP}E_{\mu_P}[X^1\vert\mathscr F] \right].
\end{align*}
\end{proposition}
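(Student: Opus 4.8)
The plan is to reduce both identities to the change-of-variables formula for pushforward measures, using the two structural facts already established: Proposition~\ref{prop:law} (the law of $(\pi_{\Omega^F},\pi_{\Omega^I_j})$ under $\mu_P$ equals $P$) and Proposition~\ref{prop:martingalemeasure} (so that $\mu_Q\in\mathcal M_e(\mathbb F)$ whenever $Q\in\mathcal M_e(\tilde{\mathbb F})$, which makes the right-hand sides meaningful). For the first identity the key observation is that the conditional expectation transforms correctly: combining Proposition~\ref{prop:conditionalexpectation} applied with $\ell=1$ (extended from indicators $\mathds 1_{\{\tilde\pi_{\Omega^I}\in A\}}$ to the general integrable random variable $\tilde X$ by a standard monotone-class / monotone-convergence argument) gives
\begin{align}\label{eq:plan-ce}
    E_{\mu_P}[X^1\mid\mathscr F] = E_P[\tilde X\mid\tilde{\mathscr F}]\circ(\pi_{\Omega^F},\pi_{\Omega^I_1})\quad \mu_P\text{-a.s.}
\end{align}
Since $E_P[\tilde X\mid\tilde{\mathscr F}]$ is $\tilde{\mathscr F}$-measurable, hence a function of $\tilde\omega^F$ only, the right-hand side of \eqref{eq:plan-ce} is in fact $\mathscr F$-measurable and depends only on $\omega^F$; call it $g\circ\pi_{\Omega^F}$. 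Then $E_{\mu_Q}[E_{\mu_P}[X^1\mid\mathscr F]] = E_{\mu_Q}[g\circ\pi_{\Omega^F}] = E_Q[g\circ\tilde\pi_{\Omega^F}]$, where the last step uses that $\mu_Q\vert_{\mathscr F}$ pushes forward to $Q\vert_{\tilde{\mathscr F}}$ under $\pi_{\Omega^F}\mapsto\tilde\pi_{\Omega^F}$ (a consequence of Proposition~\ref{prop:law} / the construction of $\mu_Q$), and $E_Q[g\circ\tilde\pi_{\Omega^F}]=E_Q[E_P[\tilde X\mid\tilde{\mathscr F}]]$ by definition of $g$. This establishes the first line.

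For the second identity, the only additional ingredient is that the uniform essential supremum commutes with this pullback. By Lemma~\ref{lem2.6} the object $\esssup_{P\in\ccP}E_P[\tilde X\mid\tilde{\mathscr F}]$ is computed with respect to $P_0$ on $(\tilde\Omega,\tilde{\mathscr G})$ and equals, by the countable-subfamily construction in \cite[Theorem A.37]{follmer2016stochastic}, $\sup_{P\in\ccP^*}E_P[\tilde X\mid\tilde{\mathscr F}]$ for some countable $\ccP^*\subset\ccP$; likewise $\esssup_{P\in\ccP}E_{\mu_P}[X^1\mid\mathscr F]$ is computed w.r.t.\ $\mu_{P_0}$. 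Applying \eqref{eq:plan-ce} termwise to the countable family and taking the pointwise supremum, together with the fact that $\mu_{P_0}\vert_{\mathscr F}$ is the pushforward of $P_0\vert_{\tilde{\mathscr F}}$, shows that
\begin{align*}
    \esssupN_{P\in\ccP}E_{\mu_P}[X^1\mid\mathscr F] = \Big(\esssupN_{P\in\ccP}E_P[\tilde X\mid\tilde{\mathscr F}]\Big)\circ(\pi_{\Omega^F},\pi_{\Omega^I_1})\quad\mu_{P_0}\text{-a.s.,}
\end{align*}
and since both the set $\ccP$ and $\mu_Q$ share the nullsets $\mathcal N$, this a.s.\ identity is valid $\mu_Q$-a.s.\ as well. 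Now the same pushforward computation as in the first part (the right-hand side above is again $\mathscr F$-measurable, a function of $\omega^F$) yields $E_{\mu_Q}[\esssup_{P\in\ccP}E_{\mu_P}[X^1\mid\mathscr F]] = E_Q[\esssup_{P\in\ccP}E_P[\tilde X\mid\tilde{\mathscr F}]]$.

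The main obstacle I anticipate is bookkeeping rather than conceptual depth: Proposition~\ref{prop:conditionalexpectation} is stated only for indicator functions $\mathds 1_{\{\tilde\pi_{\Omega^I}\in A\}}$, so one must carefully extend \eqref{eq:plan-ce} to the bounded (or nonnegative integrable) benefit $\tilde X$ via linearity and monotone convergence, and then track that the various a.s.\ statements — which hold under $P$, $P_0$, $\mu_P$, $\mu_{P_0}$ respectively — can all be transported to $\mu_Q$ (resp.\ $Q$) using the equivalence of nullsets guaranteed by $\ccP\subseteq\mathfrak P_{\mathcal N}$ and $Q,\mu_Q\in\mathcal M_e$. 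A secondary point worth stating explicitly is why the right-hand side of \eqref{eq:plan-ce}, a priori only $\mathscr G$-measurable, is genuinely $\mathscr F$-measurable: this is exactly because $\tilde S_t$, and hence $E_P[\tilde X\mid\tilde{\mathscr F}]$, does not depend on the second coordinate $\tilde\omega^I$, so composing with $(\pi_{\Omega^F},\pi_{\Omega^I_1})$ in fact only uses $\pi_{\Omega^F}$.
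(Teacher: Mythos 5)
Your proposal is correct and follows essentially the same route as the paper. The core step — the identity $E_{\mu_P}[X^1\mid\mathscr F]=E_P[\tilde X\mid\tilde{\mathscr F}]\circ(\pi_{\Omega^F},\pi_{\Omega^I_1})$, established by a conditional-expectation pushforward argument, followed by the change-of-variables under $\mu_Q\leftrightarrow Q$ — is exactly what the paper does; the only cosmetic difference is that the paper cites the analogous computation in the proof of Proposition~\ref{prop:martingalemeasuretransformation}, whereas you derive it from Proposition~\ref{prop:conditionalexpectation} plus a monotone-class extension from indicators, and you spell out the essential-supremum transport (countable subfamily, nullset equivalence) which the paper compresses into ``follows analogously.''
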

\begin{proof}
By using the same arguments as in the proof of Proposition \ref{prop:martingalemeasuretransformation}, we find that
\begin{align*}
    E_{\mu_P}[X^1\vert \mathscr F]= E_P[\tilde X\vert \tilde{\mathscr F}]\circ (\pi_{\Omega^F},\pi_{\Omega^I_1}).
\end{align*}
This implies that 
\begin{align*}
    E_{\mu_{ Q}}\Big[E_{\mu_P}[X^1\vert \mathscr F]\Big]&=E_{\mu_{ Q}}\Big[E_P[\tilde X\vert \tilde{\mathscr F}]\circ (\pi_{\Omega^F},\pi_{\Omega^I_1})\Big]\\
    &=E_{ Q}\Big[E_{P}[\tilde X\vert \tilde{\mathscr F}] \Big].
\end{align*}
The second statement follows analogously. 
\end{proof}
\bibliographystyle{agsm}
\bibliography{sample}

\end{document}